\documentclass[11pt]{article}

\usepackage[margin=1in]{geometry}
\usepackage{amsmath,amssymb,amsthm,bm}
\usepackage{mathtools,bbm}
\usepackage{comment,booktabs,multirow}
\usepackage{enumitem}
\usepackage{graphicx,xcolor}
\usepackage{subfigure,float}
\usepackage{parskip}
\newtheorem{theorem}{Theorem}
\newtheorem{corollary}{Corollary}
\newtheorem{lemma}{Lemma}

\usepackage[size=tiny]{todonotes}

\newcommand{\Pp}{\mathbb P}
\newcommand{\E}[1]{\mathbb E\!\left[#1\right]}
\newcommand{\Var}[1]{\operatorname{Var}\!\left(#1\right)}
\newcommand{\ind}{\mathbf 1}
\newcommand{\Bernoulli}{\operatorname{Bernoulli}}
\newcommand{\Binomial}{\operatorname{Binomial}}

\allowdisplaybreaks

\usepackage[showdeletions]{color-edits}
\addauthor[Weina]{ww}{purple}
\usepackage[bookmarks,colorlinks=true,plainpages=false,
            citecolor=red,
            linkcolor=blue,
            anchorcolor=red,
            urlcolor=blue]{hyperref}

\begin{document}

\title{Approaching Optimal Scaling of Queue Size for an Input-Queued Switch}
\author{
Taha Ameen and Weina Wang
\thanks{
        The authors are with the Computer Science Department, Carnegie Mellon University, Pittsburgh, PA 15213, USA. E-mails: \{\texttt{tameen@cmu.edu, weinaw@cs.cmu.edu}\} 
    }
}
\date{}
\maketitle

\begin{abstract}
    We study the scaling of the expected total queue length in an $n\times n$ input-queued switch. 
    It has been conjectured that the optimal scaling is $O(n/(1-\rho))$, up to logarithmic factors, where $\rho$ is the system load.
    This conjecture has been established in the traditional heavy-traffic regime and in certain sub-regimes of the large-system regime.
    But it remains unresolved over a wide range of load scalings in the large-system regime.
    In this paper, we develop a family of scheduling policies parameterized by a positive integer $k\ge 2$.
    We show that under uniform traffic, our policy achieves an expected total queue length of $O\bigl({n\log\max\{n,1/(1-\rho)\}}\big/{(1-\rho)^{2^k/(2^k-1)}}\bigr)$, as long as $\rho\to 1$ in the large-system regime.
    As $k$ increases, the exponent $2^k/(2^k-1)$ approaches $1$, bringing our bound arbitrarily close to the conjectured scaling.
    The expected amortized scheduling cost of the policy is $O(n^3 k)$ per time slot.
    Our policy is of a batching type, with a novel recursive design that allows an arbitrary number of batching levels.
    This design principle may also be useful for other scheduling problems in which combinatorial constraints restrict which jobs can be served simultaneously.
\end{abstract}

\noindent\textbf{Keywords:} input-queued switch, queue-size scaling, batching, scheduling, graph factors

\noindent\textbf{AI disclosure:} The ideas in this paper are the authors' own, including the $k$-level batching approach and the complete proof for the case $k=3$, which contains all the essential ingredients of the argument. In that proof, AI assistance was used only for routine steps. The extension to general $k$ (Appendix~\ref{sec:extension}) follows the same argument and is primarily technical. GPT 6 Astra was used to write the formal proof in Appendix~\ref{sec:extension}. For this, the authors provided a proof outline along with instructions for the steps should generalize. The authors have verified the extension and take full responsibility for the correctness of all results in the paper.

\setcounter{tocdepth}{2}
\tableofcontents

\section{Introduction}
\addtocontents{toc}{\protect\setcounter{tocdepth}{1}}

The input-queued switch is a canonical model of scheduling under matching constraints. It originated as an architecture for high-speed routers~\cite{mckeown1999throughput}, and has since become a central example of switched networks and stochastic processing networks, which also model wireless networks and other systems with interacting resources~\cite{harrison2000brownian, shah2014optimal}. 
An $n\times n$ input-queued switch has $n$ input ports and $n$ output ports, and each job must travel from a specific input to a specific output. In each time slot, every port can handle at most one job, so the jobs served in a slot form a matching between inputs and outputs. Serving one job thus rules out every other job that shares its input or its output, and the scheduler must coordinate service across all ports at once.
The capacity region of this system and throughput-optimal scheduling policies such as MaxWeight have long been known~\cite{tassiulas1992stability, mckeown1999throughput}.
However, the optimal scaling of the expected total queue length, which captures the delay performance, is much less understood.

A conjecture put forward in \cite{shah2011optimal} is that the optimal scaling of the expected total queue length is $O(n/(1-\rho))$, up to logarithmic factors, where $\rho$ is the system load.
This conjecture is expected to hold both in the traditional heavy-traffic regime, where $n$ is fixed and $\rho\to 1$, and in the large-system regime, where $n\to\infty$ and $\rho$ may scale with $n$.
In this paper, we sometimes omit the wording of up to logarithmic factors for brevity in informal discussions.

This conjectured scaling can be understood from several perspectives.
One interpretation is based on the number of resources in the system.
Although an $n\times n$ switch has $n^2$ queues, it has only $2n$ resources, corresponding to the input and output ports.
The conjecture suggests that, under a well-designed scheduling policy, the scaling of the expected total queue length is linear in the number of resources, rather than in the number of queues.
A similar phenomenon has been observed in bandwidth-sharing networks, where the scaling is linear in the number of links (resources), rather than the number of routes (queues) (see, e.g., \cite{KanKelLee_09,YeYao_16,WanMagSri_22}).
An important implication that becomes clearer from this interpretation is that the mean job delay is nearly independent of the switch size.
In particular, under uniform traffic, the conjecture implies that the mean job delay scales as $O(1/(1-\rho))$, up to logarithmic factors, by Little's law and the fact that the total arrival rate is $n\rho$.

Another interpretation that we find useful is from the perspective of the cost imposed by the combinatorial matching constraint.
To see this, imagine an idealized system in which all jobs arriving at input port $i$ are placed in a single queue, and each input port has the capacity of serving one job per time slot, independent of other input ports.
Under uniform traffic, each such queue then has an expected length of $O(1/(1-\rho))$, yielding an expected total queue length of $O(n/(1-\rho))$, which matches the conjectured scaling.
This is in line with the classical thinking that queueing can be viewed as a cost of the capacity constraints.
However, in the switch system, there is also a combinatorial matching constraint in addition to the capacity constraints.
This can prevent some input ports from serving a job in a time slot even when they have backlogged jobs, which is what makes the policy design challenging.
The conjecture suggests that there exists a policy design that overcomes this challenge, and achieves the same queue-length scaling as the idealized system with only capacity constraints.

This conjecture has motivated an extensive line of research over the years \cite{shah2014optimal,maguluri2016heavy,shah2016queue,xu2020improved,weng2022algorithm,jhunjhunwala2022lowcomplexity}.
In the traditional heavy-traffic regime, the conjectured scaling has been established first by \cite{shah2014optimal} using a policy that emulates an insensitive bandwidth sharing system, later by \cite{maguluri2016heavy} for the classical MaxWeight policy, and then also by \cite{jhunjhunwala2022lowcomplexity} for several low-complexity policies.

In the large-system regime, however, the conjecture has only been resolved at the two ends of the load range.
One end of the range is when the load $\rho$ is fixed.
If $\rho<1/2$, maximal scheduling already gives a scaling of $O(n)$~\cite{neely2008maximal}.
For any fixed $\rho<1$, the batching policy proposed in \cite{neely2007logarithmic} and further analyzed in \cite{shah2011optimal} achieves a scaling of $O(n\log n/(1-\rho)^2)$, which matches the conjectured scaling up to logarithmic factors.
At the other end, when $1/(1-\rho)\ge n^2$, the policy of \cite{shah2014optimal} achieves the conjectured scaling $O(n/(1-\rho))$.
Despite further progress in~\cite{shah2016queue,xu2020improved,weng2022algorithm}, the intermediate regime remains unresolved.
In particular, when $1/(1-\rho)\to\infty$ and $1/(1-\rho)=o(n^2)$, no existing policy is known to achieve the conjectured scaling.
This regime is arguably harder since although $\rho\to 1$, the heavy-traffic effect is less pronounced, making it harder to overcome the effect of the combinatorial matching constraint.
We summarize the best previously known scalings across different traffic regimes in the table below.
\begin{table}[ht]
\centering
\begin{tabular}{lcccc}
\toprule
Regime & $\frac{1}{1-\rho}\le n^{3/4}$ & $n^{3/4}\le \frac{1}{1-\rho}\le n^{7/4}$ & $n^{7/4}\le \frac{1}{1-\rho}\le n^{2}$ & $\frac{1}{1-\rho}\ge n^{2}$\\
\midrule
Scaling & $O\left(\frac{n\log\max\{n,1/(1-\rho)\}}{(1-\rho)^{4/3}}\right)$ & $O\left(\frac{n^{5/4}\log\max\{n,1/(1-\rho)\}}{1-\rho}\right)$ & $O(n^3)$ & $O\bigl(\frac{n}{1-\rho}\bigr)$\\[6pt]
Reference & \cite{xu2020improved} & \cite{weng2022algorithm} & \cite{shah2014optimal}  & \cite{shah2014optimal}\\  
\bottomrule
\end{tabular}
\end{table}

\subsection{Our result}
In this paper, we make progress towards resolving the conjecture in the large-system regime.
We develop a family of scheduling policies, parameterized by a positive integer $k$, that achieves the following scaling of the expected total queue length under uniform traffic whenever $1/(1-\rho)\to\infty$~\footnote{
Our result can be used to obtain a scaling bound for a fixed $\rho<1$.
Specifically, we can add virtual jobs to inflate the load, say, to $\rho'$ such that $1/(1-\rho')=\log\log n$.
Then applying our result to this system gives the scaling $O(n(\log n)(\log\log n)^2)$.}:
\[
O\left(\frac{n\log\max\{n,1/(1-\rho)\}}{(1-\rho)^{2^k/(2^k-1)}}\right).
\]
In this bound, the exponent of the denominator $(1-\rho)$ is $4/3$ when $k=2$, $8/7$ when $k=3$, and approaches $1$ as $k\to\infty$.
Therefore, for any fixed $\delta>0$, we can choose a sufficiently large $k$ to obtain a scaling of
\[
O\left(
\frac{n\log\max\{n,1/(1-\rho)\}}
{(1-\rho)^{1+\delta}}
\right).
\]
Thus, our result comes arbitrarily close to the conjectured scaling.

\subsection{Policy design}

Our policy design builds on the batching policy \cite{neely2007logarithmic,shah2011optimal}.
The main idea of the batching policy is to collect job arrivals over a window of $T$ time slots and serve them only after the window ends.
Once the window ends, the policy decomposes the collected jobs into as few matchings as possible to serve them.
This way, the policy avoids making scheduling decisions shortsightedly.
Instead, it waits until enough jobs have accumulated so it can form matchings that serve many jobs simultaneously, thus making better use of the resources.
To handle stochastic fluctuations and maintain stability, the window length $T$ needs to be at least on the order of $\Theta(\log n/(1-\rho)^2)$.
This means that most jobs must wait for at least $\Theta(\log n/(1-\rho)^2)$ time slots, rather than the conjectured $O(1/(1-\rho))$.
Since this window length becomes the bottleneck of the performance, subsequent papers~\cite{shah2016queue,xu2020improved,weng2022algorithm} have mainly refined the batching policy by starting service earlier within the window, achieving improved queue-length scalings.

\paragraph{Our policy.}
Our policy is also of a batching type.
Compared to existing batching-type policies, the main novelty lies in our \emph{recursive} design that allows an arbitrary number of batching levels.
A formal description of the policy is provided in Section~\ref{sec:policy-formal}.
Below we describe the main high-level ideas of the policy design.

Roughly speaking, our level-$1$ batching is similar to the original batching policy, but we shorten the batching window $T$ and we serve only perfect matchings.
With this shorter window, level-$1$ batching alone cannot serve all job arrivals at a sufficient rate to maintain stability.
We then introduce level-$2$ batching, which groups multiple level-$1$ windows into a larger block.
We collect the jobs left unserved at level $1$ and serve them at time slots dedicated to level-$2$ serving.
By pooling these jobs, the policy can form matchings more efficiently and serve the jobs with better resource utilization.
We can recursively extend this approach to $k$ levels of batching for any $k\ge 2$.
Jobs served at higher levels experience longer delays, but the fraction of such jobs becomes progressively smaller.
Overall, our design guarantees that the larger $k$ is, the better queue-length scaling we achieve.

Regarding the computational complexity, the $k$-level policy can be implemented with $O(n^3 k)$ expected amortized scheduling cost, per time slot.

\paragraph{Technical challenge.}
A key technical challenge in our analysis is to guarantee that, at every level except the last, a sufficient number of perfect matchings can be formed to serve the jobs.
At level~$1$, this is guaranteed by the randomness of job arrivals.
At a higher level, however, the jobs consist of those left unserved at the preceding level, and their distribution may not be sufficiently balanced to form enough perfect matchings.
To address this difficulty, we introduce a mechanism in our policy that randomly assigns a small fraction of arriving jobs directly to levels $2,3,\dots,k-1$.
We prove that this additional randomness, together with certain structural properties of the residual jobs, guarantees the existence of the required number of perfect matchings at each level.

\section{Problem setup}\label{sec:model}
\addtocontents{toc}{\protect\setcounter{tocdepth}{2}}

Consider an $n\times n$ input-queued switch that operates in discrete time $t\in \{0,1,2,\cdots\}$. Each input--output pair $(i,j)$, where $1 \leq i,j \leq n$, has its own queue. Let $Q_{ij}(t)$ be the length of this queue at the beginning of slot $t$, and let $A_{ij}(t)$ be the number of jobs that arrive to it during that slot. We assume
\[
    A_{ij}(t)\sim\Bernoulli(\rho/n),\qquad 0<\rho<1,
\]
independently across queues and slots. Each job requires one slot of service and leaves immediately after service. In each slot, service is subject to the switch constraint, i.e., each input port can transmit at most one job, and each output port can receive at most one job.

A schedule $\bm\sigma(t)\in\{0,1\}^{n\times n}$ specifies the queues selected for service in slot $t$, with $\sigma_{ij}(t)=1$ if queue $(i,j)$ is selected. The switch constraint above means that a feasible schedule is a matching between inputs and outputs, i.e., it satisfies:
\begin{align}\label{eq:matching-constraints}
    \sum_{j=1}^n\sigma_{ij}(t) &\leq 1 \, ,
    \quad i = 1, 2, \cdots, n \, ,
    \\
    \sum_{i=1}^n\sigma_{ij}(t) &\leq 1 \, ,
    \quad j = 1, 2, \cdots, n \, . 
\end{align}
We use the convention that within each slot, the scheduler first observes the queue lengths, then new jobs arrive, and finally the selected jobs are served. Hence, the queue evolves as
\begin{align}\label{eq:queue-dynamics}
    Q_{ij}(t+1)=
    \big(
        Q_{ij}(t)+A_{ij}(t)-\sigma_{ij}(t)
    \big)^+,
\end{align}
where $x^+=\max\{x,0\}$. We start with $Q_{ij}(0)=0$ for all $i,j$ and denote the total queue size of the switch at the beginning of time slot $t$ as
\[
    Q_\Sigma(t)=\sum_{i,j=1}^n Q_{ij}(t).
\]
We are interested in bounding $\mathbb{E} \big[ Q_\Sigma(t) \big]$.

\paragraph{Scaling regime.}
We consider the scaling regime where $n \to \infty$ and $\rho \to 1$ jointly.
We do not put any constraints on how fast $\rho\to 1$ as $n\to \infty$.
For convenience, let $\varepsilon=1-\rho$ denote the excess capacity. 
Then we have $\varepsilon\to 0$ as $n\to\infty$, i.e., $\varepsilon=o(1)$.

\paragraph{Multi-graph representation.}
It is convenient to view a set of jobs as a bipartite multi-graph. Its two sides are the sets $\mathcal I$ and $\mathcal O$ of input and output ports, each of size $n$, and each job in queue $(i,j)$ is an edge between $i\in\mathcal I$ and $j\in\mathcal O$. Distinct jobs in the same queue are represented as parallel edges. In this view, a feasible schedule corresponds to a matching of the graph. 

\paragraph{General notation.} 
Standard Bachmann–Landau notation ($O(\cdot), o(\cdot), \Theta(\cdot),\dots$) is used throughout, and it is implicit that $n \to \infty$. 
For an event $E$, we use $\ind_E$ for the indicator of $E$. For a graph $G$, we use $\deg_G(v)$ to denote the degree of a vertex $v$, and we use $E(G)$ to denote the set of edges of the graph. We also use $e_G(X,Y)$ to denote the number of edges between $X\subseteq\mathcal I$ and $Y\subseteq\mathcal O$, counted with multiplicity. An \emph{$s$-factor} of $G$ is a spanning subgraph in which every port has degree equal to $s$.

\section{Main result and scheduling policy}\label{sec:result}

Our main result concerns the \emph{$k$-level batching policy}, which we describe in Section~\ref{sec:policy-formal}. The policy groups time slots into nested blocks over $k$ levels. At each of the first $k-1$ levels, it serves an $s$-factor of each batch using $s$ perfect matchings, where $s$ depends on the level, and passes the remaining jobs to the next level. At the last level, it splits the jobs that reach it into matchings and serves them in first-come, first-served order. The following theorem captures its performance.

\begin{theorem}\label{thm:main}
    Fix an integer $k \geq 2$. Starting from an empty switch, the $k$-level batching policy satisfies
    \[
        \sup_{t\geq 0} \, 
        \mathbb{E} \big[ Q_\Sigma(t) \big]
        = O\left(
            \frac{n\log(n/\epsilon)}{\varepsilon^{2^k/(2^k-1)}}
        \right)
    \]
    for all sufficiently large $n$, where the constant in $O(\cdot)$ depends only on $k$.
\end{theorem}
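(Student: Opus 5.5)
We need to design the parameters ourselves, since the formal policy (Section~\ref{sec:policy-formal}) appears only after the statement. The plan is a recursive accounting. We choose window lengths $T_1 < T_2 < \dots < T_k$, each block at level $\ell$ consisting of $T_{\ell}/T_{\ell-1}$ level-$(\ell-1)$ blocks. We choose factor sizes $s_\ell$ and the fractions $p_\ell$ of arrivals routed directly to level $\ell$. Then we bound, for each level, the number of jobs resident there times its waiting time.

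\textbf{Sizing the levels.} A job served at level $\ell$ waits $O(T_\ell)$ slots. By Little's law, the level-$\ell$ contribution to $\mathbb E[Q_\Sigma]$ is $O(n\rho\, f_\ell T_\ell)$, where $f_\ell$ is the fraction of jobs reaching level $\ell$. At level~$1$, a window of $T_1$ slots collects about $\rho T_1$ jobs per port, with fluctuations $O(\sqrt{T_1\log(n/\varepsilon)})$. We therefore take $s_1=\rho T_1 - C\sqrt{T_1\log(n/\varepsilon)}$ (slightly less than $(1-p)T_1$), so the residual per port is $O(\sqrt{T_1\log})$. This gives $f_2\approx \sqrt{\log/T_1}$. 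Pooling residuals over a level-$2$ block of length $T_2$ gives residual degrees of order $T_2\sqrt{\log/T_1}$, with fluctuations of order the square root of that. Iterating, $f_{\ell+1}\approx \sqrt{\log/(T_\ell f_\ell)}$. The slack $\varepsilon$ must pay for the idle slots of all $k-1$ levels, which costs $\sum_\ell f_{\ell+1}$ of capacity, so we need $f_{\ell+1}\lesssim\varepsilon$ at the top. At the final level, FCFS-style matching decomposition of the residual load $f_k\rho$, using a fraction of capacity of order $f_k+\varepsilon$, yields queue length $O(n\log/\varepsilon)$. Balancing $f_\ell T_\ell$ equal across levels with $f_k\asymp\varepsilon$ gives the recursion $T_\ell f_\ell \asymp \log\cdot\varepsilon^{-2^k/(2^k-1)}$. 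The exponents double at each level ($f_2 = T_1^{-1/2}$, $f_3=T_1^{1/4}T_2^{-1/2}$, \dots), and solving this geometric system gives exactly the exponent $2^k/(2^k-1)$. I would verify this algebra first to fix all $T_\ell$ and $s_\ell$.

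\textbf{Existence of factors.} The main obstacle is proving that the batch at each level $\ell\le k-1$ contains an $s_\ell$-factor with high probability (failure probability at most $\mathrm{poly}(\varepsilon/n)$, so that failures cost only $O(1)$ in expectation). I would use the Gale--Ryser / Ore criterion for bipartite $s$-factors: an $s$-factor exists iff for all $X\subseteq\mathcal I$ and $Y\subseteq\mathcal O$ we have $e_G(X,Y)\ge s(|X|+|Y|-n)$, given that all degrees are at least $s$. At level~$1$, the multigraph is Binomial, so Chernoff bounds plus a union bound over $(X,Y)$ suffice; the $2^{2n}$ sets are dominated by deviations of order $n\cdot T$. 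At higher levels, residual graphs are adversarially structured. Here the independently routed fraction $p_\ell$ of fresh arrivals supplies a random "backbone" graph that alone satisfies the cut condition for all large cuts. The residual edges are then handled by degree bounds: each port's residual degree is at most about $s_{\ell}$ plus fluctuation, so small cuts are controlled deterministically. We need $p_\ell$ small compared with $f_\ell$ but large enough to dominate $\sqrt{\text{fluctuation}}$ terms; this is the delicate choice I would try first, taking $p_\ell \asymp f_\ell\cdot\varepsilon^{c}$.

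\textbf{Conclusion.} Combine the per-level Little's-law bounds, add the last-level queue bound (obtained by treating it as a stable system via a Lyapunov or Kingman bound with slack $\varepsilon$), and add the contributions of failure events. This yields $\sup_t\mathbb E[Q_\Sigma(t)]=O(n\log(n/\varepsilon)\varepsilon^{-2^k/(2^k-1)})$, with the constant depending on $k$ through the number of levels.
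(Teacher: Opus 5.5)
Your architecture is the paper's: nested batching with perfect $s_\ell$-factors below level $k$, a directly routed random backbone that certifies the cut condition on the balanced cuts while two-sided degree bounds handle the others, and a Kingman-type bound on a queue of matchings at level $k$. The gap is in the sizing, which is exactly what fixes the exponent, and two of its inputs are wrong.

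First, the level-$\ell$ residual degree does not fluctuate like the square root of its mean $T_\ell f_\ell$. When all lower tests succeed, each lower level has removed exactly $s_i$ jobs from \emph{every} port. The residual degree at $v$ is therefore the total number of arrivals at $v$ in the level-$\ell$ block minus a deterministic constant (the paper's identity $D_j^{(v)}=N^{(v)}(T_j)-\rho T_j+a_j+F_j$). Its spread is $\sqrt{T_\ell h}$, much larger than $\sqrt{T_\ell f_\ell h}$. Suppose you set $s_\ell$ only a multiple of $\sqrt{T_\ell f_\ell\log(n/\varepsilon)}$ below the mean. The $n$ input degrees are independent and $f_\ell\to0$, so some port falls below $s_\ell$ with probability tending to one, and the factor test fails. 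The right margin is $a_\ell\approx C\sqrt{T_\ell h}$, so the fraction of jobs reaching level $\ell+1$ is about $\sqrt{h/T_\ell}$, independent of $f_\ell$. Read literally with $T_\ell f_\ell$ held fixed, your recursion $f_{\ell+1}\approx\sqrt{\log/(T_\ell f_\ell)}$ gives $f_2=\cdots=f_k$ and $T_2=\cdots=T_k$. That is not a nested hierarchy, and with $f_k\asymp\varepsilon$ it only yields the exponent $2$.

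Second, the top-level constraint is not $f_k\asymp\varepsilon$. Slots unused at level $\ell$ are not lost capacity. Each level-$j$ block leaves exactly $\varepsilon T_j+a_j$ slots for higher levels, matching the $a_j$ jobs per port it passes up, and up to level $k-1$ those slots again serve perfect matchings. The real inefficiency is at level $k$. There the number of matchings is the maximum degree $L_ka_{k-1}+Z_k+F_k$, with $\mathbb{E}[Z_k]\le4\sqrt{T_kh}$. Absorbing this into the slack $\varepsilon T_k$ requires $T_k\gtrsim h\varepsilon^{-2}$, while the Kingman term $\varepsilon T_k/2$ keeps $T_k$ from being larger. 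If each level contributes $nh\varepsilon^{-\alpha}$, then $T_1\asymp h\varepsilon^{-\alpha}$ and $L_\ell a_{\ell-1}\asymp h\varepsilon^{-\alpha}$ with $a_{\ell-1}\approx C\sqrt{T_{\ell-1}h}$ give $T_\ell\asymp h\varepsilon^{-2\alpha(1-2^{-\ell})}$. Then $T_k\asymp h\varepsilon^{-2}$ forces $\alpha=2^k/(2^k-1)$. Imposing $f_k\asymp\varepsilon$ instead forces $T_{k-1}\asymp h\varepsilon^{-2}$, wastes a level, and gives only $2^{k-1}/(2^{k-1}-1)$.

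Two smaller points. Your backbone fraction must be calibrated so that $p_\ell T_\ell$ dominates the degree spread $\sqrt{T_\ell h}$. No $\varepsilon^c$ tuning is needed: the paper routes $\theta_{\ell-1}=a_{\ell-1}/(2\rho T_{\ell-1})$, half the expected level-$\ell$ load. You also need to specify what happens on a failed test. The paper promotes the whole batch and tracks the idle-slot count $F_j$, which is the same at every port. Failures therefore enter $A_k$ only as an additive scalar with $\mathbb{E}[F_k]\le h$ and $0\le F_k\le T_k$, which is what preserves $\operatorname{Var}(A_k)\le34T_kh$.
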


The theorem places no restriction on how fast $\varepsilon\to0$ as $n\to\infty$. In Section~\ref{sec:proof-three}, we prove the theorem for the case $k=3$, which captures the core ideas. We then extend the argument to every fixed $k$ in Appendix~\ref{sec:extension}.

The exponent $2^k/(2^k-1)$ in Theorem~\ref{thm:main} decreases to one as $k$ grows. When $\varepsilon$ decays at least polynomially in $n$, the logarithmic factor can also be absorbed into a small power of $1/\varepsilon$. This gives the following bound.

\begin{corollary}\label{cor:delta}
    Fix $\delta>0$ and $c>0$, and suppose that $\varepsilon \leq n^{-c}$. Let $k\ge2$ be the smallest integer satisfying $1/(2^k-1)<\delta$. Starting from an empty switch, the $k$-level batching policy satisfies
    \[
        \sup_{t \geq 0} \,
        \mathbb{E} \big[ Q_\Sigma(t) \big]
        = O\left(
            \frac{n}{\varepsilon^{1+\delta}}
        \right)
    \]
    for all sufficiently large $n$, where the constant in $O(\cdot)$ depends only on $\delta$ and $c$.
\end{corollary}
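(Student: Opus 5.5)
The corollary follows directly from Theorem~\ref{thm:main}: we need only absorb the logarithmic factor into a power of $1/\varepsilon$, using the assumption $\varepsilon\le n^{-c}$ and the slack between $1/(2^k-1)$ and $\delta$.

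First, by the choice of $k$ we have $2^k/(2^k-1)=1+1/(2^k-1)$ with $\eta:=\delta-1/(2^k-1)>0$. Since $k$ is determined by $\delta$ alone, so is $\eta$. Applying Theorem~\ref{thm:main} with this $k$ gives
\[
\sup_{t\ge0}\mathbb E[Q_\Sigma(t)] \le C_k\,\frac{n\log(n/\varepsilon)}{\varepsilon^{1+1/(2^k-1)}}
\]
for all large $n$, where $C_k$ depends only on $k$, hence only on $\delta$.

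Second, we bound the logarithm. From $\varepsilon\le n^{-c}$ we get $n\le \varepsilon^{-1/c}$, so
\[
\log(n/\varepsilon)\le (1+1/c)\log(1/\varepsilon).
\]
Since $\varepsilon=o(1)$, we have $\log(1/\varepsilon)\le C_\eta\,\varepsilon^{-\eta}$ for all $\varepsilon\in(0,1)$, with $C_\eta=1/(e\eta)$; this follows from $\log x\le x^\eta/(e\eta)$ for $x\ge 1$. Combining the two estimates gives
\[
\frac{n\log(n/\varepsilon)}{\varepsilon^{1+1/(2^k-1)}}\le (1+1/c)\,C_\eta\,\frac{n}{\varepsilon^{1+1/(2^k-1)+\eta}}=(1+1/c)\,C_\eta\,\frac{n}{\varepsilon^{1+\delta}}.
\]
The resulting constant is $C_k(1+1/c)/(e\eta)$, which depends only on $\delta$ and $c$, as claimed.

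There is no real obstacle here; the work is in Theorem~\ref{thm:main}. The only care needed is to use the strict inequality $1/(2^k-1)<\delta$, so that $\eta>0$ and the $\log(1/\varepsilon)$ factor can be absorbed. With equality this absorption would fail.
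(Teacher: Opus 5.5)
Your proof is correct and follows the paper's argument: use $\varepsilon\le n^{-c}$ to get $\log(n/\varepsilon)\le(1+1/c)\log(1/\varepsilon)$, then absorb $\log(1/\varepsilon)$ into $\varepsilon^{-\eta}$ with $\eta=\delta-1/(2^k-1)>0$. Your explicit bound $\log x\le x^{\eta}/(e\eta)$ simply makes quantitative the paper's step that $\log(1/\varepsilon)=O(\varepsilon^{-a})$ for every fixed $a>0$.
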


\begin{proof}
    Since $\varepsilon\le n^{-c}$, we have that $\log(n) \leq c^{-1} \log(1/\varepsilon)$. Therefore, 
    $\log(n/\varepsilon)\le (1+c^{-1})\log(1/\varepsilon)$.
    Theorem~\ref{thm:main} therefore gives
    \[
        \sup_{t\ge0}\mathbb{E}\big[Q_\Sigma(t)\big]
        = O\!\left(
            \frac{n\log(1/\varepsilon)}
                 {\varepsilon^{1+1/(2^k-1)}}
        \right)
        = O\!\left(
            \frac{n}{\varepsilon^{1+\delta}}
        \right).
    \]
    The last step uses that $\log(1/\varepsilon) =O(\varepsilon^{-a})$ for every fixed $a>0$, together with $1/(2^k-1)<\delta$.
\end{proof}

Thus, when $\varepsilon\leq n^{-c}$, a suitable choice of $k$ ensures that the expected queue size is within a factor $O(\varepsilon^{-\delta})$ of the lower bound $n/\varepsilon$, for any fixed $\delta>0$. In this sense, the $k$-level batching policy comes arbitrarily close to the conjectured $n/\varepsilon$ scaling.

\subsection{The~\texorpdfstring{$k$}{}-level batching policy}\label{sec:policy-formal}

We now describe our $k$-level batching policy for a general $k\geq 2$. The policy groups time slots into blocks at different time scales, and assigns a \textit{level} to each job, time slot, and block.   

\paragraph{Levels of blocks.}
We group consecutive time slots into \emph{level-1 blocks}, with each block containing $L_1$ slots. Next, we group $L_2$ consecutive level-1 blocks into a single \emph{level-2 block}, and so on. Thus, for $j \in \{2,\cdots, k\}$, every $L_j$ consecutive level-$(j\!-\!1)$ blocks form a \emph{level-$j$ block}, which contains $T_j := L_1L_2\cdots L_j$ slots.  The lower panel of Figure~\ref{fig:policy} illustrates this for $k=3$. 

\paragraph{Levels of service slots.} For $j \in \{1,2,\cdots,k\}$, each level-$j$ block reserves $S_j$ of its slots as \emph{level-$j$ service slots}. The positions of these level-$j$ service slots in the block are determined in increasing order, i.e., level 1 takes the first $S_1$ slots of each level-1 block, level 2 takes the first $S_2$ remaining slots of each level-2 block, and so on. The $S_j$'s are chosen so that all time slots are assigned a level. The lower panel of Figure~\ref{fig:policy} illustrates the level-$j$ slots and their position, with slots marked and shaded according to their level. 

\begin{figure}[t]
    \centering
    \includegraphics[width=0.99\linewidth]{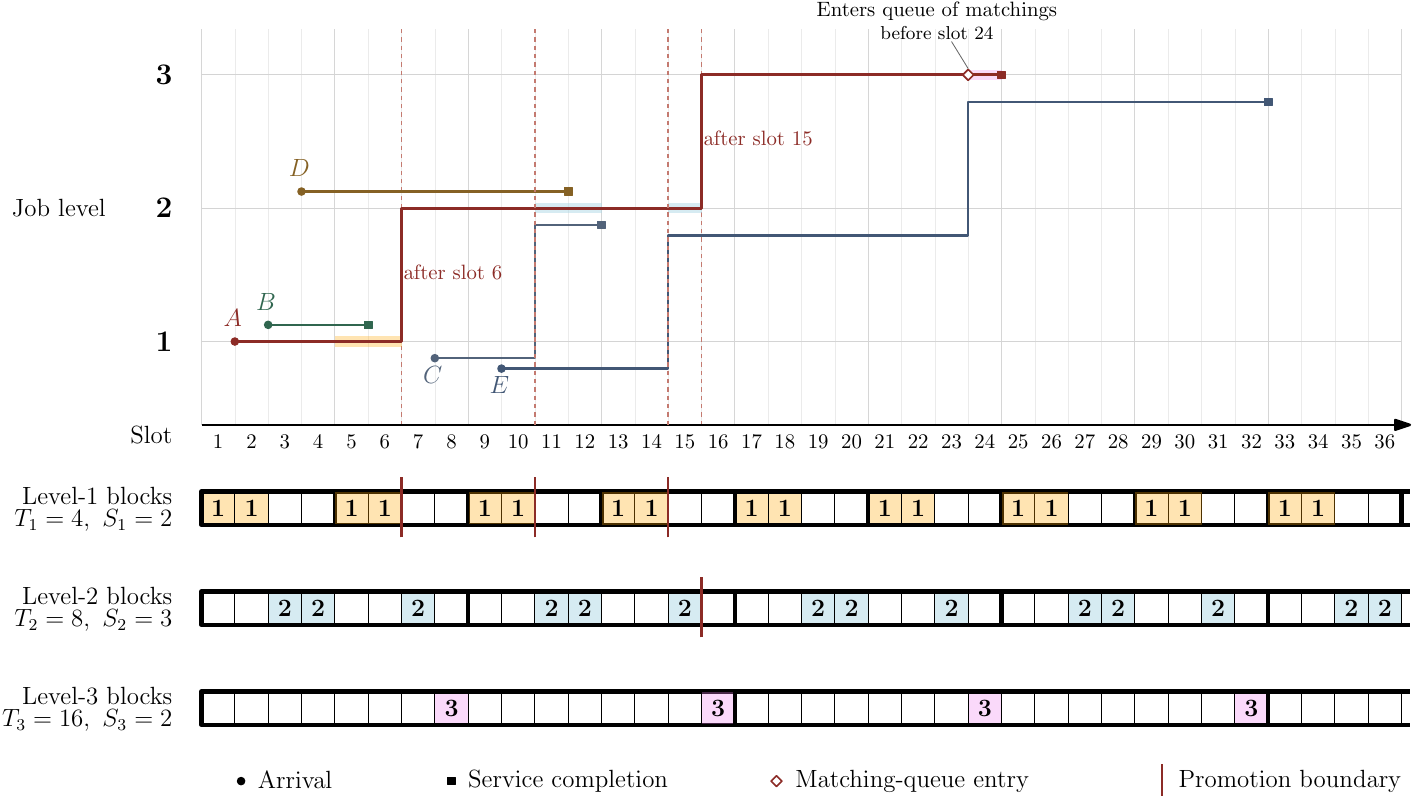}
    \caption{Illustration of the three-level batching policy with $L_1=4$, $L_2=L_3=2$, $S_1=2$, $S_2=3$, and $S_3=2$. The lower panel shows time slots grouped into blocks at different levels, and the level of each service slot is labeled and shaded. The upper panel illustrates five job trajectories: dots represent arrivals, jumps represent promotions, and squares represent service completion. At levels $j=1,2$, a job is eligible in the level-$j$ block immediately after its original arrival block at that level. If unserved, it is promoted immediately after the last level-$j$ service slot of its eligible block. The slots where promotions occur are marked as well.}
    \label{fig:policy}
\end{figure}
\paragraph{Levels of jobs.}
At any time, each job in the switch has a \emph{level} in $\{1,\cdots,k\}$. Jobs arrive with a randomly chosen level $j \in \{1,2,\cdots,k-1\}$, and may get promoted later to higher levels. At any time, level-$j$ service slots serve only jobs that are currently \emph{level-$j$ jobs}. Most jobs start at level~1, however, a small random fraction of arrivals is set aside for each intermediate level, to keep the batches there suitably spread. Specifically, the initial level $M$ of an arriving job equals $r+1$ with probability $\theta_r$ (defined in~\eqref{eq:def-theta-r}), for $r=1,\cdots,k-2$, and equals 1 otherwise, independently of the arrivals and of other jobs. After arrival, a job's level changes only through \emph{promotion}: a job that is not served at level $j\leq k-1$ is promoted to level $j+1$, at a time specified in the service schedule below. Thus, a level-$j$ job either started at level $j$ or was promoted from level $j-1$. No job starts at level $k$, and a job that is promoted to level $k$ stays there until it is served. The upper panel of Figure~\ref{fig:policy} illustrates an example trajectory for 5 jobs, including their time slot of arrival, departure and sequence of promotions.

\paragraph{Eligibility for service.}
For each $j \in \{1,\cdots, k - 1\}$, a level-$j$ job becomes \emph{eligible} for service in the level-$j$ service slots of the level-$j$ block immediately after the one in which it arrived. For example, a level-$j$ job that arrives early in a level-$j$ block must wait for the next level-$j$ block to receive service, even if it arrived before the current block's first level-$j$ service slot. 

This eligibility rule applies to promotions as well. In particular, a job whose level is promoted from $j$ to $j+1$ (with $j+1 \leq k-1$), becomes eligible for service only in the level-$(j+1)$ service slots of the level-$(j+1)$ block that is immediately after the level-$(j+1)$ block of the original arrival of the job. The policy ensures that the promotion to level-$(j+1)$ happens before any of these level-$(j+1)$ service slots have passed (the specifics are described in the service schedule below). A level-$k$ job remains eligible until it is served.
Note that in Figure~\ref{fig:policy}, jobs are only served in a time slot where they are eligible. Further, the trajectory of a job is shaded for the time slots in which a job is eligible for service but does not receive service.

\paragraph{Service schedule.} $ \ $
\begin{itemize}[leftmargin=2em]
    \item \textbf{At levels $1$ to $k-1$.
     } 
     Fix $j\leq k-1$ and a level-$j$ block. Just before the first level-$j$ service slot of the block, the policy collects the level-$j$ jobs that are eligible in this block, i.e., those that arrived during the previous level-$j$ block, into a \emph{batch}. If the graph of the batch contains an $S_j$-factor, the policy decomposes it into $S_j$ perfect matchings and serves one in each level-$j$ service slot of the block. Otherwise, these slots stay idle. Immediately after the last level-$j$ service slot, every unserved job of the batch is promoted to level $j+1$. Since promotion does not change a job's arrival slot, the promoted job is eligible for level-$(j+1)$ service in the level-$(j+1)$ block immediately after the one in which it arrived. 
    
    \item \textbf{At level $k$.\ } 
    Level $k$ cannot pass leftovers on, so it must serve every job that reaches it, whether or not they form perfect matchings. Just before the first level-$k$ service slot of each level-$k$ block, the policy splits the batch of level-$k$ jobs whose original arrivals occurred during the previous level-$k$ block into as few matchings as possible. By K\H{o}nig's edge-coloring theorem, their number equals the maximum degree of the batch. These matchings join the back of a \emph{queue of matchings}. Each level-$k$ service slot serves the matching at the front of this queue, or stays idle if the queue is empty, so unserved matchings wait for later level-$k$ blocks.
\end{itemize}

\paragraph{Remarks.} In the description above, we have not specified which $S_j$-factor to serve, how to split a batch into matchings, and in which order to serve them. These are decided by fixed rules determined before observing the switch. Hence, at each level, the batches of different blocks are independent and identically distributed. Since the switch starts empty, the level-$j$ service slots of the first level-$j$ block stay idle, for each $j$.

\paragraph{Runtime.} 
The $k$-level policy can be implemented with $O(n^3 k)$ expected amortized scheduling cost, per time slot. To see this, note that selecting an $s$-factor requires a maximum-flow computation costing $O(n^3)$, and extracting a matching from the $s$-factor also costs $O(n^3)$ time. Finally, each level performs at most one $s$-factor computation per block, giving an amortized cost of $O(n^3)$ per level per time slot.

\paragraph{Choice of parameters.}
Let $C=108$ and $K=144$, and define
\(
    h :=\log(Kn/\varepsilon).
\)
The numbers of service slots are defined recursively, together with auxiliary quantities $a_j$, by
\begin{align}
     S_1 &= \big\lfloor\rho T_1-C\sqrt{T_1h}\big\rfloor \, ,
    &a_1 &= \rho T_1-S_1 \, , \label{eq:service-s1}
    \\
     S_j &= \big\lfloor L_ja_{j-1}-C\sqrt{T_jh} \big\rfloor \, ,
    &a_j &=L_ja_{j-1}-S_j
     \ \ \text{for } j = 2,\dots, k-1 \, .\label{eq:service-sj}
\end{align}
The intuition behind this definition is as follows: each port receives $\rho T_1$ jobs per level-1 block on average, and $S_1$ is set below this mean by a margin of order $\sqrt{T_1h}$ to absorb random fluctuations. An $S_1$-factor then exists with high probability, and about $a_1$ jobs per port are passed up to the next level. 

In the same way, level $j$ pools about $L_ja_{j-1}$ jobs per port, and $S_j$ is selected to be a margin of order $\sqrt{T_jh}$ below this, so that about $a_j$ jobs per port are passed up. Since $C\sqrt{T_jh}\leq a_j<C\sqrt{T_jh}+1$, this is a small fraction of the $\rho T_j$ arrivals per port in a level-$j$ block. These numbers fit within the blocks: by induction on $j$, each level-$j$ block with $j\leq k-1$ leaves
\begin{equation}\label{eq:remaining-slots}
    T_j-\sum_{i=1}^{j}\frac{T_j}{T_i}S_i=\varepsilon T_j+a_j
\end{equation}
slots for higher levels, so each level-$k$ block contains $S_k$ level-$k$ service slots, where
\begin{equation}\label{eq:service-bk}
    S_k\coloneqq L_k\big(\varepsilon T_{k-1}+a_{k-1}\big)=\varepsilon T_k+L_ka_{k-1} \, .
\end{equation}
Each job starts off in initial level $r+1$ with probability $\theta_r$, where 
\begin{equation} \label{eq:def-theta-r}
    \theta_r := a_r/(2\rho T_r).
\end{equation}
This holds for $r=1,\cdots,k-2$. With remaining probability $1 - \sum_{r=1}^{k-2}\theta_r$, the job starts off in initial level $1$.
Finally, we choose the block lengths as
\[
    L_1 = \big\lceil Kh\varepsilon^{-\alpha_k} \big\rceil \, ,
    \qquad
    L_j=\big\lceil K\varepsilon^{-\alpha_k/2^{j-1}}\big\rceil
    \ \ \text{for } j = 2,\dots, k \, ,
\]
where $\alpha_k=2^k/(2^k-1)$ is the exponent in Theorem~\ref{thm:main}. These lengths are chosen to balance the levels, so that each level contributes approximately $nh\varepsilon^{-\alpha_k}$, to the expected queue size, and the top-level blocks have length $T_k$ of order $h/\varepsilon^2$.

\section{Proof of Theorem~\texorpdfstring{\ref{thm:main}}{} when~\texorpdfstring{$k=3$}{}}\label{sec:proof-three}

Throughout the proof, the parameters are obtained from Section~\ref{sec:policy-formal} with $k=3$. In particular, the block lengths $L_j$ and the number of slots $S_j$ for level-$j$ service in a level-$j$ block are
\begin{gather}
    L_1=\lceil Kh\varepsilon^{-8/7}\rceil,\quad
    L_2=\lceil K\varepsilon^{-4/7}\rceil,\quad
    L_3=\lceil K\varepsilon^{-2/7}\rceil,\label{eq:three-blocks}
    \\
    S_1=\rho T_1-a_1,\quad S_2=L_2a_1-a_2,\quad
    S_3=\varepsilon T_3+L_3a_2.\label{eq:three-service}
\end{gather}
Thus, the number of time slots in each level satisfies the following bounds, which we use frequently throughout the proof.
\begin{equation}\label{eq:three-block-scales}
    \begin{aligned}
         Kh\varepsilon^{-8/7}    & \leq T_1 \leq 2Kh\varepsilon^{-8/7},\\
         K^2h\varepsilon^{-12/7} & \leq T_2 \leq 4K^2h\varepsilon^{-12/7},\\
         K^3h\varepsilon^{-2}    & \leq T_3 \leq 8K^3h\varepsilon^{-2}.
    \end{aligned}
\end{equation}
Lastly, note that a job starts at level 2 with probability $\theta=a_1/(2\rho T_1)$, and otherwise starts at level 1. Since the choice of parameters satisfies $a_1/(2\rho T_1) \to 0$, the quantity $\theta$ is a well defined probability for all sufficiently large $n$.

To prove the result for $k = 3$, we will bound the number of level-1 jobs, level-2 jobs and level-3 jobs present in the switch, at any time slot $t$. Specifically, let $X_j(t)$ denote the number of unserved level-$j$ jobs at the beginning of slot $t$. It follows that
\begin{align} \label{eq:stage-decomposition}
    Q_\Sigma(t) = X_1(t) + X_2(t) + X_3(t) \,,
\end{align}
since each job is always at one of the three levels. We will show that each term is $O(nh\varepsilon^{-8/7})$ in expectation, at any time $t$.

Table~\ref{tab:three-level-notation} in Appendix~\ref{apx:table} summarizes the main notation used in the three-level analysis and its supporting proofs.

\subsection{Analysis of level-1 jobs}\label{sec:level-one}

First, we give a direct bound for the number of level-1 jobs in the switch at any time $t$.

\begin{lemma}\label{lem:queue-length-one}
     The expected number of level-1 jobs in the switch satisfies
        \(
         \sup_{t\ge0}\E{X_1(t)} \leq 2n\rho T_1.
        \)
\end{lemma}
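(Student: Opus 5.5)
The plan is a pathwise bound followed by taking expectations; no factor-existence argument is needed. The key observation is that a level-1 job never stays at level 1 for long. Consider a level-1 job that arrives in some level-1 block $B$. It is eligible only in the next level-1 block $B'$. Immediately after the last level-1 service slot of $B'$, it is either served or promoted to level 2. In both cases it stops being a level-1 job. So at the beginning of any slot $t$, every level-1 job present must have arrived with initial level 1, either in the current level-1 block or in the one just before it.

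Fix $t$, and let $B_t$ be the level-1 block containing $t$, with $B_t^-$ the block preceding it (empty if $B_t$ is the first block). The observation gives the deterministic inequality
\[
X_1(t)\le \sum_{s\in B_t^-\cup B_t,\ s<t}\ \sum_{i,j=1}^n A_{ij}(s)\,\ind\{\text{job arriving to }(i,j)\text{ at }s\text{ has initial level }1\}.
\]
This simply drops the service and promotions that occur within $B_t^-\cup B_t$. The right-hand side counts arrivals over at most $2L_1 = 2T_1$ slots. I will take $T_1 = L_1$, which holds by definition since $T_1 = L_1$.

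Taking expectations, each slot contributes at most $\sum_{i,j}\E{A_{ij}(s)} = n^2\cdot\rho/n = n\rho$. The level indicators only reduce this further, since they are independent of arrivals and have probability at most $1$. Hence $\E{X_1(t)}\le 2T_1\cdot n\rho = 2n\rho T_1$, uniformly in $t$, which gives the supremum bound.

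The only step needing care is justifying the claim that a level-1 job is gone after its eligible block. Two points must be checked. First, promotion happens at the end of the level-1 service slots of $B'$, which are the first $S_1$ slots of $B'$, so the job certainly leaves level 1 within $B'$. Second, this holds even when no $S_1$-factor exists and the slots idle, because the policy promotes all unserved jobs of the batch regardless. A slightly tighter bound would follow from noting that jobs from $B_t^-$ are gone after the first $S_1$ slots of $B_t$, but the crude two-block count already gives the stated constant.
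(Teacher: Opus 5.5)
Your proof is correct and is the same argument as the paper's: every level-1 job at slot $t$ must have arrived with initial level 1 in the current or the preceding level-1 block, because it is served or promoted by the last level-1 service slot of the next block even when the factor test fails, and counting all arrivals over those at most $2T_1$ slots gives $\E{X_1(t)}\le 2n\rho T_1$. Your remarks on the timing convention and on the failure case only make explicit what the paper's two-line proof leaves implicit.
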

\begin{proof}
     Every job starting at level 1 is either served or promoted by the last level-1 service slot of the next level-1 block. Thus, at any time $t$, level-1 jobs come from at most the current and preceding arrival blocks. Counting all arrivals in these two blocks gives $\E{X_1(t)}\leq 2n\rho T_1$.
\end{proof}

Next, we will bound the number of jobs from a level-1 arrival block that enter level 2. To that end, fix a level-1 arrival block, and recall that each arriving job corresponds to an edge in the bipartite multi-graph.
Consider the next level-1 block, and let $G_1$ denote the graph of level-1 jobs eligible for service at the beginning of this block. By definition, this only includes the level-1 jobs that arrived in the previous level-1 block. For each input--output pair, a job arrives and starts at level 1 in each slot with probability $\rho(1-\theta)/n$, independently across pairs and slots, so the edge multiplicities of $G_1$ are independent $\Binomial(T_1,\rho(1-\theta)/n)$ random variables. Let $\mathcal F^{(1)}$ be the failure event that $G_1$ has no $S_1$-factor, and define
\[
    F_1 := S_1 \cdot \ind_{\mathcal F^{(1)}},
\]
i.e., $F_1$ counts the level-1 service slots that stay idle in this block. The following lemma bounds the probability of failure and the number of idle service slots.

\begin{lemma}[Level-1 failures and idle service slots]\label{lem:success-one}
    For every level-1 arrival block,
    \begin{enumerate}[itemsep=0.5em]
        \item[$\mathrm{(i)}$]
        \(
            \Pp(\mathcal F^{(1)})  \leq e^{-40h}
        \)
        \item[$\mathrm{(ii)}$]
         \(
            \E{F_1}\le T_1e^{-40h} \leq a_1.
        \)
    \end{enumerate}
\end{lemma}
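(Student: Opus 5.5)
Part (i) will follow from a sufficient condition for a regular factor in a bipartite multigraph, combined with concentration of the Binomial edge multiplicities; part (ii) is then immediate.

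\emph{Factor criterion.} By the max-flow/min-cut form of Gale--Ryser (or Ore's theorem for $f$-factors in bipartite graphs), $G_1$ has an $S_1$-factor if and only if
\[
e_{G_1}(X,Y)\ \ge\ S_1\bigl(|X|+|Y|-n\bigr)\quad\text{for all } X\subseteq\mathcal I,\ Y\subseteq\mathcal O .
\]
Here multiplicities are treated as edge capacities, since parallel edges may each be used. Write $x=|X|$ and $y=|Y|$. Only pairs with $x+y>n$ matter. The mean of $e_{G_1}(X,Y)$ is $\mu_{x,y}=xy\rho(1-\theta)T_1/n$. Put $x'=n-x$ and $y'=n-y$, and note $(1-\theta)\rho T_1=\rho T_1-a_1/2$. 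Since $S_1\le \rho T_1-C\sqrt{T_1h}$, the slack is
\[
\mu_{x,y}-S_1(x+y-n)\ \ge\ \rho T_1\,\frac{x'y'}{n}+\Bigl(C\sqrt{T_1h}-\tfrac{a_1}{2}\Bigr)(x+y-n)-\tfrac{a_1}{2}\cdot\tfrac{xy-(x+y-n)n}{n}.
\]
Since $a_1<C\sqrt{T_1h}+1$, the slack is at least of order $\rho T_1x'y'/n+\tfrac{C}{3}\sqrt{T_1h}\,(x+y-n)$. I will double-check this slack bookkeeping, and use $xy-(x+y-n)n=x'y'$ to absorb the last term into the first.

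\emph{Union bound by cut size.} This is the main obstacle. A naive union bound over all $4^n$ pairs $(X,Y)$ fails when $x+y-n$ is small, because the slack is then only about $\sqrt{T_1h}$. I will avoid this by reducing to the complement side. The condition is equivalent to $e(X,Y')\le S_1 x'$-type inequalities, i.e.\ the number of edges leaving $X$ into $Y'$ is at most the degree-budget slack. I will parametrize by the smaller of the sets involved, say $m=\min\{x',y',\dots\}$, so that the number of relevant pairs is at most $\binom{n}{m}^2\le e^{2m\log n}$. Meanwhile the deviation needed is at least of order $m\sqrt{T_1h}$ against a variance of order $mT_1$ (or of order $m^2T_1/n$ in the bulk). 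Bernstein's inequality then gives failure probability at most $\exp(-c\,C^2 m h)$ per set, in the small-deviation regime. The large-deviation regime is handled by the linear term and uses $T_1\ge Kh\varepsilon^{-8/7}\gg h$. The degree constraints alone (the cases $X=\mathcal I$ or $Y=\mathcal O$) are handled the same way: a Chernoff bound on each port degree, which is $\Binomial(nT_1,\rho(1-\theta)/n)$, at deviation $C\sqrt{T_1h}/2$ gives $e^{-C^2h/12}$ each. Since $h=\log(Kn/\varepsilon)\ge\log n$ and $C=108$, summing $e^{2m\log n-c C^2 mh}$ over $m\ge1$ and the $2n$ degree events gives at most $e^{-40h}$.

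\emph{Part (ii).} This is immediate: $\E{F_1}=S_1\Pp(\mathcal F^{(1)})\le T_1e^{-40h}$. Moreover $e^{-40h}=(\varepsilon/Kn)^{40}$ is tiny. Hence $T_1e^{-40h}\le 2Kh\varepsilon^{-8/7}(\varepsilon/Kn)^{40}\le 1\le C\sqrt{T_1h}\le a_1$ for large $n$.
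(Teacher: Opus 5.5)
\textbf{There is a genuine gap in part (i), in the union bound over cuts.} Your cut criterion and slack computation are correct. The slack is exactly $\mu_1x'y'/n+\tfrac{a_1}{2}(x+y-n)$, where $\mu_1=\rho T_1-a_1/2$, because $S_1=\mu_1-a_1/2$. Part (ii) is fine. The problem is the union bound over cuts, which is the entire content of (i): the bookkeeping you sketch is wrong in two places.

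\emph{Counting.} The number of pairs $(X,Y)$ with $|X|=x$, $|Y|=y$ is $\binom nx\binom ny\le n^{\min(x,x')+\min(y,y')}$. It is not bounded by $\binom nm^2$ with $m$ the smallest of the set sizes.

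\emph{Variance.} $e_{G_1}(X,Y)$ is $\Binomial(T_1xy,\rho(1-\theta)/n)$, so its variance is about $\mu_1xy/n$, not of order $mT_1$.

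To see that this breaks the argument, take $x'=1$ and $y=n/2$. There are about $n2^n$ such cuts, but your per-cut estimate $e^{-cC^2mh}$ with $m=1$ is only $e^{-cC^2h}$. Since $h=\log(Kn/\varepsilon)=o(n)$ unless $\varepsilon$ is exponentially small, the sum diverges. What actually controls this regime is the term $\tfrac{a_1}{2}(x+y-n)\approx a_1n/4$, which gives $\exp(-\Omega(a_1^2n/\mu_1))=\exp(-\Omega(C^2hn))$. Parametrizing by the smallest set throws this term away.

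If you instead meant to discard most cuts by passing to the complement side, that needs a deterministic reduction combining degree bounds with the identity $e_R(\mathcal I\setminus X,Y)=e_R(X,\mathcal O\setminus Y)-\sum_{i\in X}\deg_R(i)+\sum_{j\in Y}\deg_R(j)$. You have not supplied it, and it is precisely the paper's Lemma~\ref{lem:protected}. Also, small $x+y-n$ is not the hard case: unless one complement is empty, $\mu_1x'y'/n\gtrsim\mu_1$ there.

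\textbf{A direct union bound does work at level 1 if you keep the full slack $s$.} Set $m=x+y-n$. The lower tail bound~\eqref{eq:chernoff-lower} together with $a_1^2\ge C^2T_1h\ge C^2\mu_1h$ gives failure probability at most $e^{-E}$, where
$E\ge\mu_1x'^2y'^2/(2nxy)+C^2hm^2n/(8xy)$.
Compare this with the entropy $h(\min(x,x')+\min(y,y'))$, and by symmetry take $x'\le y'$.
\begin{enumerate}
\item[(a)] If $y'\le n/2$: either $m\ge n/4$, and the second term is at least $C^2hn/128$; or $x'>n/4$ and $y'>3n/8$, and the first term is at least $9\mu_1n/2048$.
\item[(b)] If $y'>n/2$, then $y=x'+m<n/2$. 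Either $x'\ge m$, where the first term is at least $\mu_1x'/16$ against entropy at most $3hx'$; or $x'<m$, where the second term is at least $C^2hm/16$ against entropy below $3hm$.
\end{enumerate}
Every case leaves a net exponent of at least $43h$ for large $n$. Summing over at most $(n+1)^2$ size pairs then gives $e^{-40h}$.

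\textbf{Comparison with the paper's route.} The paper does not union-bound over all cuts. It imposes the $2n$ degree events at deviation $a_1/9$. Given these, Lemma~\ref{lem:protected} shows that a violated cut must (in the paper's convention) satisfy $0<|X|-|Y|<\min\{|Y|,n-|X|\}$, where only a quarter of the mean of $e_H(\mathcal I\setminus X,Y)$ is needed. So the remaining union bound is a constant-fraction deviation against $n^{3y}$ sets. The advantage is that the random cut condition is needed only for a subgraph $H\subseteq R$. That is what lets the same argument run at levels $2,\dots,k-1$. There the tested graph is a residual graph whose multiplicities are not independent, and only the jobs that start at that level form a product-measure graph. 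Your direct argument relies on the independence of all multiplicities of $G_1$, so it works only at level 1.
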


Lastly, we relate the idle service slots to the number of jobs that enter level 2. Let $R_1$ be the graph of jobs from the previous level-1 block that remain in the switch, measured immediately after the last level-1 service slot of the current block. These are (i) jobs that were eligible for level-1 service in the current block, but did not receive service and were promoted to level 2, and (ii) jobs that arrived in the previous level-1 block with initial level 2. We denote
\[
    J_1 := |E(R_1)|,
    \qquad
    D_1^{(v)} := \deg_{R_1}(v),
\]
i.e., $J_1$ is the total number of jobs from the previous level-1 block that enter level 2, and $D_1^{(v)}$ counts these jobs at port $v$.

We can write $D_1^{(v)}$ explicitly as follows. Let $N^{(v)}(T_1)$ denote the total number of job arrivals at port $v$ during the previous level-1 block. Then
\begin{align}\label{eq:leftovers-one}
    D_1^{(v)} = N^{(v)}(T_1)-\rho T_1+a_1+F_1.
\end{align}
This follows because level-1 service removes $S_1-F_1=\rho T_1-a_1-F_1$ jobs at every port, and no job from this arrival block has yet received level-2 service at the instant defining $R_1$.

\begin{lemma}\label{lem:population-one}
    Starting from an empty switch, for any level-1 arrival block,
    \begin{enumerate}[itemsep=0.5em]
        \item[$\mathrm{(i)}$] For every port $v$, $\mathbb{E}\big[D_1^{(v)}\big]\leq 2a_1$.
        \item[$\mathrm{(ii)}$]
        \(
            \E{J_1} \leq 2na_1.
        \)
    \end{enumerate}
\end{lemma}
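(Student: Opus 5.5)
The plan is to take expectations directly in the identity \eqref{eq:leftovers-one} and then sum over ports. For (i), fix a port $v$. By linearity,
\[
    \mathbb{E}\big[D_1^{(v)}\big] = \mathbb{E}\big[N^{(v)}(T_1)\big]-\rho T_1+a_1+\mathbb{E}[F_1].
\]
Here $N^{(v)}(T_1)$ counts all arrivals at port $v$ during the previous level-1 block, whatever their initial level. Port $v$ belongs to $n$ queues, each receiving $\Bernoulli(\rho/n)$ arrivals per slot. Hence $\mathbb{E}[N^{(v)}(T_1)] = T_1\cdot n\cdot \rho/n=\rho T_1$, and the first two terms cancel exactly. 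Lemma~\ref{lem:success-one}(ii) gives $\mathbb{E}[F_1]\le a_1$, so $\mathbb{E}[D_1^{(v)}]\le 2a_1$.

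Before using \eqref{eq:leftovers-one}, I would check that it holds in both cases. When an $S_1$-factor exists, each port loses exactly $S_1$ jobs to level-1 service. The level-2 starters at $v$ are counted in $N^{(v)}(T_1)$ and are not served at level 1, so they remain in $R_1$. When the factor fails, $F_1=S_1$ and no jobs are removed. In both cases $D_1^{(v)}=N^{(v)}(T_1)-S_1+F_1$, which equals the stated identity since $S_1=\rho T_1-a_1$.

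One subtlety is whether some level-1 jobs from this block could have been served before the instant defining $R_1$. They could not: by the eligibility rule they are only served in the next block's level-1 slots. Likewise, no level-2 service of these jobs occurs before $R_1$ is measured, as the paper already notes. The hypothesis ``starting from an empty switch'' only matters for the very first block, where the level-1 slots idle. This changes nothing, because that block has no arrivals in a previous block.

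For (ii), each job in $R_1$ is an edge with exactly one endpoint in $\mathcal I$. Therefore $J_1=\sum_{v\in\mathcal I}D_1^{(v)}$, and summing (i) over the $n$ input ports gives $\mathbb{E}[J_1]\le 2na_1$. Summing over all $2n$ ports would double-count each edge, which is why I restrict to the input side.

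I expect no real obstacle here. The only point needing care is confirming that $F_1$ enters with coefficient exactly one per port, since a failure idles all $S_1$ slots at once. That is precisely the definition $F_1=S_1\ind_{\mathcal F^{(1)}}$ used in Lemma~\ref{lem:success-one}.
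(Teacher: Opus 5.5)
Your proposal is correct and follows essentially the same route as the paper. Both take expectations in \eqref{eq:leftovers-one}, cancel $\E{N^{(v)}(T_1)}=\rho T_1$, bound $\E{F_1}\le a_1$ via Lemma~\ref{lem:success-one}(ii), and sum over the $n$ input ports for (ii). Your case check of the identity and your remark about summing only over $\mathcal I$ are sound details that the paper leaves implicit.
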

The proofs of Lemmas~\ref{lem:success-one} and~\ref{lem:population-one} are in Section~\ref{sec:supporting-proofs}.

\subsection{Analysis of level-2 jobs}\label{sec:level-two}
First, we use the bound on $J_1$ to bound the number of level-2 jobs in the switch at any time $t$.

\begin{lemma}\label{lem:queue-length-two}
The expected number of level-2 jobs in the switch satisfies
\(
 \sup_{t\ge0}\E{X_2(t)}\le4nL_2a_1.
\)
\end{lemma}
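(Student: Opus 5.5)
The proof mirrors Lemma~\ref{lem:queue-length-one}: level-2 jobs have short lifetimes, so at any time they come from only a bounded number of recent level-2 arrival blocks. Then Lemma~\ref{lem:population-one}(ii) bounds how many jobs each level-1 arrival block contributes to level 2.

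\textbf{Which jobs are at level 2.} Every level-2 job, whether it started at level 2 or was promoted from level 1, is eligible for level-2 service in the level-2 block immediately after its original arrival level-2 block. If it is unserved there, it is promoted to level 3 right after the last level-2 service slot of that block. Promotions from level 1 occur within the level-1 block following the arrival block, so no earlier than the arrival level-2 block itself. Hence, at any time $t$, every level-2 job in the switch originally arrived during the current level-2 block or the one immediately before it. These two level-2 blocks together consist of $2L_2$ level-1 arrival blocks.

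\textbf{Counting per level-1 block.} Fix one of these level-1 arrival blocks. Its jobs that are ever at level 2 are exactly those counted in $R_1$: the jobs with initial level 2, plus those promoted after level-1 service. For the blocks whose promotion moment has already passed at time $t$, the level-2 jobs from that block are a subset of $E(R_1)$. For blocks still in progress, the level-2 jobs present at $t$ are only initial-level-2 arrivals. Their number is dominated by the initial-level-2 arrivals over the whole block, which are themselves part of $R_1$; equivalently, the block's contribution is at most $J_1$ for that block. Therefore
\[
X_2(t) \le \sum_{\text{$2L_2$ level-1 blocks}} J_1^{(b)},
\]
and taking expectations with Lemma~\ref{lem:population-one}(ii) gives $\E{X_2(t)} \le 2L_2\cdot 2na_1 = 4nL_2a_1$, uniformly in $t$.

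\textbf{Main subtlety.} The step needing care is the domination claim for a level-1 block whose level-1 service has not yet finished at time $t$. Here $J_1$ is defined at a future instant, so one must check that the level-2 jobs present now are a subset of what $R_1$ will contain. This holds because no job from that block can have received level-2 service yet (it is not eligible until the next level-2 block), and jobs only enter level 2, never leave it, before that point.

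A second point is that Lemma~\ref{lem:population-one} is stated "for any level-1 arrival block" starting from an empty switch. It therefore applies to each of the $2L_2$ blocks, including the boundary cases near time $0$, where the counts are only smaller.
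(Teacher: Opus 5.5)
Your proposal is correct and follows essentially the same argument as the paper. Level-2 jobs present at time $t$ originate from the current or preceding level-2 block, which together contain at most $2L_2$ level-1 blocks; each such block contributes at most its $J_1$ (initial-level-2 jobs plus promoted jobs); and Lemma~\ref{lem:population-one}(ii) then gives $4nL_2a_1$. Your extra check, that no job from a block can receive level-2 service before its $R_1$ instant, spells out a step the paper states only briefly.
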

\begin{proof}
Fix a slot $t$ and consider the level-2 jobs present in the switch. These jobs leave level 2 no later than the last level-2 service slot of the next level-2 block after their arrival. They therefore arrived during the current or preceding level-2 block, together containing at most $2L_2$ level-1 blocks. For each such level-1 block, they are among the jobs counted by $J_1$: a job that starts at level 2 skips level 1, and a job promoted to level 2 has already finished its level-1 service. By Lemma~\ref{lem:population-one}(ii), their expected number is at most
\(
 (2L_2)\cdot(2na_1)=4nL_2a_1.
\)
If the preceding block does not exist, there are fewer blocks to count. This completes the proof.
\end{proof}

Next, we will bound the number of jobs promoted to level 3. Fix a level-2 arrival block. Consider the next level-2 block, and let $G_2$ be the graph of level-2 jobs eligible just before its first level-2 service slot. By the policy, $G_2$ is the union of the graphs $(R_{1,c})_{c=1,\ldots,L_2}$ from the previous block's constituent level-1 arrival blocks \footnote{Here, $R_{1,c}$ is the graph of jobs from the $c$-th constituent level-1 arrival block that either started at level 2 or were promoted from level 1 (as defined in Section~\ref{sec:level-one}).}. Let $\mathcal F^{(2)}$ be the event that $G_2$ has no $S_2$-factor, and define
\[
    F_2=\sum_{c=1}^{L_2}F_{1,c}+S_2\ind_{\mathcal F^{(2)}}.
\]
Here, $F_{1,c}$ counts the idle level-1 service slots for jobs from the $c$-th constituent level-1 arrival block. Thus $F_2$ counts the level-1 and level-2 service slots for these jobs that stay idle because their factor tests fail. Level 2 can serve jobs left unserved at level 1, but it still has only $S_2$ service slots. For this reason, we account for missed service at both levels when counting jobs promoted to level 3.

The following lemma bounds the probability of a level-2 failure and the expected number of these idle service slots. We remark that the randomized initial levels are important here: jobs that start at level 2 provide edges spread across the input--output pairs, which, together with degree balance, ensure an $S_2$-factor with high probability.

\begin{lemma}[Level-2 failures and idle service slots]\label{lem:success-two}
    For every level-2 arrival block,
    \begin{enumerate}[itemsep=0.5em]
        \item[$\mathrm{(i)}$]
        \(
            \Pp(\mathcal F^{(2)})\le(L_2+1)e^{-40h}
        \)
        \item[$\mathrm{(ii)}$]
        \(
            \E{F_2}\leq T_2(L_2+1)e^{-40h}
            \leq a_2.
        \)
    \end{enumerate}
\end{lemma}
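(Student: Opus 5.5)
To prove Lemma~\ref{lem:success-two}, I plan to show that, outside a bad event of probability at most $L_2e^{-40h}+e^{-40h}$, the graph $G_2$ satisfies the conditions of a bipartite factor theorem. I will use the Gale--Ryser/Ore-type criterion: a bipartite multigraph $G$ with all degrees at least $s$ has an $s$-factor if and only if
\[
 e_G(X,Y)\ \ge\ s\bigl(|X|+|Y|-n\bigr)
\]
for all $X\subseteq\mathcal I$ and $Y\subseteq\mathcal O$. This is the max-flow/min-cut condition for the flow network with capacities $s$ at the ports.

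The bad event is the union of the $L_2$ level-1 failure events $\mathcal F^{(1)}_c$, which by Lemma~\ref{lem:success-one}(i) costs at most $L_2e^{-40h}$, together with one further event of probability $e^{-40h}$ that I will now describe. On the complement of the level-1 failures, \eqref{eq:leftovers-one} gives an exact expression for each degree of $G_2$:
\[
 \deg_{G_2}(v)=\sum_{c}\bigl(N_c^{(v)}(T_1)-\rho T_1+a_1\bigr)=N^{(v)}(T_2)-\rho T_2+L_2a_1 .
\]
Here $N^{(v)}(T_2)$ is a sum of Bernoulli variables with mean $\rho T_2$. Chernoff plus a union bound over the $2n$ ports shows that every degree lies within $C\sqrt{T_2h}/2$ of $L_2a_1$, except with probability $\tfrac12 e^{-40h}$. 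This uses $h=\log(Kn/\varepsilon)$, so that $n\le e^{h}$. Hence all degrees are at least $S_2=L_2a_1-a_2$, with slack of order $\sqrt{T_2h}$.

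For the cut condition I split according to the sizes of $X$ and $Y$. The randomized level-2 starters form a subgraph $H\subseteq G_2$ whose edge multiplicities are independent $\Binomial(T_2,\rho\theta/n)$. Each pair therefore has mean about $T_2 a_1/(2nT_1)=L_2a_1/(2n)$, so $H$ carries about half the degree uniformly. Suppose $|X|+|Y|-n=m>0$.

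\textbf{Large sets.} When $|X|$ and $|Y|$ are both large, I lower bound $e_{G_2}(X,Y)$ by $e_H(X,Y)$. Its mean is $|X||Y|L_2a_1/(2n)$, and Chernoff gives concentration. I would then union bound over at most $4^n$ pairs $(X,Y)$. This is affordable because the mean is of order $n\cdot L_2a_1\gg n$, since $L_2a_1\approx C\sqrt{T_2h}\cdot$const grows with $h$.

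\textbf{Unbalanced sets.} When, say, $|Y|$ is small, I use the complementary counting identity
\[
 e(X,Y)=\sum_{y\in Y}\deg(y)-e(\bar X,Y)\ \ge\ |Y|\min\deg - e(\bar X,Y),
\]
with $|\bar X|=n-|X|<|Y|$. I then need $e(\bar X,Y)$ to be small relative to the degree slack. I would bound $e_{G_2}(\bar X,Y)$ using the fact that edge multiplicities of $G_2$ are dominated by $\Binomial(T_2,\rho/n)$ on each pair: every edge of $G_2$ is an arrival, minus served jobs. Chernoff plus a union bound over the $\binom{n}{|Y|}^2$ choices then shows $e(\bar X,Y)\le |\bar X||Y|\rho T_2/n+O\bigl(\sqrt{|\bar X||Y|T_2 h/n}\cdot|Y|\bigr)$.

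The crux, and the main obstacle I expect, is making these two regimes meet with the specific constants $C=108$ and $K=144$. The difficulty is that the raw arrival count $T_2\rho/n$ per pair is far larger than the per-pair level-2 density $L_2a_1/n$. So in the intermediate regime the subtraction argument fails, and the uniform subgraph $H$ must supply $s\cdot m$ edges. I would handle this by checking that $e_H(X,Y)\ge |X||Y|L_2a_1/(4n)\ge S_2 m$ whenever $m\ge$ a small constant fraction of $n$. For smaller $m$, I would use the degree slack $C\sqrt{T_2h}$ together with a refined bound: the part of $e(\bar X,Y)$ coming from level-2 edges of $G_2$ is at most $|\bar X|\max\deg\le |\bar X|(S_2+O(\sqrt{T_2h}))$. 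Combined with $|\bar X|\le |Y|-m$, this yields $e(X,Y)\ge |Y|S_2-(|Y|-m)(S_2+\text{slack})$, which is at least $mS_2$ provided the slack terms are controlled. That control is exactly where the choice of margins must be verified.

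Part (ii) follows from part (i). The quantity $F_2$ is at most $\sum_c S_1\ind_{\mathcal F_{1,c}}+S_2\ind_{\mathcal F^{(2)}}\le T_2\sum_c\ind_{\mathcal F_{1,c}}+T_2\ind_{\mathcal F^{(2)}}$, so in expectation $\E{F_2}\le T_2(L_2e^{-40h}+(L_2+1)e^{-40h})$. Adjusting the bookkeeping so that the level-1 failures are already contained in the bound of (i) gives $\E{F_2}\le T_2(L_2+1)e^{-40h}$. Finally, $T_2(L_2+1)e^{-40h}\le a_2$ holds because $T_2L_2$ is polynomial in $1/\varepsilon$, while $e^{-40h}=(\varepsilon/(Kn))^{40}$, and $a_2\ge C\sqrt{T_2h}\ge1$.
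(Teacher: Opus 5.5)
\textbf{There is a genuine gap: the step that actually produces the $S_2$-factor is left open, and the way you sketch closing it does not work.} Your ingredients are the right ones: exact port degrees on the level-1 success event, Chernoff for those degrees, the uniform subgraph $H$ of level-2 starters, a flow cut criterion, and a union bound charging the $L_2$ level-1 failures. The trouble is that you use degree counting and $H$ in separate regimes, and you assign the regimes backwards. Write $d=L_2a_1$, so $S_2=d-a_2$.

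\emph{The $H$ bound only handles small $m$.} The bound $e_{G_2}(X,Y)\ge e_H(X,Y)\approx|X||Y|d/(2n)$ can exceed $S_2m$ only when $m$ is small compared with $|X||Y|/n$. At $X=\mathcal I$, $Y=\mathcal O$ it gives about $nd/2<nS_2$.

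\emph{Degree counting only handles large $m$.} With all degrees in $[d-b,d+b]$ it gives $e(X,Y)\ge m(d+b)-2b|Y|$. This is at least $S_2m$ iff $m(a_2+b)\ge 2b|Y|$. Your inequality $|Y|S_2-(|Y|-m)(S_2+\mathrm{slack})\ge mS_2$ is false whenever $|Y|>m$, because you discarded the surplus of the minimum degree over $S_2$.

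\emph{The two regimes do not even overlap with your window.} Take $b=C\sqrt{T_2h}/2\approx a_2/2$, and consider $|X|=0.7n$, $|Y|=0.5n$, so $m=0.2n$. Degree counting needs $b\le a_2/4$. Even the full mean $0.175\,nd$ of $e_H(X,Y)$ is below $0.2\,nS_2$, because $d>8a_2$. An either/or split could only be rescued with a much narrower degree window and near-exact concentration of $e_H$, and you supply neither.

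\textbf{The missing idea is to combine both pieces of information in one inequality, applying $H$ to the \emph{complementary} cut.} Your ``refined bound'' is the exact identity
\[
 e(X,Y)=\sum_{i\in X}\deg(i)-\sum_{j\in\mathcal O\setminus Y}\deg(j)+e(\mathcal I\setminus X,\mathcal O\setminus Y)
\]
with the last term replaced by $0$. Keep that term and bound it below by $e_H(\mathcal I\setminus X,\mathcal O\setminus Y)$.

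Set $x=|X|$, $y'=|\mathcal O\setminus Y|$ and $s=d-3b$. A violated cut then forces $e_H(\mathcal I\setminus X,\mathcal O\setminus Y)<2b(2y'-x)$, hence $y'<x<2y'$. By input/output symmetry you may take $x+y'\le n$, so $n-x\ge n/3$. A Chernoff-plus-union-bound estimate over only such pairs gives
\[
e_H(\mathcal I\setminus X,\mathcal O\setminus Y)\ge \frac{d(n-x)y'}{4n}\ge \frac{dy'}{12}\ge 2by',
\]
a contradiction once $d\ge 24b$. This is exactly Lemmas~\ref{lem:protected} and~\ref{lem:randomprotected}.

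To land on $S_2$ via this lemma you need $d-3b\ge S_2$, i.e.\ $b\le a_2/3$, so your window $b\approx a_2/2$ is too wide. The paper takes $b=a_2/9$, and then $a_2^2/(243\rho T_2)\ge 48h$ still fits the $e^{-40h}$ budget.

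Part (ii) is then fine, provided (i) is proved for the enlarged event $\mathcal G^{(2)}=\mathcal F^{(2)}\cup\bigcup_c\mathcal F^{(1)}_c$. In that case $F_2\le(L_2S_1+S_2)\ind_{\mathcal G^{(2)}}\le T_2\ind_{\mathcal G^{(2)}}$. This gives $\E{F_2}\le T_2(L_2+1)e^{-40h}$ directly and replaces your unexplained ``bookkeeping'' step.
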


Lastly, we relate the idle service slots to the number of jobs promoted to level 3. Let $R_2$ be the graph of jobs from the previous level-2 block that remain in the switch, measured immediately after the last level-2 service slot of the current block. These are precisely the jobs promoted to level 3. We denote
\[
     J_2=|E(R_2)|,\qquad D_2^{(v)}=\deg_{R_2}(v).
\]
Thus $J_2$ counts the jobs from the previous level-2 block that enter level 3, and $D_2^{(v)}$ counts these jobs at port $v$.

We can write $D_2^{(v)}$ explicitly as follows. Let $N^{(v)}(T_2)$ denote the total number of job arrivals at port $v$ during the previous level-2 block. Then
\begin{equation}\label{eq:leftovers-two}
    D_2^{(v)}=N^{(v)}(T_2)-\rho T_2+a_2+F_2.
\end{equation}
Indeed, by~\eqref{eq:three-service}, service at levels 1 and 2 removes $L_2S_1+S_2-F_2=\rho T_2-a_2-F_2$ jobs at every port, and no job from this arrival block has yet received level-3 service at the instant defining $R_2$.

\begin{lemma}\label{lem:population-two}
    Starting from an empty switch, for any level-2 arrival block,
    \begin{enumerate}[itemsep=0.5em]
        \item[$\mathrm{(i)}$] For every port $v$, $\mathbb{E}\big[D_2^{(v)}\big]\leq 2a_2$.
        \item[$\mathrm{(ii)}$] $\E{J_2}\le2na_2$.
    \end{enumerate}
\end{lemma}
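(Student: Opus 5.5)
The plan is to take expectations directly in the explicit representation~\eqref{eq:leftovers-two}, in the same way as for Lemma~\ref{lem:population-one}. For part (i), fix a port $v$. The quantity $N^{(v)}(T_2)$ counts all arrivals at port $v$ during the previous level-2 block, regardless of their initial level. Each of the $n$ queues incident to $v$ receives an arrival in each slot with probability $\rho/n$, so $\mathbb{E}[N^{(v)}(T_2)] = \rho T_2$ exactly. Taking expectations in~\eqref{eq:leftovers-two} therefore gives
\[
\mathbb{E}\big[D_2^{(v)}\big] = \mathbb{E}\big[N^{(v)}(T_2)\big]-\rho T_2+a_2+\mathbb{E}[F_2] = a_2+\mathbb{E}[F_2].
\]
Lemma~\ref{lem:success-two}(ii) then gives $\mathbb{E}[F_2]\le a_2$, hence $\mathbb{E}[D_2^{(v)}]\le 2a_2$.

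The one point needing care is that~\eqref{eq:leftovers-two} holds pathwise, including on failure events and in the first blocks after starting from an empty switch. I would verify this by tracking service at port $v$ for jobs of this arrival block. Across the $L_2$ constituent level-1 blocks, level-1 service removes exactly $S_1-F_{1,c}$ jobs at $v$ in block $c$, because a perfect matching from an $S_1$-factor serves every port once per slot. Similarly, level-2 service removes $S_2$ jobs at $v$ on success and $0$ on failure. Summing these and using $L_2S_1+S_2=\rho T_2-a_2$ from~\eqref{eq:three-service} yields the claimed identity. Since $D_2^{(v)}\ge 0$ is a degree, no truncation issues arise. Note also that the idle first blocks when starting empty concern batches that are empty or nonexistent, so they do not affect the arrival block under consideration.

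For part (ii), each job in $R_2$ is an edge with exactly one endpoint in $\mathcal I$. Thus $J_2=\sum_{v\in\mathcal I}D_2^{(v)}$, and summing part (i) over the $n$ input ports gives $\mathbb{E}[J_2]\le 2na_2$. I do not expect a real obstacle here: all the probabilistic difficulty, namely the existence of $S_2$-factors and the bound on $\mathbb{E}[F_2]$, is packaged into Lemma~\ref{lem:success-two}. The only step requiring attention is the pathwise bookkeeping that justifies~\eqref{eq:leftovers-two}.
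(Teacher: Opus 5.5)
Your proposal is correct and matches the paper's proof. You take expectations in \eqref{eq:leftovers-two} using $\E{N^{(v)}(T_2)}=\rho T_2$ and Lemma~\ref{lem:success-two}(ii) to get $a_2+\E{F_2}\le 2a_2$, then sum over the input ports; your pathwise check of \eqref{eq:leftovers-two} via $L_2S_1+S_2=\rho T_2-a_2$ is the same bookkeeping the paper gives when it introduces that identity.
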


The proofs of Lemmas~\ref{lem:success-two} and~\ref{lem:population-two} are in Section~\ref{sec:support-two}.

\subsection{Analysis of level-3 jobs}\label{sec:matching-queue}
We now bound the number of level-3 jobs in the switch. Recall that upon promotion to level 3, a job waits to be batched with all other jobs from its original level-3 arrival block that get promoted to level 3. It waits until just before the first level-3 service slot of the next level-3 block, where its batch is partitioned into matchings. These matchings join the back of the queue of matchings.

To study the evolution of the queue of matchings, we index level-3 arrival blocks by $b=0,1,\ldots$, and let $A_3(b)$ be the arrival process, i.e., the number of matchings formed from the jobs of block $b$ that reach level 3. Since the same rules are used in every block, these counts have the same distribution, and we use $A_3$ to denote a generic count. 
Let $u$ be the time offset of the first level-3 service slot within a level-2 block, so the batch from block $b$ enters the queue at $(b+1)T_3+u$. Let $Q_3(b)$ count the matchings already present just before this addition. Between this addition and the next, there are $S_3=L_3a_2+\varepsilon T_3$ level-3 service slots and no other additions. Therefore, starting from an empty switch, the queue evolves as
\begin{align}\label{eq:matching-recursion}
    Q_3(0)=0,\qquad
    Q_3(b+1)=\big(Q_3(b)+A_3(b)-S_3\big)^+.
\end{align}
We will analyze the maximum queue length $q_3=\sup_{b\geq 0}\E{Q_3(b)}$ in the queue of matchings. This is useful because of the following lemma, which relates the number of level-3 jobs to the number of matchings formed from each arrival block and the number already in the queue.

\begin{lemma}\label{lem:population-three}
Starting from an empty switch,
\begin{equation}\label{eq:level-three-counting}
 \sup_{t\ge0}\E{X_3(t)}
 \le4nL_3a_2+n\big(q_3+\E{A_3}\big).
\end{equation}
\end{lemma}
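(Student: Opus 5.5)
At a fixed slot $t$, I will split the level-3 jobs present by where they sit in the level-3 pipeline. Let $b^*$ be the index of the current level-3 block, i.e.\ the block containing $t$. Every level-3 job present at $t$ falls into one of two groups.

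\emph{Group (a): jobs not yet converted into matchings.} These jobs originally arrived in block $b^*$. They may also have arrived in block $b^*-1$, when $t$ precedes the first level-3 service slot of block $b^*$ (offset $u$). Such a job is a level-3 job only after promotion from level 2, so it is counted in $J_2$ for its level-2 arrival block. These two level-3 blocks together contain at most $2L_3$ level-2 blocks. Lemma~\ref{lem:population-two}(ii) then bounds their expected number by $2L_3\cdot 2na_2=4nL_3a_2$. Missing preceding blocks only reduce the count, exactly as in the proof of Lemma~\ref{lem:queue-length-two}.

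\emph{Group (b): jobs already in the queue of matchings.} Let $b$ be the index of the most recent batch addition at or before $t$. Just after that addition, the queue holds $Q_3(b)+A_3(b)$ matchings. Since then the queue has only been served and nothing new has been added, so the number of matchings present at $t$ is at most $Q_3(b)+A_3(b)$.

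For group (b), each matching is a set of edges that is itself a matching, so it contains at most $n$ jobs. Hence the jobs in group (b) number at most $n\bigl(Q_3(b)+A_3(b)\bigr)$. A partially served front matching only has fewer jobs, which is consistent with this bound.

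Taking expectations, $\E{Q_3(b)}\le q_3$ by definition of $q_3$. Also $\E{A_3(b)}=\E{A_3}$, since the batches from different level-3 blocks are identically distributed (the Remarks after the policy description). Adding the bounds for the two groups gives~\eqref{eq:level-three-counting}, uniformly in $t$.

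I expect the main obstacle to be the bookkeeping at the block boundary, checking that the two groups really cover every level-3 job with nothing omitted. The batch for arrival block $b$ is formed only at $(b+1)T_3+u$. Before that time, jobs from block $b$ (and from $b^*-1$ when $t<b^*T_3+u$) are still unbatched, and group (a) covers them, which is why the count uses two level-3 blocks. A job promoted to level 3 belongs to the batch of its original arrival block. That promotion occurs after the level-2 service of the following level-2 block, which lies inside the level-3 block following its arrival block or inside the arrival block itself. So the job enters its batch before that batch is formed, and it is counted either as unbatched in group (a) or as part of a matching in group (b). If $t$ comes before the first addition, group (b) is empty.
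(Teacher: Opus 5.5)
Your proof is correct and takes essentially the same route as the paper: you split the level-3 jobs into those not yet batched, bounded by $2L_3$ copies of $\E{J_2}\le 2na_2$ via Lemma~\ref{lem:population-two}(ii), and those already in the queue of matchings, bounded by $n$ times the queue content just after the latest addition. The only difference is cosmetic: you index group (b) by the most recent addition at or before $t$, whereas the paper uses $\ell=\lfloor t/T_3\rfloor$ and bounds by $Q_3(\ell-1)+A_3(\ell-1)$ throughout block $\ell$, and both give $n\bigl(q_3+\E{A_3}\bigr)$.
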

\begin{proof}
Fix a slot $t$. We count the level-3 jobs that are waiting to be added to the queue of matchings, as well as the level-3 jobs already in the queue. 

\par\addvspace{\medskipamount}
\noindent\emph{Jobs waiting to be added to the queue of matchings.}
These jobs necessarily got promoted to level-3 during either the current or preceding level-3 block, together containing at most $2L_3$ level-2 blocks. Each such job was promoted after its batch's last level-2 service slot and belongs to the corresponding $R_2$. Thus, counting the jobs in these graphs and applying Lemma~\ref{lem:population-two}(ii) gives an expected number of at most
\(
    (2L_3)\cdot(2na_2)=4nL_3a_2.
\)
Note that this bound also holds when the preceding block may not exist.

\par\addvspace{\medskipamount}
\noindent\emph{Jobs already in the queue of matchings.}
Denote $\ell=\lfloor t/T_3\rfloor$ as the index of the level-3 block that time slot $t$ belongs to. The queue is empty when $\ell=0$, so we assume without loss of generality that $\ell\ge1$. Before time $\ell T_3+u$, there have been no additions or level-3 service slots in the current block, and the queue therefore contains $Q_3(\ell-1)$ matchings. At that time, $A_3(\ell-1)$ matchings are added. Since there are no further additions during the block, the number of matchings can only decrease afterward. Finally, each matching contains at most $n$ jobs, so the expected number of jobs in this group is at most
\[
    n \, \E{Q_3(\ell-1)+A_3(\ell-1)}
    \leq n\big(q_3+\E{A_3}\big).
\]
This bound also holds when $\ell=0$, since the queue is empty.
Adding the two bounds and taking the supremum over $t$ proves~\eqref{eq:level-three-counting}.
\end{proof}

It remains to bound $\E{A_3}$ and $q_3$. We first bound the matching arrivals using the following lemma, proved in Section~\ref{sec:support-three}. 

\begin{lemma}[Matching arrivals]\label{lem:matching-three}
    Let $\delta_3=S_3-\E{A_3}$. Then
    \begin{enumerate}[itemsep=0.5em]
        \item[$\mathrm{(i)}$] $0\le\E{A_3}-L_3a_2\le\varepsilon T_3/2$.
        \item[$\mathrm{(ii)}$] $\Var{A_3}\le34T_3h$.
        \item[$\mathrm{(iii)}$] $\varepsilon T_3/2\le\delta_3\le\varepsilon T_3$.
    \end{enumerate}
\end{lemma}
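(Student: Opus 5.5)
By König's theorem, $A_3$ equals the maximum degree of the level-3 batch. That batch is the union over the $L_3$ constituent level-2 blocks of the graphs $R_{2,c}$. So
\[
A_3=\max_v \sum_{c=1}^{L_3} D_{2,c}^{(v)}.
\]
Using \eqref{eq:leftovers-two}, for each port $v$,
\[
\sum_c D_{2,c}^{(v)} = N^{(v)}(T_3)-\rho T_3 + L_3a_2 + F_3, \qquad F_3:=\sum_{c=1}^{L_3}F_{2,c}.
\]
Here $F_3\ge 0$ does not depend on $v$, since every idle slot is idle at all ports. Hence
\[
A_3 = L_3a_2 + F_3 + \max_v\bigl(N^{(v)}(T_3)-\rho T_3\bigr).
\]
The plan is to bound the maximum of the centered port counts and the term $F_3$ separately.

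\emph{Part (i).} For the lower bound, fix one port $v$. Then $A_3\ge L_3a_2+F_3+N^{(v)}(T_3)-\rho T_3$. Taking expectations, with $\mathbb E N^{(v)}(T_3)=\rho T_3$ and $F_3\ge0$, gives $\E{A_3}\ge L_3a_2$.

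For the upper bound, Lemma~\ref{lem:success-two}(ii) gives $\E{F_3}\le L_3 a_2\cdot$ (a tiny factor). More precisely, $\E{F_3}\le L_3T_2(L_2+1)e^{-40h}$, which is $o(\varepsilon T_3)$ since $e^{-40h}=(\varepsilon/Kn)^{40}$. For the maximum term, each $N^{(v)}(T_3)$ is $\Binomial(nT_3,\rho/n)$, with variance at most $T_3$. A Bernstein/Chernoff bound plus a union bound over $2n$ ports gives
\[
\E{\max_v (N^{(v)}-\rho T_3)^+}\le c\sqrt{T_3 h}+o(1).
\]
Since $T_3\ge K^3h\varepsilon^{-2}$, we have $\sqrt{T_3h}\le \varepsilon T_3/K^{3/2}$. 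With $K=144$ this is well below $\varepsilon T_3/2$, which gives (i).

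\emph{Part (iii).} This follows from (i), since $\delta_3=S_3-\E{A_3}=\varepsilon T_3+L_3a_2-\E{A_3}$.

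\emph{Part (ii).} Write $A_3-L_3a_2 = F_3 + M$, where $M=\max_v(N^{(v)}-\rho T_3)$. Then
\[
\Var{A_3}\le 2\Var{M}+2\E{F_3^2}.
\]
For the second term, use $F_3\le L_3T_2 = T_3$ (crudely, the idle slots are at most the slots). Then $\E{F_3^2}\le T_3\E{F_3}$, which is tiny.

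For $\Var{M}$, use $\Var{M}\le \E{(M-m)^2}$ with $m=0$, so it suffices to bound $\E{M^2}$. For the upper side, $\E{(M^+)^2}\le \E{\max_v ((N^{(v)}-\rho T_3)^+)^2}$. Integrate the union-bounded tail $2n\exp(-x^2/(3T_3))$ (valid for $x\le T_3$, with a linear-exponent tail beyond that). This gives $\le 3T_3\log(2n)+O(T_3)\le c T_3 h$. For the lower side, $M^-\le (N^{(v_0)}-\rho T_3)^-$ for a fixed port $v_0$, so $\E{(M^-)^2}\le T_3$.

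\emph{Main obstacle.} The hard step is getting the explicit constant $34$ from the tail integration. This requires care with the Bernstein regime $x\approx \sqrt{T_3h}\ll T_3$ and with $\log(2n)\le h$. I would track constants as $2\cdot(16T_3h)+2\cdot(\text{lower order})\le 34T_3h$, adjusting the split if needed.
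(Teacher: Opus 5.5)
Your proposal is correct and follows the paper's route: the paper also writes $A_3=L_3a_2+Z_3+F_3$ with $F_3$ the same at every port, bounds $\E{F_3}\le h$ and $\E{F_3^2}\le T_3\E{F_3}$, controls the maximum port fluctuation by Chernoff plus a union bound over the $2n$ ports, and uses $\Var{A_3}\le 2\E{Z_3^2}+2\E{F_3^2}$. The constant $34$ you flag as the main obstacle is not one. The paper gets $34=2\cdot 16+2$ by truncating $\max_v|N^{(v)}(T_3)-\rho T_3|$ at $3\sqrt{T_3h}$ and using the crude bound $nT_3$ above it, while your tail integration already gives roughly $6T_3h+O(T_3)$, comfortably below $34T_3h$.
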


Part (i) directly bounds $\E{A_3}$ in~\eqref{eq:level-three-counting}. To bound $q_3$, we use the following lemma.

\begin{lemma} \label{lem:queue-three}
    Starting from an empty switch,
    \begin{equation}\label{eq:matching-queue-bound}
         q_3
        \le\frac{34h}{\varepsilon}+\frac{\varepsilon T_3}{2}
        \le(34+4K^3)\frac h\varepsilon.
    \end{equation}
\end{lemma}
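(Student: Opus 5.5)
The recursion \eqref{eq:matching-recursion} is a discrete-time Lindley recursion with i.i.d.\ increments $A_3(b)-S_3$. The increments are i.i.d.\ because, as remarked after the policy description, the batches of different level-3 blocks are i.i.d.\ at each level. So the plan is to apply the standard Kingman-type bound for a reflected random walk, using the mean drift and variance from Lemma~\ref{lem:matching-three}. Write $Q=Q_3(b)$, $A=A_3(b)$, and $U=\bigl(Q+A-S_3\bigr)^-$ for the unused service in step $b$. Then
\[
Q_3(b+1)=Q+A-S_3+U, \qquad U\,Q_3(b+1)=0.
\]
The key independence is that $A_3(b)$ depends only on arrivals (and initial-level coins) in level-3 block $b$. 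Meanwhile $Q_3(b)$ is determined by blocks $0,\dots,b-1$, so $A$ is independent of $Q$.

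\textbf{Step 1: drift of a quadratic function.} Square the identity $Q_3(b+1)-U=Q+A-S_3$ and use $U Q_3(b+1)=0$ to get
\[
Q_3(b+1)^2-U^2 = Q^2+2Q(A-S_3)+(A-S_3)^2 .
\]
Take expectations and use independence and $\E{A-S_3}=-\delta_3$. This gives
\[
2\delta_3\E{Q}\le \E{Q^2}-\E{Q_3(b+1)^2}+\E{(A-S_3)^2}.
\]
We also have $\E{(A-S_3)^2}=\Var{A_3}+\delta_3^2$. Since the queue starts empty and we want $\sup_b$ rather than a stationary mean, I will not telescope in time averages. Instead I will use the monotone coupling: starting from $Q_3(0)=0$, $Q_3(b)$ is stochastically increasing in $b$. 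It equals in law $\max_{0\le m\le b}\sum_{i<m}(A_3(i)-S_3)$ by the standard Lindley/Loynes representation, using that the increments are i.i.d. Hence $\E{Q_3(b)}\le \E{Q_3(\infty)}$, where $Q_3(\infty)$ is the all-time supremum of the random walk with negative drift $-\delta_3$. For that stationary version, $\E{Q^2}=\E{Q_3(b+1)^2}$, provided the second moment is finite, which holds by Kingman since the increments have finite variance. This yields
\[
\E{Q_3(\infty)}\le \frac{\Var{A_3}+\delta_3^2}{2\delta_3}.
\]

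\textbf{Step 2: plug in Lemma~\ref{lem:matching-three}.} Part (ii) gives $\Var{A_3}\le 34T_3h$, and part (iii) gives $\varepsilon T_3/2\le\delta_3\le\varepsilon T_3$. Bounding the two terms separately, $\Var{A_3}/(2\delta_3)\le 34T_3h/(\varepsilon T_3)=34h/\varepsilon$ and $\delta_3/2\le \varepsilon T_3/2$. This is exactly the first inequality in \eqref{eq:matching-queue-bound}. For the second, use $T_3\le 8K^3h\varepsilon^{-2}$ from \eqref{eq:three-block-scales}, so $\varepsilon T_3/2\le 4K^3h/\varepsilon$.

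\textbf{Main obstacle.} The only delicate point is justifying the passage from the transient, empty-start queue to the stationary bound. This needs the monotonicity $Q_3(b)\le_{\mathrm{st}}Q_3(\infty)$ and finiteness of $\E{Q_3(\infty)^2}$, so that the quadratic drift cancels. The independence of $A_3(b)$ from $Q_3(b)$ under the block structure is also needed, but it is straightforward. If the stationarity route is awkward, a fallback is the direct Kingman bound $\E{\sup_m W_m}\le \E{(A-S_3)^2}/(2\delta_3)$ for the random walk $W_m=\sum_{i<m}(A_3(i)-S_3)$.
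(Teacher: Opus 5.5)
Your argument is essentially the paper's own: Lemma~\ref{lem:scalar-queue} is proved with the same quadratic drift on the Lindley recursion, applied to a stationary copy with finite second moment and combined with the monotone comparison $Q_3(b)\le\overline{Q}_3(b)$ from the empty start, and then Lemma~\ref{lem:matching-three}(ii)--(iii) and $T_3\le 8K^3h\varepsilon^{-2}$ finish the proof just as you do. Two small repairs are needed: the squared identity should read $Q_3(b+1)^2+U^2=Q^2+2Q(A-S_3)+(A-S_3)^2$ (your displayed inequality then follows by dropping $\E{U^2}\ge 0$, whereas with $-U^2$ it would point the wrong way), and finite variance of the increments alone does not give $\E{Q_3(\infty)^2}<\infty$ (by Kiefer--Wolfowitz that needs a third moment), but here it follows from the boundedness $0\le A_3(b)\le nT_3$, which is what the paper uses.
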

\begin{proof}
Since the switch starts empty, the matching queue is empty before the first addition, so $Q_3(0)=0$. Between two consecutive additions, it has $S_3$ level-3 service slots, all in block $b+1$, and receives no other matchings. Serving one matching in each such slot whenever the queue is nonempty gives~\eqref{eq:matching-recursion}.
Each $A_3(b)$ depends only on the arrivals and initial levels in level-3 block $b$, through the policy's fixed deterministic rules.
Thus the $A_3(b)$ are independent and identically distributed.
Moreover,
\[
 0\le A_3(b)\le\max_vN^{(v)}(T_3)\le nT_3.
\]
We use the following variant of Kingman's bound, which relates the expected queue length to the arrival variance and service slack. Its proof is in Section~\ref{sec:support-three}.

\begin{lemma}[Scalar queue]\label{lem:scalar-queue}
Let $A(m)$ be bounded iid nonnegative random variables with mean $\lambda$ and variance $v$. If $B>\lambda$, set $\delta=B-\lambda$. For
\[
    Q(0)=0,
    ~~~~~~
    Q(m+1)=\big(Q(m)+A(m)-B\big)^+,
\]
we have 
\(
    \sup_{m\geq0} \mathbb{E}\big[ Q(m) \big] \leq v/(2\delta)+\delta/2.
\)
\end{lemma}

Set $\delta_3=S_3-\E{A_3}>0$, where positivity follows from Lemma~\ref{lem:matching-three}(iii). Applying Lemma~\ref{lem:scalar-queue} with $\lambda=\E{A_3}$, $v=\Var{A_3}$, and $B=S_3$, and then the estimates in Lemma~\ref{lem:matching-three}(ii)--(iii), we have
\[
     q_3=\sup_{b\ge0}\E{Q_3(b)}
     \leq \frac{\Var{A_3}}{2\delta_3}+\frac{\delta_3}{2}
     \leq \frac{34T_3h}{\varepsilon T_3}+\frac{\varepsilon T_3}{2}
      =\frac{34h}{\varepsilon}+\frac{\varepsilon T_3}{2}
     \leq (34+4K^3)\frac h\varepsilon.
\]
The last inequality uses $T_3\le8K^3h\varepsilon^{-2}$.
\end{proof}
Substituting the bounds on $\E{A_3}$ and $q_3$ into Lemma~\ref{lem:population-three} gives the desired bound
\begin{equation}\label{eq:level-three-population-bound}
     \sup_{t\geq 0}\E{X_3(t)}
     \leq n \, \big( 5 L_3a_2 + 34h/\varepsilon+\varepsilon T_3\big).
\end{equation}

\subsection{Completing the three-level proof}\label{sec:completion-three}

Combining Lemmas~\ref{lem:queue-length-one} and~\ref{lem:queue-length-two} with~\eqref{eq:level-three-population-bound} in the decomposition~\eqref{eq:stage-decomposition} gives
\begin{equation}\label{eq:three-total-bound}
    \begin{aligned}
     \E{Q_\Sigma(t)}
     &\leq n \big( \, 2\rho T_1+4L_2a_1+5L_3a_2+34h/\varepsilon+\varepsilon T_3 \, \big)
     \\
     &\leq n \big( \, 2T_1 + 4 L_2a_1 + 5 L_3a_2 +(34+8K^3)h/\varepsilon \, \big) \, .
    \end{aligned}
\end{equation}
Each of these terms can be bounded as follows. Using~\eqref{eq:three-blocks},~\eqref{eq:three-block-scales}, and the fact that $a_j<(C+1)\sqrt{T_jh}$, we have
\[
\begin{aligned}
    T_1&\le2Kh\varepsilon^{-8/7},\\
    L_2a_1
    &\le(C+1)L_2\sqrt{T_1h}
     \le(C+1)(2K)^{3/2}h\varepsilon^{-8/7},\\
    L_3a_2
    &\le(C+1)L_3\sqrt{T_2h}
     \le(C+1)(2K)^2h\varepsilon^{-8/7}.
\end{aligned}
\]
Since $C,K$ are fixed and $h/\varepsilon\le h\varepsilon^{-8/7}$, the bound is $O(nh\varepsilon^{-8/7})$, uniformly in $t$.
Finally,
\begin{equation}\label{eq:log-factor}
 h=\log K+\log(n/\varepsilon)
   =\Theta\!\left(\log(n/\varepsilon)\right).
\end{equation}
This proves Theorem~\ref{thm:main} for $k=3$.

\section{Proofs of supporting lemmas for Section~\texorpdfstring{\ref{sec:proof-three}}{}}\label{sec:supporting-proofs}
We present the supporting proofs for Section~\ref{sec:proof-three}, in the same order as the three levels. We first state two lemmas that are used across multiple levels. The first lemma is the key ingredient to show that having our jobs start at a randomized level ensures that $s$-factors exist at all levels $j<k$ with high probability. The second lemma allows us to control the fluctuations of leftover jobs at their respective ports. The proofs of these lemmas are in Appendix~\ref{sec:technical-proofs}.

\begin{lemma} \label{lem:randomprotected}
Let $n\ge2$, let $T$ be a positive integer, and let $q\in[0,1]$. Let $H$ be a bipartite multi-graph with parts $\mathcal I,\mathcal O$, each of size $n$, and independent edge multiplicities $\Binomial(T,q)$. Suppose $nTq\ge d/2$ and $d\ge192\log n$. With probability at least $1-4e^{-d/96}$, it holds for every $X\subseteq\mathcal I$ and $Y\subseteq\mathcal O$ that
\begin{equation}\label{eq:protectedcut}
    e_H(\mathcal I\setminus X,Y)
    \geq \frac{d(n-|X|)|Y|}{4n}
    \quad\text{whenever }0<|X|-|Y|<\min\{|Y|,n-|X|\} \, .
\end{equation}
On this event, for every $b$ with $d\ge24b>0$, every bipartite multi-graph $R$ such that $H\subseteq R$ and
\[
    d-b\leq\deg_R(v)\leq d+b
    \quad\text{for every }v\in\mathcal I\cup\mathcal O
\]
contains an $s$-factor for every nonnegative integer $s\le d-3b$. 
\end{lemma}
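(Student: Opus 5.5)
The proof has two parts: a probabilistic part showing that $H$ satisfies the cut condition \eqref{eq:protectedcut} with high probability, and a deterministic part deriving $s$-factors in any degree-balanced supergraph $R$ via a Gale--Ryser/Ore type criterion.

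\emph{Probabilistic part.} Fix $X\subseteq\mathcal I$ and $Y\subseteq\mathcal O$ with $x=|X|$, $y=|Y|$ in the allowed range $0<x-y<\min\{y,n-x\}$. The count $e_H(\mathcal I\setminus X,Y)$ is a sum of $(n-x)yT$ independent Bernoulli$(q)$ variables. Its mean is $(n-x)yTq\ge d(n-x)y/(2n)$. A multiplicative Chernoff bound with deviation factor $1/2$ then gives failure probability at most
\[
\exp\bigl(-d(n-x)y/(16n)\bigr).
\]
Set $m=\min\{y,n-x\}$. Since $\max\{y,n-x\}\ge n/2$ in the constrained range (because $y+(n-x)>n-(x-y)$ and $x-y<m$), the exponent is at least $dm/(32)$. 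There are at most $\binom{n}{x}\binom{n}{y}\le n^{2m}\cdot(\text{choices})$ relevant pairs; more carefully, count by the smaller side and its complement, giving at most $n^{O(m)}$ pairs. The union bound over $m\ge1$ is
\[
\sum_{m\ge1} \exp\bigl(cm\log n - dm/32\bigr).
\]
Using $d\ge192\log n$, the $\log n$ terms are absorbed and the sum is dominated by its $m=1$ term, of order $e^{-d/96}$. Getting the constants to match exactly ($4e^{-d/96}$) is bookkeeping; split into the cases $y\le n-x$ and $y>n-x$, each contributing $2e^{-d/96}$.

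\emph{Deterministic part.} An $s$-factor in a bipartite multigraph exists iff for all $X\subseteq\mathcal I$, $Y\subseteq\mathcal O$,
\[
e_R(X,\mathcal O\setminus Y)\ \ge\ s\bigl(|X|-|Y|\bigr),
\]
which is the max-flow/min-cut (Ore) condition with unit-capacity-$s$ vertex edges. Equivalently, the condition is $e_R(\mathcal I\setminus X,Y)\ge s(|Y|-|X|)$ up to relabeling; I will work with whichever form matches \eqref{eq:protectedcut}, namely sets with $|X|>|Y|$ after complementing. With $x=|X|$, $y=|Y|$, the degree bounds give
\[
e_R(X,\mathcal O\setminus Y)\ \ge\ x(d-b)-e_R(X,Y)\ \ge\ x(d-b)-y(d+b)+e_R(\mathcal I\setminus X,Y).
\]
Three cases follow.
\begin{itemize}[leftmargin=2em]
\item[(a)] If $x\le y$, the required bound $s(x-y)\le0$ is trivial.
\item[(b)] If $x-y\ge\min\{y,n-x\}$, drop the last term, so the right side is at least $(x-y)d-b(x+y)$. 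Since $x+y\le 3(x-y)$ when $x-y\ge y$, this is at least $(d-3b)(x-y)\ge s(x-y)$. The subcase $x-y\ge n-x$ is symmetric, using the output side's degree counting: write $e_R(X,\mathcal O\setminus Y)\ge (n-y)(d-b)-(n-x)(d+b)$ and use $(n-y)+(n-x)\le3(x-y)$.
\item[(c)] If $0<x-y<\min\{y,n-x\}$, use $e_R(\mathcal I\setminus X,Y)\ge e_H(\mathcal I\setminus X,Y)\ge d(n-x)y/(4n)$. It remains to show
\[
d(n-x)y/(4n)\ \ge\ b(x+y)-(d-s)(x-y)\ \le\ b(x+y)-3b(x-y).
\]
For this it suffices that $d(n-x)y/(4n)\ge 2b\cdot\min\{y,n-x\}\cdot O(1)$. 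As noted, $\max\{y,n-x\}\ge n/2$, so the left side is at least $d\min\{y,n-x\}/8$. Using $b(x+y)\le b(2y+(x-y))\le 3by$ and, symmetrically, a bound in terms of $n-x$ via the output-side form, the inequality follows from $d\ge24b$.
\end{itemize}

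\emph{Main obstacle.} Case (c) is the delicate step: choosing the right side (input or output form) of the deficiency inequality so that the error $b(x+y)$ is controlled by $\min\{y,n-x\}$. The plan is to use the input form when $y\le n-x$ and the output form otherwise. The union bound constants in the probabilistic part are the secondary difficulty.
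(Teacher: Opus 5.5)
Your strategy is the paper's. Both proofs use the max-flow/min-cut criterion, the same degree-counting identity, and a split on $x+y\le n$; your choice of input form when $y\le n-x$ and output form otherwise is exactly the paper's swap $(X,Y)\mapsto(\mathcal O\setminus Y,\mathcal I\setminus X)$. Both also use Chernoff plus a union bound over cuts with $0<x-y<\min\{y,n-x\}$. Your direct cases (a)--(c) unroll the paper's contradiction argument, which shows that a violated cut forces $x<2y$.

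However, the quantitative fact you rely on in both halves is false. The bound $\max\{y,n-x\}\ge n/2$ fails in the constrained range: take $n=9$, $x=5$, $y=4$, where $x-y=1<4=\min\{y,n-x\}$ but $\max\{y,n-x\}=4$. Also, $y+(n-x)=n-(x-y)$ is an identity, not a strict inequality. Only $\max\{y,n-x\}>n/3$ holds. For example, when $x+y\le n$, the condition $x<2y$ gives $3x<2(x+y)\le 2n$, so $n-x>n/3$.

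Because the statement fixes explicit constants, this breaks your arithmetic as written.

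\emph{Deterministic part.} In case (c) the left side is only $>d\min\{y,n-x\}/12$. Bounding the right side by $3by$ (after discarding $-3b(x-y)$) would then need $d\ge 36b$. In the example above with $d=24b$, you get $d(n-x)y/(4n)=32b/3<12b=3by$. The fix is to keep the discarded term. The right side is exactly $b(x+y)-3b(x-y)=2b(2y-x)<2by$ in the input form, and $2b\bigl((n-x)-(x-y)\bigr)<2b(n-x)$ in the output form. Then $d/12\ge 2b$ is precisely $d\ge 24b$.

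\emph{Probabilistic part.} The Chernoff exponent is $dm/48$, not $dm/32$. So $d\ge 192\log n$ absorbs exactly a factor $n^{3m}$, leaving terms $e^{-dm/192}$. That factor is what you get from $\binom nx\binom ny\le n^{3m-1}$ (using complements when $m=n-x$) times at most $m$ choices of the other size. Your ``$n^{O(m)}$'' is therefore too loose. Finally, the bound $2e^{-d/96}$ per case comes from the fact that $m\ge 2$, which holds because $1\le x-y<m$. It does not come from an $m=1$ term; such a term would only give $e^{-d/192}$.

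With these repairs your proof coincides with the paper's.
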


\begin{lemma} \label{lem:matching-moments}
    Let $D$ be a fixed positive constant, and suppose the integer $T$ satisfies
    \begin{equation}\label{eq:terminal-length}
        K^2h\varepsilon^{-2}\le T\le Dh\varepsilon^{-2}.
    \end{equation}
    For a block with $T$ slots, let $N^{(v)}(T)$ count its arrivals at port $v$, and set $Z_T=\max_{v\in\mathcal I\cup\mathcal O}N^{(v)}(T)-\rho T$.
    Suppose $c \geq 0$ is deterministic, and $F$ is a random variable with $0\leq F \leq T$ with $\E{F} \leq h$. 
    Define the quantities
    \[
        A=c+Z_T+F,\qquad B=c+\varepsilon T,\qquad\delta=B-\E{A} .
    \]
    For all sufficiently large $n$,
\begin{itemize}
    \item[$\mathrm{(i)}$] $0\le\E{A}-c\le\varepsilon T/2$.
    \item[$\mathrm{(ii)}$] $\Var{A}\le34Th$.
    \item[$\mathrm{(iii)}$] $\varepsilon T/2\le\delta\le\varepsilon T$.
\end{itemize}
\end{lemma}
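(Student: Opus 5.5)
The plan is to prove the three claims directly from the moments of $Z_T$, treating $F$ as a small perturbation. We have $\E{A}-c=\E{Z_T}+\E{F}$ and $\delta=\varepsilon T-\E{Z_T}-\E{F}$. So (i) and (iii) both follow from the two-sided estimate
\[
0\le \E{Z_T}+\E{F}\le \varepsilon T/2 .
\]

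The lower bound is immediate. Since $\E{N^{(v)}(T)}=\rho T$ for each port $v$, we get $\E{\max_v N^{(v)}(T)}\ge\rho T$, hence $\E{Z_T}\ge 0$; also $F\ge0$.

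For the upper bound we use that each $N^{(v)}(T)\sim\Binomial(nT,\rho/n)$. A Bernstein/Chernoff tail gives, for $x\le T$,
\[
\Pp\bigl(N^{(v)}(T)-\rho T\ge x\bigr)\le \exp\bigl(-x^2/(2(\rho T+x/3))\bigr).
\]
A union bound over the $2n$ ports then yields $\Pp(Z_T\ge x)\le 2n\,e^{-x^2/(3T)}$ in the relevant range. Integrating this tail from $x_0=\sqrt{3T\log(2n)}$ gives
\[
\E{Z_T^+}\le \sqrt{3T\log(2n)}+O(\sqrt T).
\]
We then need $\sqrt{3Th}+h\le \varepsilon T/2$. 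This follows from $T\ge K^2h\varepsilon^{-2}$: indeed $\varepsilon T\ge K\sqrt{Th}$, and $K=144$ is large enough to absorb $\sqrt3$, the additive $h\le \varepsilon T/K^2$, and the lower-order terms for large $n$. The upper bound $T\le Dh\varepsilon^{-2}$ is used only to keep $T$ polynomial in $n/\varepsilon$, so that $\log(2n)\le h$ and the tail integral stays in the sub-Gaussian regime.

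For (ii), write $A-c=Z_T+F$ and use
\[
\Var{A}\le 2\Var{Z_T}+2\Var{F}.
\]
For the first term, $\Var{Z_T}\le\E{Z_T^2}$ because $\E{Z_T}\ge0$ is not needed in that inequality (one could also bound $\E{(Z_T)^2}$ directly). From the same tail bound,
\[
\E{Z_T^2}=\int_0^\infty 2x\,\Pp(|Z_T|\ge x)\,dx\le 3T\log(2n)+O(T).
\]
The lower tail, where $Z_T$ is negative, is also controlled, since $Z_T\ge N^{(v_0)}(T)-\rho T$ for any fixed port $v_0$, and that single binomial has variance at most $T$. For the second term we use $0\le F\le T$, so $\E{F^2}\le T\,\E{F}\le Th$. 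Together these give $\Var{A}\le 2(3Th+O(T))+2Th\le 8Th+o(Th)$, comfortably below $34Th$. The slack in the constant $34$ suggests the authors used cruder tail constants, so we have room.

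The main obstacle is bookkeeping the constants in the maximal tail. We need the Bernstein regime, where $x\lesssim\rho T$, to hold over the range that matters. We also need the far tail, where $x$ is comparable to $T$ and $Z_T\le nT$ trivially, to contribute negligibly; this uses $T\ge h\varepsilon^{-2}\gg h$, so that $e^{-T/\mathrm{const}}\cdot nT$ is $o(1)$.
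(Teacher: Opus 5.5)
Your proposal is correct and follows the paper's route. Both arguments bound $\E{Z_T}$ and $\E{Z_T^2}$ via a Chernoff bound plus a union bound over the $2n$ ports, use $\E{F^2}\le T\E{F}\le Th$, and read off (i) and (iii) from $\E{A}-c=\E{Z_T}+\E{F}$ together with $\varepsilon T\ge K\sqrt{Th}$. The only difference is in how the tail is handled. The paper truncates $V=\max_v|N^{(v)}(T)-\rho T|$ at the single level $3\sqrt{Th}$ and uses the crude bound $V\le nT$ on the rare event, and that is exactly where $T\le Dh\varepsilon^{-2}$ is needed. Your integrated tail never uses that upper bound, so your remark about where it enters is inaccurate, though harmless.
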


\subsection{Proofs for level-1 jobs}\label{sec:support-one}
\begin{proof}[Proof of Lemma~\ref{lem:success-one}]
(i) Recall that $C=108$, $K=144$, and that $a_1$ satisfies
\[
    C\sqrt{T_1h}\leq a_1<(C+1)\sqrt{T_1h}.
\]
Consider a level-1 arrival block, and let $G_1$ be the graph of its jobs that started at level 1. Its edge multiplicities are independent $\Binomial(T_1,\rho(1-\theta)/n)$ variables. Thus, the mean degree at any port is
\[
    \mu_1=\rho(1-\theta)T_1=\rho T_1-a_1/2.
\]
Define the degree and cut events
\[
    \begin{aligned}
        \mathcal D_1
        &=\bigcap_{v\in\mathcal I\cup\mathcal O}
          \bigl\{|\deg_{G_1}(v)-\mu_1|\leq a_1/9\bigr\},\\[4pt]
        \mathcal C_1
        &=\bigcap_{\substack{X\subseteq\mathcal I,\ Y\subseteq\mathcal O\\
          0<|X|-|Y|<\min\{|Y|,n-|X|\}}}
          \left\{e_{G_1}(\mathcal I\setminus X,Y)
          \geq\frac{\mu_1(n-|X|)|Y|}{4n}\right\}.
    \end{aligned}
\]
Suppose for the moment that $\mu_1\geq24(a_1/9)$ and $\mu_1\geq192\log n$; we verify these inequalities below. On $\mathcal D_1\cap\mathcal C_1$, Lemma~\ref{lem:randomprotected} applies with $H=R=G_1$ and $(d,b)=(\mu_1,a_1/9)$. It yields that an $S_1$-factor exists in $G_1$ with high probability, because
\[
    d-3b=\mu_1-a_1/3
    =\rho T_1-5a_1/6
    =S_1+a_1/6\geq S_1.
\]
More specifically, the standard Chernoff bound (Appendix~\ref{sec:chernoff}), combined with a union bound over the $2n$ ports and Lemma~\ref{lem:randomprotected}, yields
\[
    \Pp(\mathcal F^{(1)})
    \leq\Pp(\mathcal D_1^c)+\Pp(\mathcal C_1^c)
    \leq4n\exp\!\left(-\frac{a_1^2}{243\mu_1}\right)
         +4e^{-\mu_1/96}.
\]
Here the Chernoff bound~\eqref{eq:chernoff} is applied with $x=a_1/9\leq\mu_1$.

To complete the proof of (i), it remains to verify the two inequalities assumed above and bound the right-hand side. By~\eqref{eq:three-block-scales},
\[
    \frac{\mu_1}{a_1}
    =\frac{\rho T_1}{a_1}-\frac12
    \geq\frac{\rho\sqrt{T_1/h}}{C+1}-\frac12
    \geq\frac{12\rho}{109\varepsilon^{4/7}}-\frac12
    \to\infty.
\]
In particular, $a_1\leq\rho T_1$ for sufficiently large $n$, so
\[
    \frac{\mu_1}{h}
    \geq\frac{\rho T_1}{2h}
    \geq72\rho\varepsilon^{-8/7}
    \to\infty.
\]
This verifies that $\mu_1\geq24(a_1/9)$ and $\mu_1\geq96\cdot48h\geq192\log n$ for sufficiently large $n$. Since $a_1^2\geq C^2T_1h$, $\mu_1\leq T_1$, and $C^2/243=48$, we obtain
\[
    \Pp(\mathcal F^{(1)})
    \leq4ne^{-48h}+4e^{-48h}
    \leq8e^{-47h}\leq e^{-40h},
\]
where the last two inequalities use $n\leq e^h$ and $h\geq1$.

\par\addvspace{\medskipamount}
\noindent (ii) We now show that $\E{F_1} \leq T_1 e^{-40h} \leq a_1.$ The first inequality uses that $S_1\le T_1$, and so
\[
    \E{F_1}=S_1 \cdot \Pp(\mathcal F^{(1)})\le T_1e^{-40h} .
\]
For the second inequality, we use~\eqref{eq:three-block-scales}, $a_1\ge C\sqrt{T_1h}$, and $e^{-40h}\le e^{-2h}=(\varepsilon/(Kn))^2$ to get
\[
     \frac{T_1e^{-40h}}{a_1}
     \le\frac{\sqrt{2K}}{C}\varepsilon^{-4/7}e^{-40h}
     \le\frac{\sqrt{2K}}{CK^2n^2}\varepsilon^{10/7}\le1.
\]
\end{proof}

\begin{proof}[Proof of Lemma~\ref{lem:population-one}]
Taking expectations in~\eqref{eq:leftovers-one}, with $\E{N^{(v)}(T_1)}=\rho T_1$, and applying Lemma~\ref{lem:success-one} yields
\[
    \mathbb{E} \big[ D_1^{(v)} \big] = a_1+\E{F_1}\leq 2a_1 \, .
\]
This proves (i). Summing over the input ports gives $\E{J_1}\le2na_1$, proving (ii).
\end{proof}

\subsection{Proofs for level-2 jobs}\label{sec:support-two}
\begin{proof}[Proof of Lemma~\ref{lem:success-two}]

(i) Fix one level-2 arrival block, and let $\mathcal J$ index its $L_2$ constituent level-1 blocks. Define the auxiliary event
\[
    \mathcal G^{(2)}=\mathcal F^{(2)}
    \cup
    \bigg( \, \bigcup_{c\in\mathcal J}\mathcal F_c^{(1)} \, \bigg) \, .
\]
Let $R=G_2$ be the graph collected for this block's level-2 service, and let $N^{(v)}$ count its original arrivals at port $v$. Then $N^{(v)}\sim\Binomial(nT_2,\rho/n)$. Set $d_2=L_2a_1$. Summing~\eqref{eq:leftovers-one} over the constituent level-1 blocks gives
\[
 \deg_R(v)=N^{(v)}-\rho T_2+d_2+\sum_{c\in\mathcal J}F_{1,c}.
\]
If level-1 service succeeds for jobs from every block in $\mathcal J$, then exactly $L_2S_1$ jobs have been served at every port, and hence
\begin{equation}\label{eq:three-residual-degree-success}
    \deg_R(v)=N^{(v)}-L_2S_1=d_2+(N^{(v)}-\rho T_2).
\end{equation}
Let $H_2$ be the graph of this block's jobs that started at level 2. These jobs skip level 1, so $H_2\subseteq R$ on every outcome. Its edge multiplicities are independent $\Binomial(T_2,\rho\theta/n)$ variables. Thus, the mean degree at any port is
\[
    \E{\deg_{H_2}(v)}
    =\rho T_2\theta
    =\frac{a_1T_2}{2T_1}
    =\frac{d_2}{2}.
\]
Define the degree and cut events
\[
    \begin{aligned}
        \mathcal D_2
        &=\bigcap_{v\in\mathcal I\cup\mathcal O}
          \bigl\{|N^{(v)}-\rho T_2|\leq a_2/9\bigr\},\\[4pt]
        \mathcal C_2
        &=\bigcap_{\substack{X\subseteq\mathcal I,\ Y\subseteq\mathcal O\\
          0<|X|-|Y|<\min\{|Y|,n-|X|\}}}
          \left\{e_{H_2}(\mathcal I\setminus X,Y)
          \geq\frac{d_2(n-|X|)|Y|}{4n}\right\}.
    \end{aligned}
\]
Suppose for the moment that $d_2\geq24(a_2/9)$, $d_2\geq192\log n$, and $a_2/9\leq\rho T_2$; we verify these inequalities below. If no constituent level-1 service fails and $\mathcal D_2\cap\mathcal C_2$ holds, the factor conclusion of Lemma~\ref{lem:randomprotected} gives an $S_2$-factor because
\[
    d_2-3(a_2/9)=d_2-a_2/3
    =S_2+2a_2/3\geq S_2.
\]
Therefore,
\(
    \mathcal G^{(2)}\subseteq
    \big(\bigcup_{c\in\mathcal J}\mathcal F_c^{(1)}\big)
    \cup\mathcal D_2^c\cup\mathcal C_2^c.
\)
The Chernoff bound (Appendix~\ref{sec:chernoff}), combined with a union bound and Lemmas~\ref{lem:success-one} and~\ref{lem:randomprotected}, yields
\[
        \Pp(\mathcal G^{(2)})
        \leq\sum_{c\in\mathcal J}\Pp(\mathcal F_c^{(1)})
             +\Pp(\mathcal D_2^c)+\Pp(\mathcal C_2^c)
        \leq L_2e^{-40h}
             +4n\exp\!\left(-\frac{a_2^2}{243\rho T_2}\right)
             +4e^{-d_2/96}.
\]
Here the Chernoff bound~\eqref{eq:chernoff} is applied with $x=a_2/9\leq\rho T_2$ at each of the $2n$ ports.

It remains to verify the three inequalities assumed above and bound the right-hand side. By~\eqref{eq:three-block-scales} and the bounds on $a_j$, we have that
\begin{equation}\label{eq:three-factor-parameters}
    \begin{aligned}
        \frac{d_2}{a_2}
        &\geq\frac{C\sqrt{L_2}}{C+1}
         \geq\frac{1296}{109}\varepsilon^{-2/7}>8,\\
        \frac{d_2}{96h}
        &\geq\frac{CL_2\sqrt{T_1/h}}{96}
         \geq1944\varepsilon^{-8/7}\geq48,\\
        \frac{a_2}{9\rho T_2}
        &\leq\frac{C+1}{9\rho}\sqrt{\frac h{T_2}}
         \leq\frac{109\varepsilon^{6/7}}{1296\rho}
         \to0.
    \end{aligned}
\end{equation}
Thus $d_2\geq24(a_2/9)$, $d_2\geq96\cdot48h\geq192\log n$, and $a_2/9\leq\rho T_2$ for sufficiently large $n$. Since $a_2^2\geq C^2T_2h$ and $\rho\leq1$, we obtain
\[
        \Pp(\mathcal G^{(2)})
        \leq L_2e^{-40h}+4ne^{-48h}+4e^{-48h}
        \leq L_2e^{-40h}+8e^{-47h}
        \leq(L_2+1)e^{-40h}.
\]
Since $\mathcal F^{(2)}\subseteq\mathcal G^{(2)}$, this proves the failure bound.

\par\addvspace{\medskipamount}
\noindent (ii) We show that $\E{F_2} \leq T_2(L_2+1)e^{-40h}\le a_2$. Since $L_2S_1+S_2=\rho T_2-a_2\le T_2$, Lemma~\ref{lem:success-one} and the failure bound above yield
\[
\begin{aligned}
    \E{F_2}
    =S_1\sum_{c=1}^{L_2}\Pp(\mathcal F_c^{(1)})
          +S_2 \cdot \Pp(\mathcal F^{(2)})
     \, \le \, 
        \big[ L_2 S_1 + S_2(L_2+1) \big]e^{-40h}
     \, \le \,  T_2(L_2+1)e^{-40h}.
\end{aligned}
\]
To bound the error term, note that
\[
    e^{-40h}\le e^{-2h}
    =\frac{\varepsilon^2}{K^2n^2}
    \le\frac{\varepsilon^2}{K^2},
    \qquad
    L_2+1\le2L_2\le4K\varepsilon^{-4/7}.
\]
By~\eqref{eq:three-block-scales} and $a_2\ge C\sqrt{T_2h}$,
\[
    \frac{T_2(L_2+1)e^{-40h}}{a_2}
    \le\frac{8K^2}{C}\varepsilon^{-10/7}e^{-40h}
     \le\frac8C\varepsilon^{4/7}\le1.
\]
The last inequality uses $C=108$ and $0<\varepsilon<1$. This concludes the proof.
\end{proof}

\begin{proof}[Proof of Lemma~\ref{lem:population-two}]
Taking expectations in~\eqref{eq:leftovers-two}, with $\E{N^{(v)}(T_2)}=\rho T_2$, and applying Lemma~\ref{lem:success-two} gives $\mathbb{E} \big[ D_2^{(v)} \big] = a_2+\E{F_2} \leq 2a_2$, proving (i). Summing over the input ports gives $\E{J_2}\le2na_2$, proving (ii).
\end{proof}

\subsection{Proofs for level-3 jobs}\label{sec:support-three}
\begin{proof}[Proof of Lemma~\ref{lem:matching-three}]
For one original level-3 block, let $D_3^{(v)}$ count its jobs at port $v$ that reach level 3, and set
\[
 Z_3=\max_{v\in\mathcal I\cup\mathcal O}N^{(v)}(T_3)-\rho T_3,
 \qquad F_3=\sum_{c=1}^{L_3}F_{2,c}.
\]
Sum the residual-degree identity~\eqref{eq:leftovers-two} over the $L_3$ constituent level-2 blocks. Since $T_3=L_3T_2$ and $F_3=\sum_{c=1}^{L_3}F_{2,c}$,
\begin{equation}\label{eq:three-terminal-residual}
\begin{aligned}
    D_3^{(v)}
    =N^{(v)}(T_3)-L_3(\rho T_2-a_2)+F_3
    =N^{(v)}(T_3)-\rho T_3+L_3a_2+F_3.
\end{aligned}
\end{equation}
By K\H{o}nig's edge-coloring theorem, the number of matchings is the maximum degree. The same scalar $F_3$ appears at every port, so
\begin{equation}\label{eq:matching-arrivals}
 A_3=\max_vD_3^{(v)}=L_3a_2+Z_3+F_3.
\end{equation}
Since these batches reserve $L_3(L_2S_1+S_2)\le T_3$ service slots in total, $0\le F_3\le T_3$. Lemma~\ref{lem:success-two} gives
\begin{equation}\label{eq:failure-moments}
 \E{F_3}\le L_3T_2(L_2+1)e^{-40h}=T_3(L_2+1)e^{-40h}.
\end{equation}
To bound the last expression, use $L_2+1\le2L_2\le4K\varepsilon^{-4/7}$, $T_3\le8K^3h\varepsilon^{-2}$, and $e^{-40h}\le e^{-6h}=(\varepsilon/(Kn))^6$. Then
\[
 T_3(L_2+1)e^{-40h}
 \le32K^4h\varepsilon^{-18/7}e^{-6h}
 =\frac{32\varepsilon^{24/7}}{K^2n^6}\,h\le h.
\]
The last inequality uses $K=144$, $0<\varepsilon<1$, and $n\ge1$.

Lemma~\ref{lem:matching-moments} applies to~\eqref{eq:matching-arrivals} with $T=T_3$, $D=(2K)^3$, $c=L_3a_2$, and $F=F_3$.
To check its hypotheses,~\eqref{eq:three-block-scales} gives~\eqref{eq:terminal-length}, and the preceding bounds give $0\le F_3\le T_3$ and $\E{F_3}\le h$.

\par\addvspace{\medskipamount}
\noindent (i)--(ii)
The proof follows directly from parts (i) and (ii) of Lemma~\ref{lem:matching-moments}.

\par\addvspace{\medskipamount}
\noindent (iii)
Since $S_3=L_3a_2+\varepsilon T_3$, part (i) gives
\[
 \delta_3=S_3-\E{A_3}
 =\varepsilon T_3-\bigl(\E{A_3}-L_3a_2\bigr)
 \in[\varepsilon T_3/2,\varepsilon T_3].\qedhere
\]
\end{proof}

\begin{proof}[Proof of Lemma~\ref{lem:scalar-queue}]
For any queue length $q\geq0$, independence of $A(m)$ from $Q(m)$ gives
\begin{align*}
    \mathbb{E}\big[ Q(m+1)^2-Q(m)^2\mid Q(m)=q \big] 
    &= \mathbb{E} \big[ \big((q+A(m)-B)^+\big)^2-q^2 \big]
    \\
    &
    \leq
    \mathbb{E} \big[ (q+A(m)-B)^2-q^2 \big]  \\
    &=2q(\lambda-B)+v+(\lambda-B)^2
    = -2\delta q+v+\delta^2.
\end{align*}
Let $\overline Q(m)$ be a queue obeying the same recursion, initialized in stationarity independently of future arrivals. Bounded arrivals and $\lambda<B$ ensure that this stationary distribution exists and has finite second moment.\footnote{To see this, start at $Q(0)=0$ and let $\tau=\inf\{m\geq1:Q(m)=0\}$. Since $X(m)=A(m)-B$ is bounded and $\E{X(m)}=-\delta<0$, some $\eta>0$ satisfies $r:=\E{e^{\eta X(m)}}<1$. On $\{\tau>\ell\}$, the queue stays positive through time $\ell$, so $Q(\ell)=\sum_{m=0}^{\ell-1}X(m)$. Hence, for $\ell\geq1$,
\[
    \Pp(\tau>\ell)
    \leq\Pp\!\left(\sum_{m=0}^{\ell-1}X(m)>0\right)
    \leq r^\ell.
\]
Thus $\E{\tau^3}<\infty$. If $A(m)\leq M$, then $Q(m)\leq Mm$. The stationary distribution obtained from the iid cycles between returns to zero therefore satisfies
\[
    \E{\overline Q(0)^2}
    =\frac{\E{\sum_{m=0}^{\tau-1}Q(m)^2}}{\E{\tau}}
    \leq\frac{M^2\E{\tau^3}}{\E{\tau}}<\infty.
\]
}
Applying the drift inequality to $\overline Q(m)$ and taking expectations gives
\[
    0=\E{\overline Q(m+1)^2-\overline Q(m)^2}
    \leq -2\delta \, \E{\overline Q(m)}+v+\delta^2,
\]
where the equality follows from stationarity. Thus $\E{\overline Q(m)}\leq v/(2\delta)+\delta/2$.

Finally, drive $Q(m)$ and $\overline Q(m)$ by the same arrivals. Since $Q(0)=0\leq\overline Q(0)$ and the recursion is increasing in the queue length, $Q(m)\leq\overline Q(m)$ for every $m$. Therefore,
\[
    \sup_{m\geq 0} \, \E{Q(m)}
    \leq\E{\overline Q(0)}
    \leq\frac{v}{2\delta}+\frac{\delta}{2}. \qedhere
\]
\end{proof}

\section*{Acknowledgments}
Taha Ameen and Weina Wang are supported in part by U.S.\ National Science Foundation (NSF) grants ECCS-2145713, CCF-2403194, CCF-2428569, and ECCS-2432545.

The authors thank Richard Peng for initial discussions on the policy design and for checking some preliminary ideas with GPT.

\bibliographystyle{alpha}
\bibliography{references}

\appendix

\section{Proofs of technical lemmas}\label{sec:technical-proofs}

We prove the technical lemmas stated in Section~\ref{sec:supporting-proofs}. 

\subsection{Chernoff bounds}\label{sec:chernoff}
The following Chernoff bounds are used in the proof of Theorem~\ref{thm:main}.

\begin{lemma} \label{lem:chernoff}
    Let $Z$ be a sum of independent Bernoulli random variables, with mean $\mu=\E{Z}>0$. For $0<x\le\mu$, the following bounds hold:
    \begin{enumerate}
    \item[(i)] \emph{Lower tail}~\cite[Theorem~4.5(2)]{mitzenmacher2017probability}:
    \begin{equation}\label{eq:chernoff-lower}
         \Pp(Z\le\mu-x)\le\exp\!\left(-\frac{x^2}{2\mu}\right).
    \end{equation}
    
    \item[(ii)] \emph{Two-sided deviation}~\cite[Corollary~4.6]{mitzenmacher2017probability}:
    \begin{equation}\label{eq:chernoff}
        \Pp(|Z-\mu|>x)\le2\exp\!\left(-\frac{x^2}{3\mu}\right).
    \end{equation}
    \end{enumerate}
\end{lemma}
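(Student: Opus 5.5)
Both parts are standard, and both go through the moment generating function. Write $Z=\sum_i Z_i$ with $Z_i\sim\Bernoulli(p_i)$ independent and $\mu=\sum_i p_i$.

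\emph{Step 1 (MGF bound).} For any real $\lambda$, independence and $1+y\le e^{y}$ give
\[
\E{e^{\lambda Z}}=\prod_i\bigl(1+p_i(e^\lambda-1)\bigr)\le\exp\bigl(\mu(e^\lambda-1)\bigr).
\]
This holds for both signs of $\lambda$.

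\emph{Step 2 (lower tail, part (i)).} Write $x=\delta\mu$ with $\delta\in(0,1]$. Apply Markov's inequality to $e^{-\lambda Z}$ with $\lambda=-\log(1-\delta)>0$, treating $\delta=1$ as a limit or directly via $\Pp(Z=0)\le e^{-\mu}\le e^{-\mu/2}$. This yields
\[
\Pp(Z\le(1-\delta)\mu)\le\left(\frac{e^{-\delta}}{(1-\delta)^{1-\delta}}\right)^{\mu}.
\]
It then suffices to show $-\delta-(1-\delta)\log(1-\delta)\le-\delta^2/2$ on $[0,1)$. Set $f(\delta)=-\delta-(1-\delta)\log(1-\delta)+\delta^2/2$. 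Then $f(0)=0$, $f'(\delta)=\log(1-\delta)+\delta\le0$, so $f\le0$. This gives the bound $e^{-x^2/(2\mu)}$.

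\emph{Step 3 (upper tail and part (ii)).} For the upper tail, take $\lambda=\log(1+\delta)$ to get
\[
\Pp(Z\ge(1+\delta)\mu)\le\left(\frac{e^{\delta}}{(1+\delta)^{1+\delta}}\right)^{\mu}.
\]
It remains to show $g(\delta)=\delta-(1+\delta)\log(1+\delta)+\delta^2/3\le0$ for $\delta\in[0,1]$. We have $g(0)=0$ and $g'(\delta)=-\log(1+\delta)+2\delta/3$, with $g'(0)=0$. Next, $g''(\delta)=-1/(1+\delta)+2/3$ is negative on $[0,1/2)$ and positive afterwards. So $g'$ first decreases and then increases, and $g'(1)=2/3-\log2<0$. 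Hence $g'\le0$ on $[0,1]$ and $g\le0$.

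Combining the two tails, and noting that the lower-tail bound $e^{-x^2/(2\mu)}\le e^{-x^2/(3\mu)}$, a union bound gives $\Pp(|Z-\mu|>x)\le 2e^{-x^2/(3\mu)}$ for $0<x\le\mu$.

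The only mildly delicate point is the calculus inequality for $g$ on $[0,1]$. This is where the restriction $x\le\mu$ is used. Since these are textbook results (Mitzenmacher--Upfal, Theorem~4.5 and Corollary~4.6), one could instead simply cite them as the statement does.
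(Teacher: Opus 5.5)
Your proof is correct: the MGF bound, the choices $\lambda=\mp\log(1\mp\delta)$, the calculus checks for $f$ and $g$ (including $g'(1)=2/3-\log 2<0$, which is where $x\le\mu$ is needed), the separate handling of $\delta=1$, and the final union bound all go through. The paper gives no argument of its own and simply cites Mitzenmacher--Upfal, whose proof is essentially this same Chernoff/Markov computation, so your proposal reproduces the cited route.
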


\subsection{Factors and decompositions into matchings}\label{sec:graph-tools}\label{sec:randomprotected-proof}

In this section, we give the deterministic graph results used to prove Lemma~\ref{lem:randomprotected} and to decompose the policy's graphs into matchings. The following cut condition is necessary and sufficient for a bipartite multi-graph to contain an $s$-factor.

\begin{lemma}[Cut criterion]\label{lem:cut}
    Let $R$ be a bipartite multi-graph with $|\mathcal I|=|\mathcal O|=n$, and let $s$ be a nonnegative integer. Then $R$ contains an $s$-factor if and only if
    \begin{equation}\label{eq:cutcriterion}
        e_R(X,\mathcal O\setminus Y)\geq s(|X|-|Y|)
        \quad
        \text{for every } X\subseteq\mathcal I,\ Y\subseteq\mathcal O .
    \end{equation}
\end{lemma}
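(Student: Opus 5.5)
The plan is to prove the two directions separately: necessity by a direct degree count, and sufficiency by a max-flow/min-cut argument on an auxiliary network, using integrality of maximum flows to extract the factor.

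\emph{Necessity.} Suppose $R$ contains an $s$-factor $F$, and fix $X\subseteq\mathcal I$ and $Y\subseteq\mathcal O$. Every port of $X$ has degree exactly $s$ in $F$, so $F$ has exactly $s|X|$ edges incident to $X$, counted with multiplicity. Each port of $Y$ also has degree $s$ in $F$, so at most $s|Y|$ of these edges can end in $Y$. The remaining ones end in $\mathcal O\setminus Y$, which gives
\[
e_R(X,\mathcal O\setminus Y)\ge e_F(X,\mathcal O\setminus Y)\ge s|X|-s|Y|.
\]
This is exactly~\eqref{eq:cutcriterion}.

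\emph{Sufficiency.} Build a flow network with the following arcs:
\begin{itemize}
\item from a source $\sigma$ to each $i\in\mathcal I$, with capacity $s$;
\item from each $i\in\mathcal I$ to each $j\in\mathcal O$, with capacity equal to the multiplicity of the edge $(i,j)$ in $R$;
\item from each $j\in\mathcal O$ to a sink $\tau$, with capacity $s$.
\end{itemize}
An integral flow of value $sn$ must saturate every source arc and every sink arc. Reading the flow on each middle arc as a number of parallel copies of that edge then gives a spanning subgraph of $R$ in which every port has degree $s$, i.e.\ an $s$-factor. Conversely, an $s$-factor gives such a flow. Since all capacities are integers, the integral flow theorem provides a maximum flow that is integral. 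It therefore suffices to show that the maximum flow value is $sn$. The cut $\{\sigma\}$ has capacity $sn$, so the maximum flow is at most $sn$, and by max-flow/min-cut it remains to show that every $\sigma$–$\tau$ cut has capacity at least $sn$.

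\emph{Cut computation.} Any $\sigma$–$\tau$ cut has source side $\{\sigma\}\cup X\cup Y$ for some $X\subseteq\mathcal I$ and $Y\subseteq\mathcal O$. The arcs leaving this set are of three kinds:
\begin{itemize}
\item source arcs to $\mathcal I\setminus X$, of total capacity $s(n-|X|)$;
\item middle arcs from $X$ to $\mathcal O\setminus Y$, of total capacity $e_R(X,\mathcal O\setminus Y)$;
\item sink arcs from $Y$, of total capacity $s|Y|$.
\end{itemize}
The cut capacity is therefore
\[
s(n-|X|)+e_R(X,\mathcal O\setminus Y)+s|Y|,
\]
and this is at least $sn$ if and only if $e_R(X,\mathcal O\setminus Y)\ge s(|X|-|Y|)$. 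Under hypothesis~\eqref{eq:cutcriterion} every cut has capacity at least $sn$, so the maximum flow equals $sn$ and an $s$-factor exists.

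No step is a real obstacle. The only points needing care are that parallel edges are handled by taking the middle capacities equal to the multiplicities, and that integrality of a maximum flow is what lets the flow be read off as a subgraph of $R$.
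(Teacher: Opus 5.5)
Your proof is correct and follows the paper's argument: it uses the same source--input--output--sink network with middle capacities equal to the edge multiplicities, the same cut capacity $s(n-|X|)+e_R(X,\mathcal O\setminus Y)+s|Y|$, and integrality of maximum flows. The only difference is that you prove necessity by a direct degree count, whereas the paper obtains both directions at once from the chain ``$s$-factor exists $\iff$ maximum flow equals $sn$ $\iff$ every cut has capacity at least $sn$.''
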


\begin{proof}
We use a max-flow min-cut argument. Construct a flow network by connecting a source to every input with capacity $s$, connecting each input $i\in\mathcal I$ to each output $j\in\mathcal O$ with capacity $e_R(\{i\},\{j\})$, and connecting every output to a sink with capacity $s$. A flow of value $sn$ saturates every arc leaving the source and every arc entering the sink. Hence, an integral flow of value $sn$ corresponds to an $s$-factor of $R$, where the flow on the arc from $i$ to $j$ specifies how many of the parallel edges between $i$ and $j$ are selected.

Consider any source--sink cut whose source side contains the inputs in $X\subseteq\mathcal I$ and the outputs in $Y\subseteq\mathcal O$. Counting the arcs from the source to $\mathcal I\setminus X$, from $X$ to $\mathcal O\setminus Y$, and from $Y$ to the sink, its capacity is
\begin{equation*}
    s\bigl(n-|X|\bigr)+e_R(X,\mathcal O\setminus Y)+s|Y|.
\end{equation*}
Thus, by the max-flow min-cut theorem and the integrality of maximum flows, $R$ contains an $s$-factor if and only if
\[
    s\big(n-|X|\big)+e_R(X,\mathcal O\setminus Y)+s|Y|\geq sn
\]
or equivalently, if and only if
\(
    e_R(X,\mathcal O\setminus Y)\geq s\big(|X|-|Y| \big)
\)
for every $X\subseteq\mathcal I$ and $Y\subseteq\mathcal O$.
\end{proof}
The next lemma justifies serving an $s$-factor in $s$ slots. It also determines the number of matchings added to the queue at the last level.

\begin{lemma}[Decomposition into matchings]\label{lem:decomposition}
    Let $R$ be a bipartite multi-graph with $|\mathcal I|=|\mathcal O|=n$. 
    \begin{enumerate}
        \item[$\mathrm{(i)}$] If $R$ is $s$-regular, it decomposes into $s$ perfect matchings.
        \item[$\mathrm{(ii)}$] If $R$ has maximum degree $M$, it decomposes into $M$ matchings.
    \end{enumerate} 
\end{lemma}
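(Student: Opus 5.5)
Both parts are classical consequences of K\H{o}nig's theorems. I would prove (ii) first by reducing it to the regular case (i), so that (i) is the core step.

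For (i), I will induct on $s$. The case $s=0$ is trivial, since the empty graph is a decomposition into zero matchings. For $s\ge1$, it suffices to show that an $s$-regular bipartite multi-graph $R$ contains a perfect matching, i.e.\ a $1$-factor. Removing that matching leaves an $(s-1)$-regular graph, to which the induction hypothesis applies.

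To find the $1$-factor I will apply the cut criterion, Lemma~\ref{lem:cut}, with $s=1$. I need $e_R(X,\mathcal O\setminus Y)\ge |X|-|Y|$ for all $X\subseteq\mathcal I$ and $Y\subseteq\mathcal O$. By regularity, the edges leaving $X$ number exactly $s|X|$. Of these, at most $s|Y|$ can land in $Y$, because every output in $Y$ has degree $s$. Hence
\[
    e_R(X,\mathcal O\setminus Y)\ge s|X|-s|Y| = s(|X|-|Y|).
\]
This is at least $|X|-|Y|$ whenever $|X|\ge|Y|$ and $s\ge1$, and the inequality is trivial when $|X|<|Y|$. In fact the same computation shows directly that $R$ has an $s'$-factor for every $s'\le s$, so one could alternatively peel off matchings via Lemma~\ref{lem:cut} at each step.

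For (ii), I will embed $R$ into an $M$-regular bipartite multi-graph $R'$ on the same vertex set by adding dummy edges. Let the deficiency of a vertex be $M-\deg_R(v)\ge0$. Both sides have the same total deficiency, since $\sum_{i\in\mathcal I}\deg_R(i)=\sum_{j\in\mathcal O}\deg_R(j)=|E(R)|$. While some input $i$ has positive deficiency, the counting forces some output $j$ to have positive deficiency too, and I add a parallel edge $(i,j)$. Each addition decreases both deficiencies by one, so the process terminates with an $M$-regular graph. Parallel edges are allowed, so no other constraint arises. Applying (i) to $R'$ gives $M$ perfect matchings. Deleting the dummy edges from each one leaves $M$ matchings of $R$, some possibly empty, whose union is $E(R)$.

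The only point needing care is the cut-criterion verification in (i), specifically bounding the edges from $X$ into $Y$ by $s|Y|$ using the output degrees; everything else is routine bookkeeping. I would also note that (ii) is tight, since a vertex of degree $M$ needs $M$ distinct matchings. This justifies the claim in the policy description that the number of matchings formed at level $k$ equals the maximum degree.
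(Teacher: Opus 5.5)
Your proposal is correct and follows the paper's proof essentially step for step: you peel off perfect matchings inductively in (i), and in (ii) you pad degree deficits with dummy edges to reach an $M$-regular graph, then delete those edges from the resulting matchings. The only difference is that you certify the perfect matching through the cut criterion of Lemma~\ref{lem:cut} with $s=1$, using $e_R(X,\mathcal O\setminus Y)\ge s(|X|-|Y|)$, whereas the paper applies Hall's theorem via $s|X|=e_R(X,N_R(X))\le s|N_R(X)|$; the two routes are equivalent, and your use of Lemma~\ref{lem:cut} is not circular because that lemma is proved independently by max-flow min-cut.
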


\begin{proof}
(i) Suppose first that $R$ is $s$-regular with $s>0$. For $X\subseteq\mathcal I$, let $N_R(X)$ be its set of neighbors. Then
\[
    s|X|= e_R\big( X,N_R(X) \big) \leq s|N_R(X)|,
\]
so Hall's theorem gives a perfect matching. Removing this perfect matching leaves an $(s-1)$-regular graph, and iterating gives $s$ perfect matchings. The case $s=0$ is immediate.

\par\addvspace{\medskipamount}
\noindent (ii) For a graph of maximum degree $M$, the total degree deficits on the two sides are the same:
\[
    \sum_{i\in\mathcal I}(M-\deg_R(i))
    =n M - |E(R)|
    =\sum_{j\in\mathcal O}(M-\deg_R(j)).
\]
We may therefore add an edge between an input and an output with positive deficits until all deficits are zero. The resulting graph is $M$-regular, so, by (i), it decomposes into $M$ perfect matchings. Deleting the added edges from these matchings gives the required decomposition of $R$.
\end{proof}

The next lemma gives sufficient conditions for the cut criterion using degree bounds on $R$ and edge bounds on a subgraph $H\subseteq R$. Jobs that start at the level being analyzed provide the subgraph $H$; the conclusion gives an $s$-factor of the full graph $R$.

\begin{lemma} \label{lem:protected}
    Let $H$ and $R$ be bipartite multi-graphs with parts $\mathcal I,\mathcal O$, each of size $n$. Let $d$ and $b$ satisfy $d\geq 24b > 0$. Assume that $H\subseteq R$ and
    \[
        d-b \leq \deg_R(v) \leq d+b
        \quad\text{for every } v\in\mathcal I\cup\mathcal O \, .
    \]
    For $X\subseteq\mathcal I$ and $Y\subseteq\mathcal O$, assume also that
    \begin{equation}\label{eq:deterministic-protectedcut}
        e_H \big(\mathcal I\setminus X,Y \big)
        \geq \frac{d(n-|X|) |Y|}{4n}
        \quad\text{whenever }  0 < |X| - |Y| <\min\{|Y|,n-|X|\}.
    \end{equation}
    Then $R$ contains an $s$-factor for every nonnegative integer $s\leq d-3b$.
\end{lemma}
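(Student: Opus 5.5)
We verify the cut criterion of Lemma~\ref{lem:cut} for $R$ and each fixed $s\le d-3b$: we must show $e_R(X,\mathcal O\setminus Y)\ge s(|X|-|Y|)$ for all $X\subseteq\mathcal I$, $Y\subseteq\mathcal O$. Write $x=|X|$, $y=|Y|$. If $x\le y$ the right side is nonpositive and there is nothing to prove. So assume $x>y$. The plan splits into a \emph{degree-counting} regime, where the degree bounds on $R$ alone suffice, and a \emph{protected} regime $0<x-y<\min\{y,n-x\}$, where we use the edges of $H$ via \eqref{eq:deterministic-protectedcut}.

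\emph{Degree counting.} Edges leaving $X$ number at least $(d-b)x$. Those landing in $Y$ number at most $(d+b)y$. Hence
\[
e_R(X,\mathcal O\setminus Y)\ge (d-b)x-(d+b)y=d(x-y)-b(x+y).
\]
We want this to be at least $(d-3b)(x-y)$, i.e.\ $2b(x-y)\ge b(x+y)$, i.e.\ $x\ge 3y$. So this handles $x-y\ge 2y$, and in particular it handles $x-y\ge y$ up to adjusting constants. We should instead aim directly at $x-y\ge \min\{y,n-x\}$.

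Symmetrically, count via complements. Every output in $\mathcal O\setminus Y$ has degree at least $d-b$, and at most $(d+b)(n-x)$ of those edges come from $\mathcal I\setminus X$. This gives
\[
e_R(X,\mathcal O\setminus Y)\ge (d-b)(n-y)-(d+b)(n-x)=d(x-y)-b(2n-x-y).
\]
This suffices when $2(x-y)\ge 2n-x-y$. Combining the two estimates, we expect that whenever $x-y\ge y$ or $x-y\ge n-x$, one of them, possibly with a sharper bookkeeping using $s\le d-3b$ (for instance, bounding $b(x+y)\le 3b(x-y)$ when $x-y\ge y$ gives $x+y\le 3(x-y)$ exactly), yields the required $s(x-y)$. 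Indeed, $x-y\ge y$ gives $x+y=(x-y)+2y\le 3(x-y)$, so the first estimate gives $\ge(d-3b)(x-y)$. Likewise, $x-y\ge n-x$ gives $2n-x-y=2(n-x)+(x-y)\le 3(x-y)$. This settles the non-protected regime cleanly.

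\emph{Protected regime.} Now $0<x-y<\min\{y,n-x\}$. The key step is to refine the first estimate by also counting edges from $\mathcal I\setminus X$ into $Y$. The edges into $Y$ number at least $(d-b)y$, and those from $\mathcal I\setminus X$ number at least $e_H(\mathcal I\setminus X,Y)\ge d(n-x)y/(4n)$, since $H\subseteq R$. So at most $(d+b)y-d(n-x)y/(4n)$ edges of $R$ go from $X$ into $Y$, and therefore
\[
e_R(X,\mathcal O\setminus Y)\ge d(x-y)-b(x+y)+\frac{d(n-x)y}{4n}.
\]
It remains to show that $\frac{d(n-x)y}{4n}\ge b(x+y)-3b(x-y)=b(4y-2x)$, for which $4by$ suffices. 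Using $d\ge 24b$ we need $(n-x)\ge 2n/3$, which fails if $x$ is large. This is the main obstacle. There I would also use the complementary count. Note that $\min\{y,n-x\}>x-y>0$, so if $n-x$ is small then $x-y$ is small too. In that case I would symmetrize: apply the same trick to outputs $\mathcal O\setminus Y$ versus inputs $\mathcal I\setminus X$, giving the extra term $\frac{d(n-x)y}{4n}\ge b(2(n-x))$-type requirement, i.e.\ $y\ge 2n/3$ after $d\ge24b$. Since one of $y\ge n/2$ or $n-x\ge n/2$ holds (as $y<x$ and so $y+(n-x)\ge\ldots$ need checking), the constant $24$ should be chosen so that whichever of $y$ and $n-x$ is at least about $n/2$ makes the corresponding inequality go through. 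If this case split is too tight, I would combine both refined estimates by averaging them. The slack $3b$ versus $b$ and the factor $24$ should leave enough room.
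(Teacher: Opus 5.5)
Your overall strategy matches the paper's: check the cut criterion of Lemma~\ref{lem:cut}, handle cuts with $x-y\ge\min\{y,n-x\}$ by degree counting alone, and in the protected regime add the $H$-edges $e_H(\mathcal I\setminus X,Y)\ge d(n-x)y/(4n)$. Your two refined estimates are correct, apart from two slips: ``$x\ge 3y$'' should be $x\ge 2y$, and ``at least $(d-b)y$'' should be ``at most $(d+b)y$''.

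\textbf{The gap is that you never close the protected regime, and the route you sketch for it fails as stated.} You weaken the requirement $\frac{d(n-x)y}{4n}\ge b(4y-2x)$ to $\frac{d(n-x)y}{4n}\ge 4by$. With $d\ge 24b$ this needs $n-x\ge 2n/3$, and your symmetric version needs $y\ge 2n/3$. The protected cut $x=0.55n$, $y=0.5n$ satisfies neither. Your fallback dichotomy ``$y\ge n/2$ or $n-x\ge n/2$'' is false, because $y+(n-x)=n-(x-y)<n$; for example, $x=0.6n$, $y=0.45n$ is protected with both quantities below $n/2$. Averaging the two estimates does not help either. The averaged bound requires $\frac{d(n-x)y}{4n}\ge b\bigl(n-3(x-y)\bigr)$, which fails for $y=2$, $x=3$ and large $n$. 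Finally, the constant $24$ is fixed by the statement, so it cannot be ``chosen'' to make a case split work.

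\textbf{The missing step is to keep the $-2x$ term.} Since $x>y$, you have $b(4y-2x)<2by$. So the first refined estimate succeeds once $\frac{d(n-x)}{4n}\ge 2b$, which under $d\ge 24b$ means $n-x\ge n/3$. Similarly, the second refined estimate needs $\frac{d(n-x)y}{4n}\ge 2b(n-x)-2b(x-y)$, which holds once $y\ge n/3$. One of these always holds in the protected regime, since there $x<2y$:
\begin{itemize}
\item if $x+y\le n$, then $3x<2(x+y)\le 2n$, so $n-x>n/3$;
\item if $x+y>n$, then $3y>x+y>n$, so $y>n/3$.
\end{itemize}

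This is exactly the paper's proof. The paper phrases your ``complementary count'' as the relabeling $(X,Y)\mapsto(\mathcal O\setminus Y,\mathcal I\setminus X)$, which reduces to $x+y\le n$. It then argues by contradiction: a violating cut gives $0\le e_H(\mathcal I\setminus X,Y)<2b(2y-x)$. This forces $x<2y$, so the degree-only cases never need separate treatment. It then contradicts $e_H(\mathcal I\setminus X,Y)\ge dy/12\ge 2by$.
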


\begin{proof}
Let $x=|X|$ and $y=|Y|$. By Lemma~\ref{lem:cut}, it suffices to prove
\begin{align} \label{eq:cut-ineq-specific}
    e_R(X,\mathcal O\setminus Y) \geq (d-3b)(x-y)
    \quad\text{for every }X\subseteq\mathcal I,\ Y\subseteq\mathcal O.
\end{align}
Suppose this fails for some $X,Y$; necessarily $x>y$. We may take $x+y\le n$ without loss of generality (otherwise, we can exchange the input and output labels and replace $(X,Y)$ by $(\mathcal O\setminus Y,\mathcal I\setminus X)$.
Both edge counts $e_R(X,\mathcal O\setminus Y)$ and $e_H(\mathcal I\setminus X,Y)$, and condition~\eqref{eq:deterministic-protectedcut}, are unchanged by this transformation).
We obtain
\begin{align}
    0\leq e_H(\mathcal I\setminus X,Y)
     \leq e_R(\mathcal I\setminus X,Y)
    &= e_R(X,\mathcal O\setminus Y)
      -\sum_{i\in X}\deg_R(i)+\sum_{j\in Y}\deg_R(j) \notag \\
    &\stackrel{\mathrm{(a)}}<
      (d-3b)(x-y)-(d-b)x+(d+b)y
      =2b(2y-x), \label{eq:cut-upper-bound}
\end{align}
where (a) uses the degree bounds, and the assumption that~\eqref{eq:cut-ineq-specific} does not hold. Thus,
\[
    x<2y,\qquad
    0<x-y<y \leq n-x,\qquad
    3x<2(x+y)\leq 2n.
\]
Consequently,
\[
    e_H(\mathcal I\setminus X,Y) \overset{\mathrm{(b)}}\ge\frac{d(n-x)y}{4n} \ge\frac{dy}{12} \overset{\mathrm{(c)}}\ge2by
      >2b(2y-x),
\]
where (b) is \eqref{eq:deterministic-protectedcut}, (c) uses $d\ge24b$, and the last inequality uses $x>y>0$. This contradicts~\eqref{eq:cut-upper-bound}.
\end{proof}

We now prove Lemma~\ref{lem:randomprotected}. 

\begin{proof}[Proof of Lemma~\ref{lem:randomprotected}]
For $X\subseteq\mathcal I$, $Y\subseteq\mathcal O$, with $x = |X|$ and $y = |Y|$, define $\mathcal B_{X,Y}$ as the event that $X$ and $Y$ form a \emph{bad cut}, i.e., 
\[
 \mathcal B_{X,Y}
 =\Big\{e_H(\mathcal I\setminus X,Y)
                 <\frac{d(n-x)y}{4n}\Big\}.
\]
Let $\mathcal B_{\le}$ (resp. $\mathcal B_{>}$) denote the event that there is such a bad cut with $0<x-y<\min\{y,n-x\}$ and $x+y\le n$ (resp. $x+y>n$). We bound the two events below.

\par\addvspace{\medskipamount}
\noindent \textit{Case $1$: $x+y\le n$.} Note that the distribution of the number of edges (including multiplicities) between $\mathcal{I}\setminus X$ and $Y$ satisfies
\[
    e_H(\mathcal I\setminus X,Y)
    \sim\Binomial\big(T(n-x)y,q\big) \, ,
    \quad
    \mathbb{E} \big[ e_H(\mathcal I\setminus X,Y) \big]
    =Tq(n-x)y\ge\frac{d(n-x)y}{2n}.
\]
Therefore, by~\eqref{eq:chernoff-lower},
\[
    \Pp(\mathcal B_{X,Y})
    \leq
    \exp\left(-\frac{d(n-x)y}{16n}\right)
    \stackrel{ \mathrm{(a)} }{\leq}
    \exp \big( -dy/48 \big) \, ,
\]
where (a) used that $n-x \geq n/3$. To see why this holds, note that the condition $0 < x - y < \min\{y,n-x\}$ implies that $x < 2y$. Further, since $x+y \leq n$, we have
\(
   3 x < 2(x+y) \leq 2n,  
\)
or equivalently $n-x \geq n/3$.
Since $x,y$ are integers, $y<x<2y$ implies $y\ge2$ and $x+y\le3y-1$. Taking a union bound,
\[
\begin{aligned}
 \Pp(\mathcal B_{\le})
 \leq \sum_{y=2}^{n}  \
       \sum_{\substack{y<x<2y\\x+y\leq n}}
       \binom nx\binom ny e^{-dy/48}
    & \stackrel{\mathrm{(b)}}{\leq}
       \sum_{y=2}^{n}y n^{3y-1}e^{-dy/48}\\
 &\leq \sum_{y\ge2}\exp\big(-(d/48-3\log n)y\big)\\
 &\stackrel{\mathrm{(c)}}{\leq}
       \sum_{y\ge2}e^{-dy/192}
       =\frac{e^{-d/96}}{1-e^{-d/192}}
       \leq 2e^{-d/96} \, .
\end{aligned}
\]
Here (b) uses the fact that $\binom nx\binom ny\leq n^{x+y}\le n^{3y-1}$ for all $x$ in the sum, and that there are at most $y$ choices of $x$. The next step uses $y\leq n$. For (c), $d\geq 192\log n$ yields $d/48-3\log n \geq d/192$. It also yields $e^{-d/192}\leq n^{-1}\leq 1/2$, proving the last inequality.

\par\addvspace{\medskipamount}
\noindent \textit{Case $2$: $x+y>n$.}
Exchange the two sides and replace $(X,Y)$ by $(\mathcal O\setminus Y,\mathcal I\setminus X)$. The transformed sets have total size $(n-y)+(n-x)=2n-x-y<n$, and the graph with its sides exchanged has the same independent binomial edge distribution. Thus
\[
    \Pp(\mathcal B_{>})\leq \Pp(\mathcal B_{\leq}) \leq 2e^{-d/96}.
\]
Finally,
\[
    \Pp\big(\eqref{eq:protectedcut}\text{ fails for some }X,Y\big)
    =\Pp(\mathcal B_{\leq}\cup\mathcal B_{>})
    \leq
    \Pp(\mathcal B_{\leq})+\Pp(\mathcal B_{>})
    \leq
    4e^{-d/96}.
\]
On the complementary event, Lemma~\ref{lem:protected} gives the stated factor conclusion for every $R$ containing $H$ with the required degree bounds.
\end{proof}

\subsection{Matching arrivals and service slack}\label{sec:matching-moments-proof}

To prove Lemma~\ref{lem:matching-moments}, we first bound the mean and second moment of the largest arrival fluctuation across ports.

\begin{lemma}[Maximum port fluctuation]\label{lem:max-fluctuation}
Let $D$ be a fixed positive constant, and suppose the integer $T$ satisfies
\[
 K^2h\varepsilon^{-2}\le T\le Dh\varepsilon^{-2}.
\]
For the arrivals in a $T$-slot block, let $N^{(v)}(T)$ count its arrivals at port $v$, and set $Z_T=\max_{v\in\mathcal I\cup\mathcal O}N^{(v)}(T)-\rho T$.
For all sufficiently large $n$,
\[
 0\le\E{Z_T}\le4\sqrt{Th},\qquad \E{Z_T^2}\le16Th.
\]
\end{lemma}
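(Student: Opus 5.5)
The plan is to write $Z_T=\max_v W_v$ with $W_v=N^{(v)}(T)-\rho T$. Each $N^{(v)}(T)$ is a sum of $nT$ independent $\Bernoulli(\rho/n)$ variables with mean $\mu=\rho T$. The lower bound $\E{Z_T}\ge0$ is immediate, since $Z_T\ge W_{v_0}$ for any fixed port $v_0$ and $\E{W_{v_0}}=0$. For the upper bounds we control the positive and negative parts of $Z_T$ separately, using $Z_T\le Z_T^+$ and $Z_T^2=(Z_T^+)^2+(Z_T^-)^2$.

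\emph{Upper tail.} Set $x_0=\sqrt{3\mu h}\le\sqrt{3Th}$. For $x_0\le x\le\mu$, the two-sided Chernoff bound~\eqref{eq:chernoff} and a union bound over the $2n$ ports give
\[
\Pp(Z_T>x)\le4n\,e^{-x^2/(3\mu)}\le 4n\,e^{-h}\,e^{-(x^2-x_0^2)/(3\mu)}.
\]
Since $e^{-h}=\varepsilon/(Kn)$, the prefactor $4ne^{-h}$ is at most $4\varepsilon/K$, which is tiny. For $x>\mu$ we use the multiplicative Chernoff form, or simply the crude bound $Z_T\le nT$ together with $\Pp(Z_T>\mu)\le 4ne^{-\mu/3}$. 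Because $\mu\ge\rho K^2h\varepsilon^{-2}$ is enormous compared with $h$, this region contributes only $O(1)$ to both moments.

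\emph{Integrating the tail.} We then compute
\[
\E{Z_T^+}\le x_0+\int_{x_0}^\infty\Pp(Z_T>x)\,dx,
\qquad
\E{(Z_T^+)^2}\le x_0^2+\int_{x_0}^\infty 2x\,\Pp(Z_T>x)\,dx.
\]
With the Gaussian-type tail above, the integrals are $O(\sqrt{\mu}\,\varepsilon/K)$ and $O(\mu\,\varepsilon/K)$ respectively. This gives $\E{Z_T^+}\le\sqrt{3}\sqrt{Th}+o(\sqrt{Th})$ and $\E{(Z_T^+)^2}\le 3Th+o(Th)$.

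\emph{Negative part.} The negative part needs $\max_v W_v<0$, which forces every port to be low. It therefore suffices to look at one fixed port: $(Z_T^-)^2\le W_{v_0}^2$, so $\E{(Z_T^-)^2}\le\Var{N^{(v_0)}}\le\rho T\le T\le Th$, using $h\ge1$. Combining, $\E{Z_T^2}\le 3Th+Th+o(Th)\le16Th$ and $\E{Z_T}\le\E{Z_T^+}\le4\sqrt{Th}$ for large $n$.

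The main obstacle is bookkeeping at the boundary of validity of~\eqref{eq:chernoff}, namely the requirement $x\le\mu$, together with making the lower-order terms genuinely negligible. This is where the hypothesis $T\ge K^2h\varepsilon^{-2}$ enters: it guarantees $\mu\gg h$, so the exponentially small contribution from $x>\mu$ is beaten even after multiplying by $nT$ or $(nT)^2$. That is why we need $e^{-\mu/3}(nT)^2\to0$, which holds since $\mu\ge h\varepsilon^{-2}$ and $h\ge\log(n/\varepsilon)$. The upper bound $T\le Dh\varepsilon^{-2}$ is then used only mildly, to keep polynomial factors like $nT$ bounded by a power of $e^{h}$.
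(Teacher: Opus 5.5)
Your proof is correct, and it takes a genuinely different route from the paper's. The paper works with $V=\max_v|N^{(v)}(T)-\rho T|$ and truncates once, at $a=3\sqrt{Th}$, where the union-bounded Chernoff estimate gives $\Pp(V>a)\le 4ne^{-3h}$. On the complement it uses the worst case $V\le nT$, which gives $\E{V^2}\le 9Th+4n^3T^2e^{-3h}$. It then handles the negative part for free via $|Z_T|\le V$, and obtains the first-moment bound from $\E{Z_T}\le\sqrt{\E{Z_T^2}}$. That argument is shorter. However, it is exactly where the hypothesis $T\le Dh\varepsilon^{-2}$ is used: $4n^3T^2e^{-3h}=4T^2\varepsilon^3/K^3$ is only $O(\varepsilon)\,Th$ because of that upper bound.

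You instead split $Z_T$ into positive and negative parts. For the positive part you integrate the Gaussian-type tail above $x_0=\sqrt{3\rho Th}$; the factor $4ne^{-h}=4\varepsilon/K$ makes the tail integrals lower order. You treat $x>\mu$ through $\Pp(Z_T>\mu)\le 4ne^{-\mu/3}$, and you bound $Z_T^-$ by one port's deviation, which is a neat observation. This takes a few more lines but buys sharper constants: roughly $\E{Z_T^2}\le 4Th+o(Th)$ and $\E{Z_T}\le\sqrt{3Th}+o(\sqrt{Th})$.

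Your route also does not need the upper bound on $T$ at all. The map $T\mapsto T^2e^{-\rho T/3}$ is decreasing once $T>6/\rho$, so the $x>\mu$ contribution is already negligible at $T=K^2h\varepsilon^{-2}$. Your remark that $T\le Dh\varepsilon^{-2}$ is used "mildly" can therefore be dropped.

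One cosmetic slip: the $x>\mu$ contribution to the second moment is $4n\cdot(nT)^2e^{-\mu/3}$, not $(nT)^2e^{-\mu/3}$. The extra factor $n$ from the union bound is harmless.
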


\begin{proof}
Set $V=\max_v|N^{(v)}(T)-\rho T|$ and $a=3\sqrt{Th}$.
Each $N^{(v)}(T)$ is binomial with mean $\rho T$.
Provided $a\le\rho T$, the two-sided Chernoff bound and a union bound over the $2n$ ports give
\[
    \Pp(V>a)
    \leq \sum_{v\in\mathcal I\cup\mathcal O}
          \Pp\bigl(|N^{(v)}(T)-\rho T|>a\bigr)
    \leq 4n\exp\!\left(-\frac{a^2}{3\rho T}\right)
    \leq 4ne^{-3h}.
\]
The required condition $a\le\rho T$ holds for all sufficiently large $n$, since
\(
    \frac{a}{\rho T}
    =\frac{3\sqrt h}{\rho\sqrt T}
    \le\frac{3\varepsilon}{\rho K}\to0.
\)
Also, $0\le N^{(v)}(T)\le nT$ implies $V\le nT$ on every outcome.
Splitting according to whether $V\le a$,
\begin{align*}
    \E{V^2}
    =\E{V^2 \cdot \ind_{\{V\le a\}}}
      +\E{V^2 \cdot \ind_{\{V>a\}}}
    & \leq a^2+(nT)^2 \cdot \Pp(V>a)
     \leq 9Th+4n^3T^2e^{-3h}\\
    &\leq \Big(9+\frac{4D\varepsilon}{K^3}\Big) \, Th
     \leq 16Th \, .
\end{align*}
Here the second-last inequality uses $T\le Dh\varepsilon^{-2}$ and $e^{-3h}=(\varepsilon/(Kn))^3$.
The last inequality holds for all sufficiently large $n$ because $D$ is fixed and $\varepsilon\to0$.
Finally, $|Z_T|\le V$ and $Z_T\ge N^{(v)}(T)-\rho T$ for each fixed $v$, so
\begin{align*}
    \E{Z_T^2}&\le\E{V^2}\le16Th,\\
    0=\E{N^{(v)}(T)-\rho T}
    &\le\E{Z_T}\le\sqrt{\E{Z_T^2}}\le4\sqrt{Th}.\qedhere
\end{align*}
\end{proof}

Combining this fluctuation bound with the contribution from failed tests gives the claimed mean, variance, and service slack.

\begin{proof}[Proof of Lemma~\ref{lem:matching-moments}]
\noindent (i)
Taking expectations in the definition of $A$ and using Lemma~\ref{lem:max-fluctuation} and $F\ge0$ gives
\[
\begin{aligned}
    0\le\E{A}-c
    =\E{Z_T}+\E{F}
    &\leq 4\sqrt{Th}+\E{F}
     \leq 4\sqrt{Th}+h\\
    &\le\left(\frac4K+\frac{\varepsilon}{K^2}\right)\varepsilon T
     \le\frac{\varepsilon T}{2}.
\end{aligned}
\]
The second-last inequality follows from $T \geq K^2h\varepsilon^{-2}$, and the last uses $K=144$ and $0<\varepsilon<1$.

\par\addvspace{\medskipamount}
\noindent (ii)
By the definition of $A$ and Lemma~\ref{lem:max-fluctuation},
\[
    \Var{A}
    \leq \E{(A-c)^2}
     =\E{(Z_T+F)^2}
    \leq 2\E{Z_T^2}+2\E{F^2}
    \leq 32Th + 2 \, \E{F^2}.
\]
Since $0\le F\le T$,
\[
    \E{F^2}\le T\E{F}\le Th.
\]
Substituting into the preceding bound gives $\Var{A}\leq 34Th$.

\par\addvspace{\medskipamount}
\noindent (iii)
By (i), we have
\[
    \delta
    =B-\E{A}
    =\varepsilon T-(\E{A}-c)
    \in[\varepsilon T/2,\varepsilon T].\qedhere
\]
\end{proof}

\section{Proof for general \texorpdfstring{$k$}{k}}\label{sec:extension}
Fix $k\ge2$ and consider the policy in Section~\ref{sec:policy-formal}, with $C=108$ and $K=144$. We again bound the number of jobs at each level, followed by the jobs from each arrival block that enter higher levels. All constants in $O_k(\cdot)$ may depend on $k$, but not on $n$, $\rho$, or time. The supporting proofs are in Section~\ref{sec:support-general}.

\subsection{Parameters and notation}\label{sec:parameters-general}
Set $\beta_j=2\alpha_k(1-2^{-j})$ for $1\le j\le k$.
Then $\beta_k=2$, and the block lengths satisfy
\begin{align}\label{eq:block-scales}
    K^jh\varepsilon^{-\beta_j}\le T_j\le(2K)^jh\varepsilon^{-\beta_j},
    \qquad
    T_j = \big( 1+o(1) \big) \, K^j h\varepsilon^{-\beta_j}.
\end{align}
Let $M$ denote a job's initial level. For $j<k$, let $\tau_j$ be the probability that it starts above level $j$, and let $m_j$ be the expected number of such arrivals per port in a level-$j$ block:
\begin{equation}\label{eq:future-reserves}
 \tau_j=\Pp(M>j)=\sum_{r=j}^{k-2}\theta_r,\qquad
 m_j=\rho T_j\tau_j=\frac{T_j}{2}\sum_{r=j}^{k-2}\frac{a_r}{T_r}.
\end{equation}
Empty sums are zero. Also define
\[
 d_1=\rho T_1,\qquad d_j=L_ja_{j-1}\ (2\le j<k),\qquad
 \mu_j=d_j-m_j,\qquad \nu_j=\rho T_j-m_j.
\]
Here $d_j$ is the mean arrival count per port minus the prescribed lower-level removals. The quantity $\mu_j$ also subtracts the mean $m_j$ of jobs starting above level $j$, whereas $\nu_j$ is the mean number of original arrivals with $M\le j$ at a port.

The service identities telescope to give
\begin{equation}\label{eq:planned-removals}
 \sum_{i=1}^{j-1}\frac{T_j}{T_i}S_i=\rho T_j-d_j,
 \qquad \sum_{i=1}^{j}\frac{T_j}{T_i}S_i=\rho T_j-a_j.
\end{equation}

\begin{lemma}[Parameters for the factor tests]\label{lem:parameters-general}
For every $1\le j<k$,
\begin{equation}\label{eq:reserve-bound}
 m_j\le5a_j/9\le2a_j/3.
\end{equation}
For all sufficiently large $n$, the initial-level probabilities are well-defined, the service counts are positive and feasible, and it holds that
\begin{gather}
 \mu_j\ge24(a_j/9),\qquad
 \mu_j-3(a_j/9)\ge S_j,\qquad
 \mu_j\ge96\cdot48h, \qquad
 a_j/9\le\nu_j\le T_j.\label{eq:factor-parameters}
\end{gather}
\end{lemma}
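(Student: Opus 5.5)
The plan is to first pin down the sizes of the auxiliary quantities and then verify each inequality by comparing orders of magnitude in $\varepsilon$. From~\eqref{eq:service-s1}--\eqref{eq:service-sj} we have $C\sqrt{T_jh}\le a_j<C\sqrt{T_jh}+1\le(C+1)\sqrt{T_jh}$, as long as $S_j\ge 0$; positivity of $S_j$ is checked below. Combining this with~\eqref{eq:block-scales} gives $a_r/T_r\asymp\sqrt{h/T_r}\asymp K^{-r/2}\varepsilon^{\beta_r/2}$. Since $\beta_r$ is increasing in $r$, the ratio $a_r/T_r$ is decreasing in $r$. The key comparison is
\[
\frac{a_{r+1}/T_{r+1}}{a_r/T_r}\le \frac{C+1}{C}\,L_{r+1}^{-1/2}\le \frac{C+1}{C}K^{-1/2}\le\frac{109}{108\cdot 12},
\]
which follows from $T_{r+1}=L_{r+1}T_r$ and $L_{r+1}\ge K$. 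Summing the resulting geometric series in the definition~\eqref{eq:future-reserves} gives
\[
m_j\le \frac{T_j}{2}\cdot\frac{a_j}{T_j}\sum_{i\ge0}(109/1296)^i=\frac{a_j}{2}\cdot\frac{1}{1-109/1296}\le \frac{5a_j}{9}.
\]
This proves~\eqref{eq:reserve-bound}. (For $j=k-1$ the sum is empty and $m_{k-1}=0$.) The probabilities $\theta_r=a_r/(2\rho T_r)$ tend to $0$, so $\sum_r\theta_r<1$ for large $n$ and the initial levels are well-defined.

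Next I will check the factor-test inequalities in~\eqref{eq:factor-parameters}. The second one rewrites, by the definition $S_j\le d_j-a_j$ (up to the floor, which only helps), as $\mu_j-a_j/3\ge d_j-a_j$. This is equivalent to $m_j\le 2a_j/3$, which is exactly~\eqref{eq:reserve-bound}. For the first and third inequalities I will show $d_j/a_j\to\infty$ and $d_j/h\to\infty$. Then $\mu_j\ge d_j-2a_j/3$ dominates both $24a_j/9$ and $96\cdot48h$.

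For $j=1$, $d_1/a_1\ge \rho\sqrt{T_1/h}/(C+1)\to\infty$. For $j\ge2$,
\[
\frac{d_j}{a_j}=\frac{L_ja_{j-1}}{a_j}\ge \frac{C}{C+1}\sqrt{L_j}\to\infty,
\]
since $L_j\asymp\varepsilon^{-\alpha_k/2^{j-1}}$. Also $d_j\ge L_jC\sqrt{T_{j-1}h}$ and $T_{j-1}/h\to\infty$, so $d_j/h\to\infty$. For the last chain, note $\nu_j\le\rho T_j\le T_j$ trivially. The lower bound uses $m_j\le 2a_j/3$ together with $a_j/T_j\to0$, so that $\nu_j\ge\rho T_j-2a_j/3\ge a_j/9$ for large $n$.

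Positivity and feasibility of the service counts come next. From $d_j/a_j\to\infty$ we get $S_j=d_j-a_j>0$. Feasibility means that~\eqref{eq:remaining-slots} leaves nonnegative room. This follows from the telescoping identities~\eqref{eq:planned-removals}, which I will prove by induction using $S_1=\rho T_1-a_1$ and $S_j=L_ja_{j-1}-a_j$: the total reserved in a level-$j$ block is $\rho T_j-a_j\le T_j$. The step needing the most care is the uniform geometric decay of $a_r/T_r$. In particular, the additive $+1$ in $a_j$ and the floor must not spoil the constant $5/9$, and I expect the slack between $109/1296$ and the needed ratio $1/10$ to absorb this.
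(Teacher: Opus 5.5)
Your proposal is correct and follows the paper's proof essentially step for step. It uses the same ratio bound $\frac{a_{r+1}/T_{r+1}}{a_r/T_r}\le\frac{C+1}{C\sqrt{L_{r+1}}}\le\frac{109}{1296}$ summed as a geometric series for $m_j$, the same lower bound $d_j/a_j\ge C\sqrt{L_j}/(C+1)$ to get $S_j=d_j-a_j>0$ and the degree/cut inequalities, and the same reduction of $\mu_j-a_j/3\ge S_j$ to $m_j\le 2a_j/3$. Your closing worry is unnecessary: $S_j=d_j-a_j$ holds exactly by the definition of $a_j$, and the $+1$ is already absorbed in $a_j<(C+1)\sqrt{T_jh}$.
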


The proof of Lemma~\ref{lem:parameters-general} is in Section~\ref{sec:support-parameters-general}. Define
\begin{equation}\label{eq:failure-counts-general}
 M_1=1,\qquad M_j=1+L_jM_{j-1}=\sum_{i=1}^j\frac{T_j}{T_i}\quad(2\le j<k).
\end{equation}
Here $M_j$ counts a level-$j$ arrival block and all its constituent lower-level blocks.
A job's arrival block always refers to its original arrival to the switch. Each batch quantity below refers to one arbitrary arrival block; we add an index only when distinguishing several blocks.

\paragraph{The $k$ contributions.}
Let $X_j(t)$ count the level-$j$ jobs at the beginning of slot $t$. These levels partition the jobs present, so
\begin{equation}\label{eq:stage-decomposition-general}
 Q_\Sigma(t)=\sum_{j=1}^kX_j(t).
\end{equation}
We will show that each contribution is $O_k(nh\varepsilon^{-\alpha_k})$.

\subsection{Analysis of level-1 jobs}\label{sec:level-one-general}
Retain the definitions of $G_1$, $F_1$, $R_1$, $J_1$, and $D_1^{(v)}$ from Section~\ref{sec:level-one}, with $R_1$ now including jobs that started at any level above 1, in addition to jobs promoted to level 2. The bounds in Lemmas~\ref{lem:queue-length-one}, \ref{lem:success-one}, and~\ref{lem:population-one}, as well as the identity~\eqref{eq:leftovers-one}, hold unchanged.

The population and residual-count arguments are unchanged. For the edge distribution and failure bound, replace $\theta$ by $\tau_1$ and the mean degree by $\mu_1=\rho T_1-m_1$; Lemma~\ref{lem:parameters-general} supplies the same factor-test and concentration inequalities. The final estimate $T_1e^{-40h}\le a_1$ follows as before from~\eqref{eq:block-scales} and $1<\alpha_k\le4/3$. This also covers $k=2$, when $\tau_1=m_1=0$.

\subsection{Analysis of level-\texorpdfstring{$j$}{j} jobs, \texorpdfstring{$2\le j<k$}{2 <= j < k}}\label{sec:levels-general}
We proceed by induction on $j$, using the bound on $J_{j-1}$ established at the preceding level. First, we bound the number of level-$j$ jobs in the switch. There are no intermediate levels when $k=2$.

\begin{lemma}\label{lem:queue-length-general}
Starting from an empty switch, for every $2\le j<k$,
\[
 \sup_{t\ge0}\E{X_j(t)}\le4nL_ja_{j-1}.
\]
\end{lemma}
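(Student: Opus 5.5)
The plan is to follow the proof of Lemma~\ref{lem:queue-length-two} verbatim at a general intermediate level $j$. We argue by induction on $j$, taking as input the bound $\E{J_{j-1}}\le 2na_{j-1}$ for every level-$(j-1)$ arrival block. Here $R_{j-1}$ and $J_{j-1}=|E(R_{j-1})|$ denote the jobs from a level-$(j-1)$ arrival block that remain just after that block's eligible level-$(j-1)$ service ends. These are the jobs promoted to level $j$ together with those whose initial level exceeds $j-1$. For $j=2$ the input is Lemma~\ref{lem:population-one}(ii), whose extension to general $k$ was noted in Section~\ref{sec:level-one-general}. For $j>2$ it is the analogue of Lemma~\ref{lem:population-two}(ii) established at the preceding level of the induction. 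That analogue follows from the residual identity $D_{j-1}^{(v)}=N^{(v)}(T_{j-1})-\rho T_{j-1}+a_{j-1}+F_{j-1}$, obtained from~\eqref{eq:planned-removals}, combined with the bound $\E{F_{j-1}}\le a_{j-1}$.

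First, fix a slot $t$ and locate the level-$j$ jobs present. By the eligibility rule, a level-$j$ job with $j\le k-1$ is eligible only in the level-$j$ block immediately after the level-$j$ block of its original arrival. By the end of that block's last level-$j$ service slot, it is either served or promoted to level $j+1$. Hence every level-$j$ job present at $t$ originally arrived in the current or the preceding level-$j$ block. These two blocks together contain at most $2L_j$ level-$(j-1)$ arrival blocks.

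Second, we show that each such job is counted in the graph $R_{j-1,c}$ of its constituent level-$(j-1)$ block $c$. A job that started at level $j$ (or above, though those are not level-$j$ jobs) lies in $R_{j-1,c}$ by definition. A job promoted to level $j$ has finished its level-$(j-1)$ eligibility, so it also lies in $R_{j-1,c}$. The one subtlety is that membership in $R_{j-1,c}$ is fixed at a specific instant, whereas a job at level $j$ at time $t$ might have entered level $j$ later. This cannot happen: by the schedule, promotion into level $j$ occurs exactly at that instant, which is the end of the last level-$(j-1)$ service slot of the eligible block. Moreover, no level-$j$ service of the job's batch occurs before it. Thus $X_j(t)\le\sum_c J_{j-1,c}$, where the sum runs over at most $2L_j$ blocks.

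Third, we take expectations and apply the inductive bound to get $\E{X_j(t)}\le 2L_j\cdot 2na_{j-1}=4nL_ja_{j-1}$. The inductive bound applies because each constituent block's distribution is the same, which follows from the fixed-rule remark. If the preceding block does not exist, we simply sum over fewer terms. The main obstacle is not this counting but guaranteeing the inductive input $\E{J_{j-1}}\le 2na_{j-1}$. That requires $\E{F_{j-1}}\le a_{j-1}$, which in turn rests on the level-$(j-1)$ factor test succeeding with high probability via Lemma~\ref{lem:randomprotected} and Lemma~\ref{lem:parameters-general}. This is exactly the content carried by the induction.
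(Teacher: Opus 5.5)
Your proposal is correct and follows the paper's proof essentially step for step. You locate each level-$j$ job present at $t$ in the current or preceding level-$j$ block (at most $2L_j$ level-$(j-1)$ blocks), place it in the corresponding $R_{j-1,c}$, and apply $\E{J_{j-1}}\le 2na_{j-1}$ from Lemma~\ref{lem:population-one}(ii) or from the induction hypothesis (Lemma~\ref{lem:population-general}(ii) at level $j-1$); your timing check that promotion occurs exactly at the instant defining $R_{j-1,c}$, before any level-$j$ service of the batch, just spells out the paper's one-line claim that each such job belongs to $R_{j-1}$.
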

\begin{proof}
Fix a slot $t$. Every level-$j$ job is served or promoted by the last level-$j$ service slot of the next level-$j$ block after its original arrival. Such jobs therefore arrived during the current or preceding level-$j$ block, together containing at most $2L_j$ level-$(j-1)$ blocks. Each belongs to the graph $R_{j-1}$ for its original level-$(j-1)$ arrival block; jobs starting at level $j$ are included because no lower level serves them. Lemma~\ref{lem:population-one}(ii) for $j=2$, and the induction hypothesis for $j>2$, give $\E{J_{j-1}}\le2na_{j-1}$ and hence
\[
 \E{X_j(t)}\le(2L_j)\cdot(2na_{j-1})=4nL_ja_{j-1}.
\]
Counting all jobs in these graphs only increases the bound. At startup, there may be fewer blocks to count.
\end{proof}

Next, we will bound the number of jobs from a level-$j$ arrival block that enter higher levels. Consider the next level-$j$ block, and let $G_j$ be the graph of jobs eligible just before its first level-$j$ service slot. For each constituent level-$(j-1)$ arrival block $c$ of the previous block, let $R_{j-1,c}$ and $F_{j-1,c}$ be its graph of jobs not served at lower levels and its idle service-slot count. Then $G_j$ consists percisely of the jobs in these graphs whose initial levels are at most $j$.

Let $\mathcal F^{(j)}$ be the event that $G_j$ has no $S_j$-factor, and define
\[
 F_j=\sum_{c=1}^{L_j}F_{j-1,c}+S_j\ind_{\mathcal F^{(j)}}.
\]
Thus $F_j$ counts the level-$1,\ldots,j$ service slots for these jobs that stay idle because their factor tests fail. Later levels have fixed service counts, so these missed departures must be included when counting the jobs entering higher levels.

As at level 2 in the three-level proof, the jobs that start at level $j$ provide edges spread across the input--output pairs. Together with degree balance, these jobs ensure an $S_j$-factor with high probability. To account also for lower-level failures, let $\mathcal F_b^{(j)}$ denote the failure for arrival block $b$ and define
\[
 \mathcal G_b^{(1)}=\mathcal F_b^{(1)},\qquad
 \mathcal G_b^{(j)}=\mathcal F_b^{(j)}\cup
 \bigcup_{c\in\mathcal J_b}\mathcal G_c^{(j-1)},
\]
where $\mathcal J_b$ indexes the $L_j$ constituent level-$(j-1)$ blocks.

\begin{lemma}[Level-$j$ failures and idle service slots]\label{lem:success-general}
For every $2\le j<k$, uniformly over the blocks and for all sufficiently large $n$,
\[
 \Pp(\mathcal F_b^{(j)})\le\Pp(\mathcal G_b^{(j)})\le M_je^{-40h},
 \qquad \E{F_j}\le T_jM_je^{-40h}\le a_j.
\]
\end{lemma}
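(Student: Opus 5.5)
The plan is to prove the bound on $\Pp(\mathcal G_b^{(j)})$ by induction on $j$, mirroring the proof of Lemma~\ref{lem:success-two}. The base case $j=1$ is $\Pp(\mathcal G^{(1)})=\Pp(\mathcal F^{(1)})\le e^{-40h}=M_1e^{-40h}$, which is the level-1 bound recalled in Section~\ref{sec:level-one-general}.

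For the inductive step, fix a level-$j$ arrival block $b$ and let $R=G_j$. First I will derive the degree identity of $R$ on the complement of $\bigcup_{c\in\mathcal J_b}\mathcal G_c^{(j-1)}$. On this event every lower-level factor test for jobs of block $b$ succeeds, so every port loses exactly $\sum_{i<j}(T_j/T_i)S_i=\rho T_j-d_j$ jobs by \eqref{eq:planned-removals}. The graph $G_j$ excludes jobs with $M>j$, so
\[
\deg_R(v)=\bigl(N_{\le j}^{(v)}-\nu_j\bigr)+\mu_j,
\]
where $N_{\le j}^{(v)}$ counts the original arrivals at $v$ with $M\le j$. This count is binomial with mean $\nu_j$.

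Next, let $H_j$ be the subgraph of jobs with $M=j$. These skip levels $1,\dots,j-1$, so $H_j\subseteq R$ on every outcome. Its multiplicities are independent $\Binomial(T_j,\rho\theta_{j-1}/n)$, with mean port degree $\rho T_j\theta_{j-1}=a_{j-1}L_j/2=d_j/2\ge\mu_j/2$. I will apply Lemma~\ref{lem:randomprotected} with $H=H_j$, $d=\mu_j$, $b=a_j/9$, and $s=S_j$. The hypotheses $\mu_j\ge 24(a_j/9)$, $\mu_j\ge 96\cdot48h\ge192\log n$, and $S_j\le\mu_j-3(a_j/9)$ are exactly \eqref{eq:factor-parameters}. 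The degree event $\mathcal D_j=\{|N_{\le j}^{(v)}-\nu_j|\le a_j/9\ \forall v\}$ is handled by the two-sided Chernoff bound, using $a_j/9\le\nu_j\le T_j$. Together with $a_j^2\ge C^2T_jh$, this gives $\Pp(\mathcal D_j^c)\le 4ne^{-48h}$. The cut event fails with probability at most $4e^{-\mu_j/96}\le4e^{-48h}$. Therefore
\[
\Pp(\mathcal G_b^{(j)})\le\sum_{c\in\mathcal J_b}\Pp(\mathcal G_c^{(j-1)})+8e^{-47h}\le L_jM_{j-1}e^{-40h}+e^{-40h}=M_je^{-40h}.
\]
The inclusion $\mathcal F_b^{(j)}\subseteq\mathcal G_b^{(j)}$ holds by definition.

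For $\E{F_j}$, I will unroll the recursion for $F_j$. This writes $F_j$ as a sum over all constituent blocks at every level $i\le j$ of $S_i\ind_{\mathcal F^{(i)}}$, with $T_j/T_i$ blocks at level $i$. Each indicator has probability at most $M_ie^{-40h}\le M_je^{-40h}$. Using $\sum_{i\le j}(T_j/T_i)S_i\le T_j$, this yields $\E{F_j}\le T_jM_je^{-40h}$.

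The last step is $T_jM_je^{-40h}\le a_j$. Since $M_j\le 2T_j/T_1\le T_j$ and $a_j\ge C\sqrt{T_jh}\ge1$, it suffices that $T_j^2\le e^{40h}$. From \eqref{eq:block-scales}, $T_j\le(2K)^kh\varepsilon^{-2}$, and $e^{h}=Kn/\varepsilon$ dominates this for large $n$.

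I expect the main obstacle to be the degree identity, specifically keeping track of the jobs with $M>j$. These jobs sit in the $R_{j-1,c}$ but are excluded from $G_j$, and the identity requires that their removal and the planned lower-level removals combine into the mean $\mu_j$ exactly on the success event. I also need to confirm that the Chernoff deviation is taken for the correct binomial, $N_{\le j}^{(v)}$ rather than $N^{(v)}$, and that the resulting bound is uniform over blocks. Since the parameter inequalities are supplied by Lemma~\ref{lem:parameters-general}, the remainder is bookkeeping.
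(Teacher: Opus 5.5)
Your proposal is correct and follows essentially the same route as the paper. It uses the same degree identity on the event that all constituent lower-level tests succeed, the same protected subgraph $H_j$ of jobs with $M=j$, the same application of Lemma~\ref{lem:randomprotected} with $(d,b)=(\mu_j,a_j/9)$, and the same inductive union bound giving $M_je^{-40h}$. The remaining differences are cosmetic. You bound $\E{F_j}$ by linearity over the individual failure indicators (as in the paper's three-level proof) rather than via $F_j\le T_j\ind_{\mathcal G_b^{(j)}}$. You also close with the cruder but valid estimate $T_jM_je^{-40h}\le T_j^2e^{-40h}\le 1\le a_j$ instead of tracking exponents of $\varepsilon$.
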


Lastly, we relate the idle service slots to the jobs entering higher levels. Let $R_j$ be the graph of jobs from the previous level-$j$ block that remain in the switch, measured immediately after the last level-$j$ service slot of the current block, and set
\[
 J_j=|E(R_j)|,\qquad D_j^{(v)}=\deg_{R_j}(v).
\]
Thus $R_j$ includes jobs promoted from level $j$ and jobs that started above level $j$. As for $R_1$, subsequent service does not remove jobs from this graph.

We can write $D_j^{(v)}$ explicitly as follows. Let $N^{(v)}(T_j)$ denote the total number of job arrivals at port $v$ during the previous level-$j$ block. Then
\begin{equation}\label{eq:residual-general}
 D_j^{(v)}=N^{(v)}(T_j)-\rho T_j+a_j+F_j.
\end{equation}
Indeed, by~\eqref{eq:planned-removals}, service at levels $1,\ldots,j$ removes $\sum_{i=1}^j(T_j/T_i)S_i-F_j=\rho T_j-a_j-F_j$ jobs at every port, and no higher-level service has yet been available to these jobs at the instant defining $R_j$.
The following lemma bounds their counts at each port and their total number.

\begin{lemma}\label{lem:population-general}
Starting from an empty switch, for every $2\le j<k$ and every level-$j$ arrival block,
\begin{enumerate}[itemsep=0.5em]
\item[$\mathrm{(i)}$] At every port $v$, $\E{D_j^{(v)}}\le2a_j$.
\item[$\mathrm{(ii)}$] $\E{J_j}\le2na_j$.
\end{enumerate}
\end{lemma}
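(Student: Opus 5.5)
The plan is to mirror the proofs of Lemmas~\ref{lem:population-one} and~\ref{lem:population-two}. Everything follows from the residual-degree identity~\eqref{eq:residual-general} together with the failure bound of Lemma~\ref{lem:success-general}. No new probabilistic estimate is needed; the only work is justifying the identity and checking expectations.

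\emph{Step 1 (the identity).} First I confirm that~\eqref{eq:residual-general} holds on every outcome. By~\eqref{eq:planned-removals}, if every factor test for this block and its constituent lower-level blocks succeeds, then levels $1,\ldots,j$ remove exactly $\sum_{i=1}^{j}(T_j/T_i)S_i=\rho T_j-a_j$ jobs at every port. This is because an $S_i$-factor removes exactly $S_i$ jobs at each port. Each failed test at level $i\le j$ for a block in the nested hierarchy leaves exactly $S_i$ slots idle, and hence removes exactly $S_i$ fewer jobs at every port. The total deficit is therefore $F_j$, which by its recursive definition sums these idle slots over all $M_j$ nested tests.

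The one point to check is that jobs starting above level $j$ are never served at levels $\le j$. They are still counted in $N^{(v)}(T_j)$ and remain in $R_j$. Higher-level service has not touched any job of the block at the defining instant, since these jobs are not yet eligible for it. Hence $D_j^{(v)}=N^{(v)}(T_j)-(\rho T_j-a_j-F_j)$.

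\emph{Step 2 (expectations).} Taking expectations, $\E{N^{(v)}(T_j)}=\rho T_j$, because arrivals at a port over $T_j$ slots are a sum of $nT_j$ Bernoulli$(\rho/n)$ variables, regardless of initial levels. This gives $\E{D_j^{(v)}}=a_j+\E{F_j}$. Lemma~\ref{lem:success-general} then yields $\E{F_j}\le T_jM_je^{-40h}\le a_j$, which proves (i).

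The bound in Lemma~\ref{lem:success-general} is stated uniformly over blocks, including the first blocks after startup. There the service slots are idle because the previous block is empty. This case is harmless: $N^{(v)}=0$ for a nonexistent block, and one can either count such a block as a failure or note that $D_j^{(v)}=0$.

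\emph{Step 3 (total count).} Since $R_j$ is bipartite, $J_j=\sum_{v\in\mathcal I}D_j^{(v)}$. Summing (i) over the $n$ input ports gives $\E{J_j}\le 2na_j$, which proves (ii).

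The only step needing care is Step 1: ensuring that $F_j$ captures every missed removal at every port uniformly, including lower-level failures inside constituent blocks. The recursive definitions of $F_j$ and $M_j$ are set up exactly to make this bookkeeping work. The inductive use of Lemma~\ref{lem:success-general} at level $j$ relies on the level-$(j-1)$ statement only through that lemma's own proof.
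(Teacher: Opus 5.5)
Your proposal is correct and follows the paper's proof. Take expectations in the residual identity~\eqref{eq:residual-general} with $\E{N^{(v)}(T_j)}=\rho T_j$, bound $\E{F_j}\le a_j$ by Lemma~\ref{lem:success-general}, and sum over the $n$ input ports for (ii). Your Step~1 re-derives the identity that the paper justifies just before the lemma via~\eqref{eq:planned-removals}. Your startup remark is unnecessary, because every actual arrival block has its factor tests run in the blocks that follow it.
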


The proofs of Lemmas~\ref{lem:success-general} and~\ref{lem:population-general} are in Section~\ref{sec:support-levels-general}.

\subsection{Analysis of level-\texorpdfstring{$k$}{k} jobs}\label{sec:terminal-general}
We now bound the number of level-$k$ jobs in the switch. Upon promotion to level $k$, a job waits to be batched with all other jobs from its original level-$k$ arrival block that reach level $k$. Just before the first level-$k$ service slot of the next level-$k$ block, these jobs are partitioned into matchings, which join the back of the queue of matchings.

To study this queue, index level-$k$ arrival blocks by $b=0,1,\ldots$, and let $A_k(b)$ be the number of matchings formed from the jobs of block $b$ that reach level $k$. Since the same rules are used in every block, these counts have the same distribution, and we use $A_k$ to denote a generic count.
Let $u_k$ be the offset of the first level-$k$ service slot within a level-$(k-1)$ block, where $0<u_k<T_{k-1}$. The batch from block $b$ enters the queue at $(b+1)T_k+u_k$. Let $Q_k(b)$ count the matchings already present just before this addition. Between this addition and the next, there are precisely $S_k=L_ka_{k-1}+\varepsilon T_k$ level-$k$ service slots and no other additions. Therefore, starting from an empty switch, the queue evolves as
\begin{equation}\label{eq:matching-recursion-general}
 Q_k(0)=0,\qquad Q_k(b+1)=\bigl(Q_k(b)+A_k(b)-S_k\bigr)^+.
\end{equation}
We will bound $q_k=\sup_{b\ge0}\E{Q_k(b)}$. As in the three-level proof, the following lemma relates the number of level-$k$ jobs to the number of matchings formed from each arrival block and the number already in the queue.

\begin{lemma}\label{lem:population-terminal-general}
Starting from an empty switch,
\begin{equation}\label{eq:level-k-counting}
 \sup_{t\geq 0} \, \E{X_k(t)}
 \le4nL_ka_{k-1}+n\bigl(q_k+\E{A_k}\bigr).
\end{equation}
\end{lemma}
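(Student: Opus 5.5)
The plan mirrors the proof of Lemma~\ref{lem:population-three}. Fix a slot $t$ and split the level-$k$ jobs present at the beginning of $t$ into two groups. The first group consists of jobs that have reached level $k$ but whose batch has not yet been split into matchings. The second group consists of jobs already sitting in the queue of matchings. The two groups are bounded separately, and then the bounds are added and the supremum is taken over $t$.

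\emph{Jobs waiting to be added.} A job whose original arrival lies in level-$k$ block $b$ joins the queue of matchings at time $(b+1)T_k+u_k$. So a level-$k$ job that is still waiting at slot $t$ arrived during the current or the preceding level-$k$ block. Together these contain at most $2L_k$ level-$(k-1)$ arrival blocks. Each such job reached level $k$ by promotion after the last level-$(k-1)$ service slot of its batch, since no job starts at level $k$. It therefore belongs to the graph $R_{k-1}$ of its original level-$(k-1)$ arrival block, and counting all edges of these graphs can only overcount. The bound $\E{J_{k-1}}\le 2na_{k-1}$ is Lemma~\ref{lem:population-general}(ii) when $k\ge3$, and Lemma~\ref{lem:population-one}(ii) when $k=2$. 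It gives an expected count of at most $(2L_k)(2na_{k-1})=4nL_ka_{k-1}$. At startup there are fewer blocks to count, which only helps.

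\emph{Jobs in the queue of matchings.} Let $\ell=\lfloor t/T_k\rfloor$. If $\ell=0$, the queue is empty. Suppose $\ell\ge1$. Within block $\ell$, the queue receives exactly one addition, the batch of $A_k(\ell-1)$ matchings at time $\ell T_k+u_k$. Just before this addition it holds $Q_k(\ell-1)$ matchings, and after it the count only decreases until the next block. Hence at any slot in block $\ell$ the queue holds at most $Q_k(\ell-1)+A_k(\ell-1)$ matchings. A matching has at most $n$ edges, so the expected number of jobs in the queue is at most $n\,\E{Q_k(\ell-1)+A_k(\ell-1)}\le n(q_k+\E{A_k})$. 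Here $\E{A_k(\ell-1)}=\E{A_k}$ because the batches are identically distributed. Adding the two bounds gives~\eqref{eq:level-k-counting}.

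The only delicate step is the timing claim behind the first group. It requires that every waiting level-$k$ job lies in an $R_{k-1}$ from a block that arrived within the last two level-$k$ blocks. It also requires that the positions of additions relative to block boundaries, with offset $0<u_k<T_{k-1}$, are as described. This should follow directly from the eligibility rules of the policy.
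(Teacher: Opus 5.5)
Your proposal is correct and follows essentially the same argument as the paper. It uses the same split into jobs still awaiting batching, bounded by $2L_k$ copies of the bound $2na_{k-1}$ on the expected size of $R_{k-1}$, and jobs already in the queue of matchings, bounded by $n$ times $Q_k(\ell-1)+A_k(\ell-1)$ with $\ell=\lfloor t/T_k\rfloor$, including the same handling of $k=2$ and of startup.
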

\begin{proof}
Fix a slot $t$. We count the level-$k$ jobs waiting to be added to the queue of matchings, as well as the jobs already in the queue.

\par\addvspace{\medskipamount}
\noindent\emph{Jobs waiting to be added to the queue of matchings.}
These jobs arrived during the current or preceding level-$k$ block, together containing at most $2L_k$ level-$(k-1)$ blocks. Every such job belongs to its original level-$(k-1)$ block's $R_{k-1}$. Lemma~\ref{lem:population-one}(ii) for $k=2$, and Lemma~\ref{lem:population-general}(ii) for $k\ge3$, therefore bound their expected number by
\[
 (2L_k)\cdot(2na_{k-1})=4nL_ka_{k-1}.
\]
Counting all jobs in these graphs only increases the bound, and there are fewer blocks to count if the preceding block does not exist.

\par\addvspace{\medskipamount}
\noindent\emph{Jobs already in the queue of matchings.}
Let $\ell=\lfloor t/T_k\rfloor$. The queue is empty when $\ell=0$. For $\ell\ge1$, before time $\ell T_k+u_k$, there have been no additions or level-$k$ service slots in the current block, so the queue contains $Q_k(\ell-1)$ matchings. At that time, $A_k(\ell-1)$ matchings are added. Since there are no further additions during the block, the number of matchings can only decrease afterward. Each matching contains at most $n$ jobs, so the expected number of jobs in this group is at most
\[
 n\E{Q_k(\ell-1)+A_k(\ell-1)}\le n\bigl(q_k+\E{A_k}\bigr).
\]
The same bound holds for $\ell=0$, since the queue is empty. Adding the two bounds and taking the supremum over $t$ proves~\eqref{eq:level-k-counting}.
\end{proof}

It remains to bound $\E{A_k}$ and $q_k$. We use the following two lemmas, whose proofs are in Section~\ref{sec:support-terminal-general}.

\begin{lemma}[Matching arrivals and service slack]\label{lem:matching-general}
Let $\delta_k=S_k-\E{A_k}$. Then
\begin{enumerate}[itemsep=0.5em]
\item[$\mathrm{(i)}$] $0\le\E{A_k}-L_ka_{k-1}\le\varepsilon T_k/2$.
\item[$\mathrm{(ii)}$] $\Var{A_k}\le34T_kh$.
\item[$\mathrm{(iii)}$] $\varepsilon T_k/2\le\delta_k\le\varepsilon T_k$.
\end{enumerate}
\end{lemma}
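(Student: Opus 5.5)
The plan mirrors the proof of Lemma~\ref{lem:matching-three}: write $A_k$ in the form required by Lemma~\ref{lem:matching-moments}, then read off (i)--(iii).

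\emph{Step 1: the residual degrees at level $k$.} Fix an original level-$k$ arrival block with its $L_k$ constituent level-$(k-1)$ blocks. For a port $v$, let $D_k^{(v)}$ be the number of jobs from this block, at $v$, that reach level $k$, and set
\[
F_k:=\sum_{c=1}^{L_k}F_{k-1,c}.
\]
Summing the residual identity~\eqref{eq:residual-general} with $j=k-1$ (or~\eqref{eq:leftovers-one} when $k=2$) over the constituent blocks, and using $T_k=L_kT_{k-1}$, gives
\[
D_k^{(v)}=N^{(v)}(T_k)-\rho T_k+L_ka_{k-1}+F_k.
\]
No job starts at level $k$, so every job counted here was promoted; hence $R_{k-1}$ consists exactly of the jobs reaching level $k$.

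\emph{Step 2: the number of matchings.} By Lemma~\ref{lem:decomposition}(ii), the number of matchings equals the maximum degree. The same scalar $F_k$ appears at every port, so
\[
A_k=\max_v D_k^{(v)}=L_ka_{k-1}+Z_{T_k}+F_k.
\]
This is exactly the form $A=c+Z_T+F$ of Lemma~\ref{lem:matching-moments}, with $T=T_k$, $c=L_ka_{k-1}\ge0$, and $F=F_k$. Moreover $B=c+\varepsilon T_k=S_k$ by~\eqref{eq:service-bk}.

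\emph{Step 3: checking the hypotheses of Lemma~\ref{lem:matching-moments}.}
\begin{itemize}
\item \emph{Block length.} By~\eqref{eq:block-scales} with $\beta_k=2$, we have $K^kh\varepsilon^{-2}\le T_k\le(2K)^kh\varepsilon^{-2}$. Since $K^k\ge K^2$, condition~\eqref{eq:terminal-length} holds with $D=(2K)^k$.
\item \emph{Range of $F_k$.} $F_k$ counts idle reserved lower-level service slots for these batches. There are $L_k\sum_{i\le k-1}(T_{k-1}/T_i)S_i\le T_k$ such slots by~\eqref{eq:planned-removals}, so $0\le F_k\le T_k$.
\item \emph{Mean of $F_k$.} By Lemma~\ref{lem:success-general} (or Lemma~\ref{lem:success-one} when $k=2$),
\[
\E{F_k}\le L_kT_{k-1}M_{k-1}e^{-40h}=T_kM_{k-1}e^{-40h}.
\]
\end{itemize}

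\emph{Main obstacle: showing $T_kM_{k-1}e^{-40h}\le h$ for general $k$.} I would bound
\[
M_{k-1}=\sum_{i\le k-1}T_{k-1}/T_i\le k\,T_{k-1}\le k(2K)^kh\varepsilon^{-2},
\]
and use $T_k\le(2K)^kh\varepsilon^{-2}$. This gives
\[
T_kM_{k-1}e^{-40h}\le k(2K)^{2k}h^2\varepsilon^{-4}e^{-40h}.
\]
Now use $e^{-40h}=(\varepsilon/(Kn))^{40}$ and $h\le e^{h}=Kn/\varepsilon$. The right-hand side is then at most $h\cdot k(2K)^{2k}\varepsilon^{35}(Kn)^{-39}$, which is at most $h$ for all sufficiently large $n$ because $k$ is fixed. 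The constants depend on $k$ only through this step, which is why I would check it carefully.

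\emph{Step 4: conclusion.} Parts (i) and (ii) follow directly from Lemma~\ref{lem:matching-moments}(i)--(ii). For (iii), write
\[
\delta_k=S_k-\E{A_k}=\varepsilon T_k-\bigl(\E{A_k}-L_ka_{k-1}\bigr),
\]
which lies in $[\varepsilon T_k/2,\varepsilon T_k]$ by (i).
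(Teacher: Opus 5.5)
Your proposal is correct and follows the paper's proof essentially step for step. It uses the same summed residual identity, the same representation $A_k=L_ka_{k-1}+Z_k+F_k$ via K\H{o}nig's theorem, and the same application of Lemma~\ref{lem:matching-moments} with $D=(2K)^k$, $c=L_ka_{k-1}$ and $F=F_k$, and it reads (iii) off from (i). The only difference is in showing $\E{F_k}\le h$: you use the crude bound $M_{k-1}\le kT_{k-1}$ and spend more of the $e^{-40h}$ factor, whereas the paper uses the sharper $M_{k-1}\le(k-1)T_k/T_1=O_k(\varepsilon^{-1})$ with only $e^{-6h}$. Both estimates are valid.
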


Part (i) directly bounds $\E{A_k}$ in~\eqref{eq:level-k-counting}. To bound $q_k$, we use the scalar queue bound $q_k\le\Var{A_k}/(2\delta_k)+\delta_k/2$, which requires the variance estimate in (ii) and the bound on service slack in (iii). This gives the following lemma.

\begin{lemma}[Queue of matchings]\label{lem:queue-general}
The variables $A_k(b)$ are iid and satisfy $0\le A_k(b)\le nT_k$. Starting from an empty switch,
\begin{equation}\label{eq:matching-queue-bound-general}
 q_k\le\frac{34h}{\varepsilon}+\frac{\varepsilon T_k}{2}
 \le\left(34+\frac{(2K)^k}{2}\right)\frac h\varepsilon.
\end{equation}
\end{lemma}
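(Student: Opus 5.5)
The proof follows the three-level argument of Lemma~\ref{lem:queue-three}. There are three steps: establish the structural facts about $A_k(b)$, apply the scalar queue bound, and then simplify.

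\emph{Step 1: iid and boundedness.} By the remarks after the policy description, every choice (factors, decompositions, order) is made by fixed rules. So $A_k(b)$ is a deterministic function of the arrivals and initial levels of the jobs whose original arrival lies in level-$k$ block $b$. These inputs are independent across disjoint blocks of slots and have the same distribution in every block. Hence the $A_k(b)$ are iid.

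One point needs care: a promotion or service decision for jobs of block $b$ must not depend on jobs from other blocks. Each batch at level $j<k$ consists only of jobs whose original arrivals lie in one level-$j$ arrival block. These blocks are nested inside a single level-$k$ block, so this holds.

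For the bound, K\H{o}nig's theorem (Lemma~\ref{lem:decomposition}(ii)) gives
\[
0\le A_k(b)=\max_v D_k^{(v)}\le \max_v N^{(v)}(T_k)\le nT_k ,
\]
since the jobs of block $b$ reaching level $k$ are a subset of that block's arrivals. The recursion~\eqref{eq:matching-recursion-general} is already established in the text above the statement.

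\emph{Step 2: scalar queue bound.} Apply Lemma~\ref{lem:scalar-queue} with $\lambda=\E{A_k}$, $v=\Var{A_k}$ and $B=S_k$. Lemma~\ref{lem:matching-general}(iii) gives $\delta_k\ge\varepsilon T_k/2>0$, so $B>\lambda$ and the lemma applies. This yields
\[
q_k\le \frac{\Var{A_k}}{2\delta_k}+\frac{\delta_k}{2}.
\]
Now insert $\Var{A_k}\le 34T_kh$ from Lemma~\ref{lem:matching-general}(ii), together with $\varepsilon T_k/2\le\delta_k\le\varepsilon T_k$ from part (iii). The first term is at most $34T_kh/(\varepsilon T_k)=34h/\varepsilon$. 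The second is at most $\varepsilon T_k/2$. This is the first inequality in~\eqref{eq:matching-queue-bound-general}.

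\emph{Step 3: simplification.} By~\eqref{eq:block-scales} with $\beta_k=2$, we have $T_k\le(2K)^kh\varepsilon^{-2}$. Therefore $\varepsilon T_k/2\le \tfrac{(2K)^k}{2}\,h/\varepsilon$, which is the second inequality.

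The only step needing real care is the independence claim in Step~1. Everything else is immediate from Lemmas~\ref{lem:scalar-queue} and~\ref{lem:matching-general}, which may be assumed.
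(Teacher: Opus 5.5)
Your proof is correct and follows the paper's own argument. You get the iid property from the fixed rules and the fact that every batch stays inside one level-$k$ arrival block, bound $A_k(b)\le\max_v N^{(v)}(T_k)\le nT_k$, and then combine Lemma~\ref{lem:scalar-queue} with Lemma~\ref{lem:matching-general}(ii)--(iii) and $T_k\le(2K)^kh\varepsilon^{-2}$. The only difference is that the paper briefly re-derives the recursion~\eqref{eq:matching-recursion-general} inside its proof, whereas you cite it from the preceding text; that is fine.
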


Consequently, substituting the bounds on $\E{A_k}$ and $q_k$ into Lemma~\ref{lem:population-terminal-general} gives
\begin{equation}\label{eq:level-k-population-bound}
 \sup_{t\ge0}\E{X_k(t)}
 \le n\bigl(5L_ka_{k-1}+34h/\varepsilon+\varepsilon T_k\bigr).
\end{equation}

\subsection{Completing the proof for general \texorpdfstring{$k$}{k}}
Combining Lemmas~\ref{lem:queue-length-one} and~\ref{lem:queue-length-general} with~\eqref{eq:level-k-population-bound} in the decomposition~\eqref{eq:stage-decomposition-general} gives
\[
 \E{Q_\Sigma(t)}
 \le n\left(2T_1+4\sum_{j=2}^{k-1}L_ja_{j-1}
       +5L_ka_{k-1}+\frac{34h}{\varepsilon}+\varepsilon T_k\right).
\]
The intermediate-level sum is empty when $k=2$. Finally, $\beta_k=2$ and
\[
 \frac{\alpha_k}{2^{j-1}}+\frac{\beta_{j-1}}2=\alpha_k
 \qquad(2\le j\le k).
\]
Together with~\eqref{eq:block-scales}, these identities give
\[
 \begin{aligned}
 T_1&=O_k(h\varepsilon^{-\alpha_k}),\qquad
 \varepsilon T_k=O_k(h/\varepsilon),\\
 L_ja_{j-1}
 &=O_k\!\left(\varepsilon^{-\alpha_k/2^{j-1}}\sqrt{T_{j-1}h}\right)
 =O_k(h\varepsilon^{-\alpha_k})\quad(2\le j\le k).
 \end{aligned}
\]
Since $k$ is fixed, $h/\varepsilon\le h\varepsilon^{-\alpha_k}$, and~\eqref{eq:log-factor} holds,
\[
 \sup_{t\ge0}\E{Q_\Sigma(t)}
 =O_k\!\left(n\log(n/\varepsilon)\varepsilon^{-\alpha_k}\right).
\]
This proves Theorem~\ref{thm:main} for every fixed $k\ge2$.

\subsection{Supporting proofs for general~\texorpdfstring{$k$}{}} \label{sec:support-general}
We present the supporting proofs in the same order as the levels, using the factor and matching-arrival estimates in Section~\ref{sec:supporting-proofs}. First, we verify the parameter bounds used in the factor tests.

\subsubsection{Parameters for the factor tests}\label{sec:support-parameters-general}
\begin{proof}[Proof of Lemma~\ref{lem:parameters-general}]
\noindent\emph{Jobs starting above level $j$.}
Since $C\sqrt{T_jh}\le a_j<(C+1)\sqrt{T_jh}$,
\[
 \frac{a_{j+1}/T_{j+1}}{a_j/T_j}
 \le\frac{C+1}{C\sqrt{L_{j+1}}}
 \le\frac{109}{1296}<\frac1{10}\qquad(1\le j\le k-2).
\]
Substituting into~\eqref{eq:future-reserves} gives
\[
 m_j=\frac{T_j}{2}\sum_{r=j}^{k-2}\frac{a_r}{T_r}
 \le\frac{a_j}{2}\sum_{r=0}^{\infty}10^{-r}
 =\frac{5a_j}{9}\le\frac{2a_j}{3}.
\]

\par\addvspace{\medskipamount}
\noindent\emph{Validity of the parameters.}
By~\eqref{eq:reserve-bound} and~\eqref{eq:block-scales},
\[
 \sum_{r=1}^{k-2}\theta_r=\tau_1
 \le\frac{5a_1}{9\rho T_1}\to0,
 \qquad \frac{d_1}{a_1}=\frac{\rho T_1}{a_1}\to\infty.
\]
For $2\le j<k$,
\[
 \frac{d_j}{a_j}
 =\frac{L_ja_{j-1}}{a_j}
 \ge\frac{C\sqrt{L_j}}{C+1}
 \ge\frac{1296}{109}>8.
\]
Thus the initial-level distribution is valid and $S_j=d_j-a_j>0$ for every $j<k$ once $n$ is sufficiently large.
By~\eqref{eq:remaining-slots}, each allocation leaves $\varepsilon T_j+a_j>0$ slots, so the service counts are feasible.

\par\addvspace{\medskipamount}
\noindent\emph{Parameters for the degree and cut estimates.}
Using~\eqref{eq:reserve-bound} and $d_j/a_j>8$,
\[
\begin{aligned}
 \mu_j=d_j-m_j
 &\ge d_j-\frac{2a_j}{3}>\frac{22a_j}{3}\ge24(a_j/9),\\
 \mu_j-3(a_j/9)
 &=d_j-m_j-\frac{a_j}{3}\ge d_j-a_j=S_j.
\end{aligned}
\]
Moreover, $\mu_j\ge d_j/2$, and~\eqref{eq:block-scales} gives
\[
 \frac{\mu_j}{h}\ge\frac{d_j}{2h}\to\infty,
 \qquad
 \frac{\nu_j}{a_j}\ge\frac{\rho T_j}{a_j}-\frac23\to\infty.
\]
Together with $\nu_j\le\rho T_j\le T_j$, these limits prove the remaining inequalities in~\eqref{eq:factor-parameters}. Since $k$ is fixed, they hold simultaneously for all $j<k$.
\end{proof}

\subsubsection{Proofs for level-\texorpdfstring{$j$}{j} jobs, \texorpdfstring{$2\le j<k$}{2 <= j < k}}\label{sec:support-levels-general}
\begin{proof}[Proof of Lemma~\ref{lem:success-general}]
Fix $2\le j<k$ and a level-$j$ arrival block $b$. Let $N_{v,j}$ count its original arrivals at port $v$ whose initial level is at most $j$, and let $R$ be the graph of its current level-$j$ jobs at their service test. Then
\[
 N_{v,j}\sim\Binomial(nT_j,\rho(1-\tau_j)/n),\qquad \E{N_{v,j}}=\nu_j.
\]
If all lower-level services succeed, their removals concern jobs with $M\le j$, so
\[
    \deg_R(v)=N_{v,j}-(\rho T_j-d_j)=\mu_j+(N_{v,j}-\nu_j).
\]
Let $H_j$ consist of the arrivals with $M=j$. These jobs start at level $j$ and wait for this batch's test, so $H_j\subseteq R$ on every outcome. In particular, lower-level failures do not change their level. Its edge multiplicities are independent $\Binomial(T_j,\rho\theta_{j-1}/n)$ variables, with
\[
    \mathbb{E} \big[\deg_{H_j}(v) \big]=\rho T_j\theta_{j-1}=d_j/2\ge\mu_j/2.
\]
Let $\mathcal D_j$ be the event that $|N_{v,j}-\nu_j|\le a_j/9$ at every port, and let $\mathcal C_j$ be the cut event~\eqref{eq:protectedcut} for $H_j$ with $d=\mu_j$.
As in Lemmas~\ref{lem:success-one} and~\ref{lem:success-two}, the residual-degree identity and the factor conclusion of Lemma~\ref{lem:randomprotected} show that $\mathcal D_j\cap\mathcal C_j$, together with success at all lower levels, implies an $S_j$-factor.
The standard Chernoff bound (Appendix~\ref{sec:chernoff}), combined with a union bound over the $2n$ ports and Lemma~\ref{lem:randomprotected}, yields
\[
     \Pp(\mathcal D_j^c)+\Pp(\mathcal C_j^c)
     \leq 4n\exp\!\left(-\frac{a_j^2}{243\nu_j}\right)+4e^{-\mu_j/96}
     \leq 4ne^{-48h}+4e^{-48h}\le e^{-40h}.
\]
Lemma~\ref{lem:parameters-general} verifies the hypotheses of~\eqref{eq:chernoff} and Lemma~\ref{lem:randomprotected}. It also gives $\nu_j \leq T_j$ and $\mu_j/96\ge48h$; the last line uses $a_j^2 \geq C^2T_jh$, $C^2/243=48$, $n\le e^h$, and $h\geq 1$.
As in the proof of Lemma~\ref{lem:success-two},
\[
     \mathcal G_b^{(j)}\subseteq
     \left(\bigcup_{c\in\mathcal J_b}\mathcal G_c^{(j-1)}\right)
     \cup\mathcal D_j^c\cup\mathcal C_j^c.
\]
Lemma~\ref{lem:success-one} gives $\Pp(\mathcal G_b^{(1)})\le e^{-40h}$. Induction and the union bound therefore yield
\[
     \Pp(\mathcal G_b^{(j)})
     \leq L_jM_{j-1}e^{-40h}+e^{-40h}=M_je^{-40h}.
\]
Since $\mathcal F_b^{(j)}\subseteq\mathcal G_b^{(j)}$, this proves the probability bounds. No independence between the degree, cut, and earlier failure events is needed.

For a level-$j$ block $b$, every failure indicator in $F_j$ is at most $\ind_{\mathcal G_b^{(j)}}$. Unrolling the definition of $F_j$ and using~\eqref{eq:planned-removals} gives
\[
 0\le F_j\le(\rho T_j-a_j)\ind_{\mathcal G_b^{(j)}}
 \le T_j\ind_{\mathcal G_b^{(j)}},
 \qquad \E{F_j}\le T_jM_je^{-40h}.
\]
Since $M_j\le jT_j/T_1$ and $a_j\ge C\sqrt{T_jh}$,~\eqref{eq:block-scales} gives
\[
 \frac{T_jM_je^{-40h}}{a_j}
 \le\frac{jT_j^{3/2}}{CT_1\sqrt h}\,e^{-40h}
 =O_k\!\left(\varepsilon^{-(3\beta_j/2-\alpha_k)}e^{-40h}\right)
 =O_k(\varepsilon^{-2}e^{-4h})
 =O_k(\varepsilon^2/n^4)\to0.
\]
Here $\beta_j\le2$, $\alpha_k\ge1$, and $e^{-4h}=(\varepsilon/(Kn))^4$.
Thus $\E{F_j}\le a_j$ for all sufficiently large $n$.
\end{proof}

\begin{proof}[Proof of Lemma~\ref{lem:population-general}]
Taking expectations in~\eqref{eq:residual-general}, with $\E{N^{(v)}(T_j)}=\rho T_j$, and applying Lemma~\ref{lem:success-general} gives
\[
 \E{D_j^{(v)}}=a_j+\E{F_j}\le2a_j.
\]
This proves (i). Summing over the input ports gives (ii).
\end{proof}

\subsubsection{Proofs for level-\texorpdfstring{$k$}{k} jobs}\label{sec:support-terminal-general}
\begin{proof}[Proof of Lemma~\ref{lem:matching-general}]
For one level-$k$ arrival block, let $D_k^{(v)}$ count its jobs at port $v$ that reach level $k$, and set
\[
 F_k=\sum_{c=1}^{L_k}F_{k-1,c},\qquad
 Z_k=\max_{v\in\mathcal I\cup\mathcal O}N^{(v)}(T_k)-\rho T_k.
\]
Summing the port-count identities of the $L_k$ constituent level-$(k-1)$ blocks gives the first identity below. By K\H{o}nig's edge-coloring theorem, $A_k=\max_vD_k^{(v)}$, which gives the second:
\begin{equation}\label{eq:terminal-residual-general}
 \begin{aligned}
 D_k^{(v)}&=N^{(v)}(T_k)-\rho T_k+L_ka_{k-1}+F_k,\\
 A_k&=L_ka_{k-1}+Z_k+F_k.
 \end{aligned}
\end{equation}
Since these batches reserve $\rho T_k-L_ka_{k-1}\le T_k$ service slots in total, $0\le F_k\le T_k$. Lemma~\ref{lem:success-one} for $k=2$, and Lemma~\ref{lem:success-general} for $k\ge3$, give
\[
 \E{F_k}\le L_kT_{k-1}M_{k-1}e^{-40h}
 =T_kM_{k-1}e^{-40h}.
\]
By~\eqref{eq:failure-counts-general} and~\eqref{eq:block-scales},
\[
 M_{k-1}\le(k-1)\frac{T_{k-1}}{T_1}\le(k-1)\frac{T_k}{T_1}
 =O_k\!\left(\varepsilon^{-(2-\alpha_k)}\right)
 =O_k(\varepsilon^{-1}).
\]
Since $T_k=O_k(h\varepsilon^{-2})$, it follows that
\begin{equation}\label{eq:failure-moments-general}
 \E{F_k}\le T_kM_{k-1}e^{-40h}
 =O_k(h\varepsilon^{-3}e^{-6h})
 =O_k(h\varepsilon^3/n^6)\le h.
\end{equation}
We used $e^{-40h}\le e^{-6h}=(\varepsilon/(Kn))^6$; the last inequality holds for sufficiently large $n$ because $k$ is fixed.

Lemma~\ref{lem:matching-moments} applies to~\eqref{eq:terminal-residual-general} with $T=T_k$, $D=(2K)^k$, $c=L_ka_{k-1}$, and $F=F_k$. Indeed,~\eqref{eq:block-scales} gives~\eqref{eq:terminal-length}, and the preceding bounds give $0\le F_k\le T_k$ and $\E{F_k}\le h$. The constant $D$ is fixed because $k$ is fixed. No independence between $Z_k$ and $F_k$ is needed.

\par\addvspace{\medskipamount}
\noindent (i)--(ii)
The corresponding parts of Lemma~\ref{lem:matching-moments} give the mean and variance bounds.

\par\addvspace{\medskipamount}
\noindent (iii)
Since $S_k=L_ka_{k-1}+\varepsilon T_k$, part (i) gives
\[
 \delta_k=S_k-\E{A_k}
 =\varepsilon T_k-\bigl(\E{A_k}-L_ka_{k-1}\bigr)
 \in[\varepsilon T_k/2,\varepsilon T_k].\qedhere
\]
\end{proof}

\begin{proof}[Proof of Lemma~\ref{lem:queue-general}]
The slot allocation is the same in every block, with $0<u_k<T_{k-1}$. The insertion times are $(b+1)T_k+u_k$, and exactly $S_k$ level-$k$ service slots occur between consecutive insertions, with no other additions. This also covers $k=2$, when the only preceding service is level 1. Since the switch starts empty, $Q_k(0)=0$. Serving one matching in each level-$k$ service slot whenever the queue is nonempty gives~\eqref{eq:matching-recursion-general}.
Each $A_k(b)$ depends only on its original block's arrivals and initial levels through the policy's fixed rules. Thus the matching counts are iid. Also,
\[
 0\le A_k(b)\le\max_vN^{(v)}(T_k)\le nT_k.
\]
Set $\delta_k=S_k-\E{A_k}>0$, where positivity follows from Lemma~\ref{lem:matching-general}(iii). Applying Lemma~\ref{lem:scalar-queue} with $\lambda=\E{A_k}$, $v=\Var{A_k}$, and $B=S_k$, and then the estimates in Lemma~\ref{lem:matching-general}(ii)--(iii), gives
\[
 q_k\le\frac{\Var{A_k}}{2\delta_k}+\frac{\delta_k}{2}
 \le\frac{34h}{\varepsilon}+\frac{\varepsilon T_k}{2}
 \le\left(34+\frac{(2K)^k}{2}\right)\frac h\varepsilon.
\]
The last inequality uses $T_k\le(2K)^kh\varepsilon^{-2}$.
\end{proof}

\section{Table of notation} \label{apx:table}
\begin{table}[!t]
    \centering
    \caption{Notation for the three-level analysis.}
    \label{tab:three-level-notation}
    \small
    \renewcommand{\arraystretch}{1.08}
    \begin{tabular}{@{}p{0.20\textwidth}p{0.77\textwidth}@{}}
        \toprule
        \textbf{Notation} & \textbf{Meaning} \\
        \midrule
        \multicolumn{2}{@{}l}{\textbf{Common notation}} \\
        \addlinespace[2pt]
        $T_j$
        & Number of slots in a level-$j$ block, i.e., $L_1\cdots L_j$. \\
        $L_j$
        & Number of level-$(j-1)$ blocks in a level-$j$ block for $j=2,3$; $L_1$ is the number of slots in a level-1 block. \\
        $S_j$
        & Number of level-$j$ service slots in a level-$j$ block. \\
        $h$
        & A logarithmic factor $\log(Kn/\varepsilon)$. \\
        $\theta$
        & Probability that an arriving job starts at level 2, i.e., $a_1/(2\rho T_1)$. \\
        $X_j(t)$
        & Number of unserved level-$j$ jobs at the beginning of slot $t$. \\
        $N^{(v)}(T)$ 
        & Number of arrivals at port $v$ during a block of $T$ slots. \\
        \midrule
        \multicolumn{2}{@{}l}{\textbf{Level 1} (Section~\ref{sec:level-one})} \\
        \addlinespace[2pt]
        $a_1$
        & Amount by which $S_1$ is below $\rho T_1$, i.e., $\rho T_1-S_1$. \\
        $G_1$
        & Graph of jobs that arrive in a level-1 block with initial level 1. \\
        $\mathcal F^{(1)}$
        & Event that $G_1$ has no $S_1$-factor. \\
        $F_1$
        & Number of level-1 service slots left idle because the factor test for $G_1$ fails. \\
        $R_1$
        & Graph of jobs from a level-1 arrival block that start at level 2 or are promoted to level 2. \\
        $J_1$
        & Number of jobs in the graph $R_1$, i.e., $|E(R_1)|$. \\
        $D_1^{(v)}$
        & Number of jobs at port $v$ in the graph $R_1$, i.e., $\deg_{R_1}(v)$. \\
        \midrule
        \multicolumn{2}{@{}l}{\textbf{Level 2} (Section~\ref{sec:level-two})} \\
        \addlinespace[2pt]
        $a_2$
        & Amount by which $S_2$ is below $L_2a_1$, i.e., $L_2a_1-S_2$. \\
        $G_2$
        & Graph of level-2 jobs from a level-2 arrival block, collected for service in the next level-2 block. \\
        $\mathcal F^{(2)}$
        & Event that $G_2$ has no $S_2$-factor. \\
        $F_2$
        & Number of level-1 and level-2 service slots left idle by failed factor tests for jobs from a level-2 arrival block. \\
        $R_2$
        & Graph of jobs from a level-2 arrival block that are promoted to level 3. \\
        $J_2$
        & Number of jobs in the graph $R_2$, i.e., $|E(R_2)|$. \\
        $D_2^{(v)}$
        & Number of jobs at port $v$ in the graph $R_2$, i.e., $\deg_{R_2}(v)$. \\
        \midrule
        \multicolumn{2}{@{}l}{\textbf{Level 3} (Sections~\ref{sec:matching-queue} and~\ref{sec:support-three})} \\
        \addlinespace[2pt]
        $A_3(b)$
        & Number of matchings formed from jobs of level-3 arrival block $b$ that reach level 3. \\
        $Q_3(b)$
        & Number of matchings in the queue just before $A_3(b)$ matchings are added. \\
        $q_3$
        & $\sup_{b\ge0}\E{Q_3(b)}$. \\
        $\delta_3$
        & $S_3-\E{A_3(b)}$. \\
        $u$
        & Time offset of the first level-3 service slot within a level-2 block. \\
        $D_3^{(v)}$
        & Number of jobs at port $v$ from a level-3 arrival block that reach level 3. \\
        $Z_3$
        & Maximum number of arrivals at a port during a level-3 block, minus $\rho T_3$. \\
        $F_3$
        & Number of level-1 and level-2 service slots left idle by failed factor tests for jobs from a level-3 arrival block. \\
        \bottomrule
    \end{tabular}
\end{table}
\end{document}